\documentclass[a4paper,UKenglish,cleveref, autoref, thm-restate]{lipics-v2021}

\pdfoutput=1 
\hideLIPIcs  
\nolinenumbers

\bibliographystyle{plainurl}




\usepackage{amsmath,amssymb,amsfonts}
\usepackage{algorithm}

\usepackage{graphicx}
\usepackage{textcomp}
\usepackage{xcolor}
\usepackage{pgf, tikz}
\usetikzlibrary{arrows, automata, petri}
\usepackage{longtable}
\usepackage{todonotes}
\usepackage{subcaption}
\usepackage{float}


\usepackage{amsthm}
\usepackage[noend]{algpseudocode}


\newcommand{\mybraces}[1]{\left(#1\right)}
\newcommand{\myceil}[1]{\left\lceil #1\right\rceil}

\newcommand{\myMax}[1]{\max\{#1\}}
\usepackage{xpatch}
\usepackage{thmtools}
\usepackage{apptools}

\makeatletter
\xpatchcmd{\thmt@restatable}
{\csname #2\@xa\endcsname\ifx\@nx#1\@nx\else[{#1}]\fi}
{\IfAppendix{\csname #2\@xa\endcsname}{\csname #2\@xa\endcsname\ifx\@nx#1\@nx\else[{#1}]\fi}}
{}{} 
\makeatother

\newcommand{\io}{I/O}

\newcommand{\myIe}{\emph{I.e.,}}

\newcommand{\alg}{\mathcal{A}}

\newcommand{\dom}{D}




\date{}
\title{On the I/O Complexity of the CYK Algorithm and of a Family of Related DP Algorithms} 


\author{Lorenzo De Stefani}{Brown University, Department of Computer Science, USA}{lorenzo\_destefani@brown.edu}{}{}
\authorrunning{Lorenzo De Stefani and Vedant Gupta}

\author{Vedant Gupta}{Brown University, Department of Computer Science, USA}{joanrpublic@dummycollege.org}{}{}

\Copyright{Jane Open Access and Joan R. Public} 

\ccsdesc[500]{Theory of computation~Design and analysis of algorithms}

\keywords{I/O complexity, Dynamic Programming Algorithms, Lower Bounds, Recomputation, Cocke-Younger-Kasami} 

\begin{document}

\maketitle

\begin{abstract}
Asymptotically tight lower bounds are derived for the Input/Output (\io{}) complexity of a class of dynamic programming algorithms including matrix chain multiplication, optimal polygon triangulation, and the construction of optimal binary search trees. Assuming no recomputation of intermediate values, we establish an $\Omega\mybraces{\frac{n^3}{\sqrt{M}B}}$ \io{} lower bound, where $n$ denotes the size of the input and $M$ denotes the size of the available fast memory (cache). When recomputation is allowed, we show the same bound holds for $M < cn$, where $c$ is a positive constant. In the case where $M \ge 2n$, we show an $\Omega\mybraces{n/B}$ \io{} lower bound. We also discuss algorithms for which the number of executed I/O operations matches asymptotically each of the presented lower bounds, which are thus asymptotically tight.

Additionally, we refine our general method to obtain a lower bound for the \io{} complexity of the Cocke-Younger-Kasami algorithm, where the size of the grammar impacts the \io{} complexity. An upper bound with asymptotically matching performance in many cases is also provided.
\end{abstract}

\section{Introduction}\label{sec:introduction}
The performance of computing systems is significantly influenced by data movement, affecting both time and energy consumption. This ongoing technological trend~\cite{patterson2005getting} is expected to continue as fundamental limits on minimum device size and maximum message speed lead to inherent costs associated with data transfer, be it across the levels of a hierarchical memory system or between processing elements in a parallel system~\cite{bilardi1995horizons}. While the communication requirements of algorithms have been extensively studied, deriving significant lower bounds based on these requirements and matching upper bounds remains a critical challenge.

Dynamic Programming (DP) is a classic algorithmic framework introduced by Bellman~\cite{bellman1954theory} that is particularly effective in optimization tasks where problems can be broken down into overlapping subproblems with optimal substructure and storing the solutions to these subproblems to prevent redundant work. DP is widely used in computer science and other important fields such as control theory, operations research, and computational biology.

In this work, we analyze the memory access cost (\textit{\io{} complexity}) of a family of DP  algorithms when executed in a two-level storage hierarchy with $M$ words of fast memory (cache). The analyzed algorithms share a similar structure of dependence on the results of subproblems and include, among others, the classic DP algorithms for Matrix Chain Multiplication, Boolean Parethesization, Optimal Polygon Triangulation, and Construction of Optimal Binary Search Trees.

We obtain asymptotically tight lower bounds for the \io{} complexity of these algorithms both for schedules in which no intermediate values are computed more than once and for general schedules allowing recomputation.  Our analysis reveals that for a range of values of the size of the cache, for the algorithms under consideration, recomputing intermediate values can lead to an asymptotically lower number of \io{} operations. 

 We further refine our analysis to derive an \io{} lower bound for the  Cocke-Younger-Kasami (CYK) algorithm \cite{cocke1969programming, younger1967recognition, kasami1966efficient} and an (almost) asymptotically matching upper bound.



\noindent\textbf{Previous and related work:} 
\io{} complexity was introduced in the seminal work by Hong and
Kung~\cite{jia1981complexity}; it denotes the number of data
transfers between the two levels of a memory hierarchy with a cache of $M$ words and a slow memory of unbounded size. Hong and Kung presented techniques to develop lower bounds for the \io{} complexity of computations modeled by \emph{Computational Directed Acyclic Graphs} (CDAGs).  

The ``\emph{edge expansion technique}'' of~\cite{ballard2012graph}, the ``\emph{path
routing technique}'' of~\cite{scott2015matrix}, and the ``\emph{closed dichotomy
width}'' technique of~\cite{bilardi1999processor} all yield \io{} lower
bounds that apply only to computational schedules for which no
intermediate result is ever computed more than once
(\emph{nr-computations}) it is also
important to investigate what can be achieved with recomputation. In
fact, for some CDAGs, recomputing intermediate values reduces the
space and/or the \io{} complexity of an
algorithm~\cite{savage1995extending}.
In~\cite{bilardi2001characterization}, it is shown that some
algorithms admit a \emph{portable schedule} (i.e., a schedule which
achieves optimal performance across memory hierarchies with different
access costs) only if recomputation is allowed. Recomputation can also enhance the performance of simulations among
networks (see~\cite{KochLMRRS97} and references therein) and
plays a key role in the design of efficient area-universal VLSI architectures.

A number of lower-bound techniques
that allow for recomputation have been presented in the literature,
including the ``\emph{S-partition}
technique''~\cite{jia1981complexity}, the ``\emph{S-span}
technique''~\cite{savage1995extending}, 
technique'' , the ``\emph{S-covering}
technique''~\cite{bilardi2000space} (which merges and extends aspects
from both~\cite{jia1981complexity} and~\cite{savage1995extending}), the ``\emph{G-flow} technique''~\cite{bilardi2017complexity,bdstksoda19}, and the ``\emph{visit partition technique}~\cite{bilardi_et_al:LIPIcs.FSTTCS.2022.7}. However, to the best of our knowledge, none of these have been previously applied to the family of DP algorithms considered in this work.

A systematic study of \textit{Dynamic Programming} (DP) was introduced by Bellman \cite{bellman1954theory}, and has since become a popular problem-solving methodology. Galil and Park \cite{galil1992dynamic} produce a survey and classifications of popular dynamic programming algorithms and optimization methods. 
Several results have been presented in the literature aimed at optimizing the running time of Dynamic Programming algorithms \cite{hu1981computation, knuthbst, yao1980efficient, galil1992dynamic}. Notably, Valiant \cite{valiant1974general} showed that context-free recognition can be carried out as fast as boolean matrix multiplication.

Cherng and Lander \cite{cherng2005cache} provide a cache-oblivious divide-and-conquer algorithm and a cache-aware blocked algorithm for \textit{simple dynamic programs}, which includes matrix chain multiplication and the construction of optimal binary search trees. Their approach is based on Valiant's algorithm \cite{valiant1974general} and incurs $\mathcal{O}(\frac{n^3}{B\sqrt{M}})$ cache misses.  In \cite{chowdhury2006cache, chowdhury2007cache}, Choudhury and Ramachandran introduce the \textit{Gaussian Elimination Paradigm (GEP)} that provides cache-oblivious algorithms for problems including All-Pairs Shortest Path. The authors also provide a way to transform simple DP problems into GEP problems. A proof of the optimality of the proposed algorithm for GEP problems is provided based on a similar result given in~\cite{jia1981complexity} for the matrix multiplication problem. As we show, this result does not extend to simple DP, as fewer cache misses can be achieved in some scenarios. 

Cache-oblivious algorithms for other DP problems have been explored \cite{park2004optimizing, chowdhury2008cache, demaine2017fine}. There has also been work on parallelizing cache-efficient DP algorithms \cite{chowdhury2008cachemulti, chowdhury2017provably, javanmard2019toward, tang2015cache, tang2011easypdp, farivar2012algorithm, ding2024parallel, blleloch2018improved}, experimental evaluations of the resulting performance improvement \cite{ramachandran2007cache}, and also the automated discovery of cache-efficient DP algorithms \cite{chowdhury2017autogen, itzhaky2016deriving}.
 
Dunlop et. al \cite{dunlop2010reducing} provides an empirical analysis of various strategies to reduce the number of accesses to the grammar for the \textit{CYK} algorithm. There have also been other works on modifying a grammar representation to improve efficiency. Notably, Song et. al \cite{song2008better} provide an empirical analysis of different grammar binarization approaches to improve efficiency, while Lange et. al \cite{lange2009cnf} argue for the merits of a modified version of CYK to work with grammars in \textit{binary normal form} instead of \textit{Chomsky normal form}.\\

\noindent\textbf{Summary of results:} 
In this work, we analyze the memory access cost (\io{} complexity) of a family of DP  algorithms when executed in a two-level storage hierarchy with $M$ words of cache. The analyzed algorithms share a similar structure of dependence on the results of subproblems while allowing for 
diverse implementation choices and optimal problem substructure. 

We analyze both schedules in which no intermediate values are computed more than once (no-recomputation schedules) and general ones that allow for recomputation. 
For both settings, we derive an $\Omega\mybraces{\frac{n^3}{\sqrt{M}B}}$  lower bound to the number of \io{} operations required by sequential execution of these algorithms on an input of size $n$ in a two-level storage hierarchy with a cache of size $M\leq cn$, for some positive constant value $c$ and such that $B\geq 1$ memory words can be moved between consecutive memory locations of the cache and of the slow memory with a single \io{} operation. However, we show that while no-recomputation schedules require $\Omega\mybraces{\frac{n^3}{\sqrt{M}B}}$ 
\io{} operations even when using a cache of size at least $2n$ and $o\mybraces{n^2}$, schedules using recomputation achieve an asymptotic reduction of the number of required \io{} operations by a factor $\Theta\left(n^2/\sqrt{M}\right)$. This is particularly significant as in many cases of interest (e.g. Matrix Multiplication~\cite{jia1981complexity,ballard2010communication,bilardi2017complexity}, Fast Fourier Transform~\cite{jia1981complexity}, Integer Multiplication~\cite{bdstksoda19}) recomputation has shown to enable a reduction of the \io{} cost by at most a constant multiplicative factor.

 These results are obtained by analyzing a representation of the considered algorithms as Computational Directed Acyclic Graphs (CDAGs) whose structure captures the structure of dependence between subproblems which is common to the DP algorithms considered in our work. We provide a general construction of such CDAGs and analyze their internal connection properties.

We refine our general method to obtain an $\Omega\mybraces{\frac{n^3\Gamma}{B\sqrt{M}}}$ lower bound for the \io{} complexity of the Cocke-Younger-Kasami algorithm when deciding whether an input string of length  $n$ is a member of the language generated by a given Context-Free Grammar with $\Gamma$ rules with distinct right-hand-sides not including terminals.  While CYK exhibits a structure of subproblem dependencies similar to that of previously analyzed problems, it presents several challenging differences which we address by modifying our lower-bound methods by further refining its CDAG representation. 
Finally, we present a cache-oblivious implementation of the Cocke-Younger-Kasami algorithm for which, depending on the composition of the rules of the considered CFG, the number of \io{} operations to be executed \emph{almost} asymptotically matches the lower bound.

\section{Preliminaries}\label{sec:preliminaries}
We consider a family of Dynamic Programming (DP) algorithms following the general strategy outlined in \emph{Prototype Algorithm}~\ref{alg:proto}: Given an input of size $n$, the solution, denoted as $S\mybraces{1,n}$ is computed bottom up. First, the values corresponding to the solution of subproblems of input size one, $S\mybraces{i,i}$ for $i\in\{1,2,\ldots,n\}$, are initialized to some set default value. The results of subproblems corresponding to parts of the input of growing size $\ell$, that is $S\mybraces{i,i+\ell}$ for $i\in\{1,2,\ldots,n\}$ and $\ell\in\{1,2,\ldots,n-i\}$, are computed by first evaluating the $\ell$ compositions of pairs of subproblems $S\mybraces{i,i+k}$ and $S\mybraces{i+k+1, i+\ell}$, for $k\in\{0,1,\ldots \ell-1\}$ and then composing them according to the optimal subproblem structure used in the algorithm. 


\begin{algorithm}[t]
\caption{Prototype DP algorithm $\mathcal{A}^*$}\label{alg:proto}
\footnotesize
\begin{algorithmic}[1]
\State \textbf{Input} $\{x_0, x_1, ..., x_n\}$
\State \textbf{Output} $S\mybraces{1,n}$
\For{$i=1$ \textbf{to} $n$} \Comment{Initialization of subproblems $S\mybraces{i,i}$}
\State $S(i,i)\leftarrow\textsc{initialization\_value}$
\EndFor

\For{$l=2$ \textbf{to} $n$}
    \For{$i=1$  \textbf{to} $n-l+1$}
        \State $j\leftarrow i + l - 1$
        \State $S\mybraces{i,j}\leftarrow \textsc{least\_optimal\_value}$\Comment{Initialization of accumulator subproblem $S\mybraces{i,j}$} 
        \For{$k=i$  \textbf{to} $j - 1$}
            \State $q\leftarrow \textsc{COMBINE}\mybraces{S\mybraces{i, k},S\mybraces{k+1,j}}$
                \State $S\mybraces{i,j} \leftarrow \textsc{AGGREGATE}\mybraces{S\mybraces{i,j},q}$
        \EndFor
    \EndFor
\EndFor
\State \textbf{return} $S\mybraces{1,n}$
\end{algorithmic}
\end{algorithm}

Depending on the specific problems, the default initialization values (\texttt{INITIALIZATION\_VALUE} in the Prototype  Algorithm) of $S\mybraces{i,i}$, the way that the pairs $S\mybraces{i,i+k}$ and $S\mybraces{i+k+1, i+\ell}$ are combined (\texttt{COMBINE}), and the way these results are composed to evaluate $S\mybraces{i,i+\ell}$ (\texttt{AGGREGATE}) may differ. However, \texttt{AGGREGATE} is assumed to be commutative and associative. Besides these differences,, all the considered subproblems share the subproblem dependence structure outlined in the Prototype Algorithm for which a visual representation is provided in Figure~\ref{fig:simpledag}. Examples of algorithms following such structure include the classic DP algorithm for the Matrix Chain Multiplication problem (Algorithm~\ref{alg:MCM}), the Optimal Convex Polygon Triangulation (Algorithm~\ref{alg:OPT}), the construction of Optimal Binary Search Trees (Algorithm~\ref{alg:BST}), and the Cocke-Younger-Kasami algorithm (Algorithm~\ref{alg:CYK}). In our main contribution, we provide a general analysis of the \io{} complexity of algorithms following the structure of the Prototype Algorithm $\mathcal{A}^*$, and we then refine our analysis for the mentioned algorithms of interest.\\

\noindent\textbf{Computational Directed Acyclic Graphs:}
Our analysis is based on modeling the execution of the DP algorithms of interest 
as a \emph{Computational Directed Acyclic Graph} (CDAG) $G=\left(V,E\right)$.
Each vertex $v\in V$ represents either an input value or the
result of an operation (i.e., an intermediate result or one
of the output values) which is stored using a single memory word. 
The \emph{directed} edges in $E$ represent data dependencies. That is, a pair of vertices $u,v\in V$ are connected by an edge $(u,v)$ directed from $u$ to $v$ if and only if the value corresponding to $u$ is an operand of the unit time operation which computes the value corresponding to $v$. $u$ is said to be a \emph{predecessor} of $v$ and $v$ is said to be a \emph{successor} of $u$. For a given vertex $v$, the \emph{set of its predecessors} (resp., \emph{successors}) is defined as $pre(v)=\{u\in V\ s.t. (u,v)\in E\}$ (resp., $suc(v)=\{u\in V\ s.t. (v,u)\in E\}$. We refer to vertices with no predecessors (resp., successors) as the \emph{input} (resp., \emph{output}) vertices of a CDAG.


We say that $G'=\left(V',E'\right)$ is a \emph{sub-CDAG} of $G=\left(V,E\right)$ if
$V'\subseteq V$ and $E' \subseteq E \cap (V'\times V')$. We say that two sub-CDAGs  $G'=\left(V',E'\right)$ and $G''=\left(V'',E''\right)$ of $G$ are \emph{vertex-disjoint} if $V'\cap V''=\emptyset$.\\



\noindent\textbf{\io{} model:}\label{sec:iomodel}
We assume that sequential computations are executed on a system with a two-level memory hierarchy, consisting of a cache of size $M$ (measured in memory words) and a \emph{slow memory} of unlimited size. An operation can be executed only if all its operands are in the cache. 
We assume that the processor is equipped with standard elementary logic and algebraic operations. Data can be moved from the slow memory to the cache by \texttt{read} operations, and, in the other direction, by \texttt{write} operations. These operations are called \emph{\io{} operations}. We assume that the input data is stored in the slow memory at the beginning of the computation. The evaluation of a CDAG in this model can be analyzed using the ``\emph{red-blue pebble game}''~\cite{jia1981complexity}. The number of \io{} operations executed when evaluating a CDAG depends on the ``\emph{computational schedule}'', that is, on the order in which vertices are evaluated and on which values are kept in/discarded from the cache. 

The \emph{\io{} complexity} of a CDAG $G$ corresponding to the execution of algorithm $\mathcal{A}$ for an input of size $n$, denoted as $IO_{\mathcal{A}}(M)$, is defined as the minimum number of \io{} operations over all possible computational schedules.
We further consider a generalization of this model known as the ``\emph{External Memory Model}'' by Aggarwal and Vitter~\cite{Aggarwal:1988:ICS:48529.48535}
where $B\geq 1$ memory words can be moved between consecutive memory locations of the cache and of the
slow memory with a single \io{} operation. 


\section{CDAG constuction}\label{sec:DAG}

\begin{figure}
    \centering
    \begin{subfigure}[t]{0.46\textwidth}
        \centering
        \resizebox{\textwidth}{!}{%
            \tikzset{every picture/.style={line width=0.75pt}} 

\begin{tikzpicture}[x=0.75pt,y=0.75pt,yscale=-1,xscale=1]

\draw  [fill={rgb, 255:red, 0; green, 0; blue, 0 }  ,fill opacity=1 ] (77,158) .. controls (77,152.48) and (81.48,148) .. (87,148) .. controls (92.52,148) and (97,152.48) .. (97,158) .. controls (97,163.52) and (92.52,168) .. (87,168) .. controls (81.48,168) and (77,163.52) .. (77,158) -- cycle ;
\draw  [fill={rgb, 255:red, 0; green, 0; blue, 0 }  ,fill opacity=1 ] (137,98) .. controls (137,92.48) and (141.48,88) .. (147,88) .. controls (152.52,88) and (157,92.48) .. (157,98) .. controls (157,103.52) and (152.52,108) .. (147,108) .. controls (141.48,108) and (137,103.52) .. (137,98) -- cycle ;
\draw  [fill={rgb, 255:red, 0; green, 0; blue, 0 }  ,fill opacity=1 ] (197,158) .. controls (197,152.48) and (201.48,148) .. (207,148) .. controls (212.52,148) and (217,152.48) .. (217,158) .. controls (217,163.52) and (212.52,168) .. (207,168) .. controls (201.48,168) and (197,163.52) .. (197,158) -- cycle ;
\draw  [fill={rgb, 255:red, 0; green, 0; blue, 0 }  ,fill opacity=1 ] (317,158) .. controls (317,152.48) and (321.48,148) .. (327,148) .. controls (332.52,148) and (337,152.48) .. (337,158) .. controls (337,163.52) and (332.52,168) .. (327,168) .. controls (321.48,168) and (317,163.52) .. (317,158) -- cycle ;
\draw  [fill={rgb, 255:red, 0; green, 0; blue, 0 }  ,fill opacity=1 ] (257,98) .. controls (257,92.48) and (261.48,88) .. (267,88) .. controls (272.52,88) and (277,92.48) .. (277,98) .. controls (277,103.52) and (272.52,108) .. (267,108) .. controls (261.48,108) and (257,103.52) .. (257,98) -- cycle ;
\draw  [fill={rgb, 255:red, 0; green, 0; blue, 0 }  ,fill opacity=1 ] (197,38) .. controls (197,32.48) and (201.48,28) .. (207,28) .. controls (212.52,28) and (217,32.48) .. (217,38) .. controls (217,43.52) and (212.52,48) .. (207,48) .. controls (201.48,48) and (197,43.52) .. (197,38) -- cycle ;
\draw  [fill={rgb, 255:red, 0; green, 0; blue, 0 }  ,fill opacity=1 ] (17,218) .. controls (17,212.48) and (21.48,208) .. (27,208) .. controls (32.52,208) and (37,212.48) .. (37,218) .. controls (37,223.52) and (32.52,228) .. (27,228) .. controls (21.48,228) and (17,223.52) .. (17,218) -- cycle ;
\draw  [fill={rgb, 255:red, 0; green, 0; blue, 0 }  ,fill opacity=1 ] (257,218) .. controls (257,212.48) and (261.48,208) .. (267,208) .. controls (272.52,208) and (277,212.48) .. (277,218) .. controls (277,223.52) and (272.52,228) .. (267,228) .. controls (261.48,228) and (257,223.52) .. (257,218) -- cycle ;
\draw  [fill={rgb, 255:red, 0; green, 0; blue, 0 }  ,fill opacity=1 ] (137,218) .. controls (137,212.48) and (141.48,208) .. (147,208) .. controls (152.52,208) and (157,212.48) .. (157,218) .. controls (157,223.52) and (152.52,228) .. (147,228) .. controls (141.48,228) and (137,223.52) .. (137,218) -- cycle ;
\draw  [fill={rgb, 255:red, 0; green, 0; blue, 0 }  ,fill opacity=1 ] (377,218) .. controls (377,212.48) and (381.48,208) .. (387,208) .. controls (392.52,208) and (397,212.48) .. (397,218) .. controls (397,223.52) and (392.52,228) .. (387,228) .. controls (381.48,228) and (377,223.52) .. (377,218) -- cycle ;
\draw    (27,218) -- (78.88,166.12) ;
\draw [shift={(81,164)}, rotate = 135] [fill={rgb, 255:red, 0; green, 0; blue, 0 }  ][line width=0.08]  [draw opacity=0] (8.93,-4.29) -- (0,0) -- (8.93,4.29) -- cycle    ;
\draw    (147,98) -- (198.88,46.12) ;
\draw [shift={(201,44)}, rotate = 135] [fill={rgb, 255:red, 0; green, 0; blue, 0 }  ][line width=0.08]  [draw opacity=0] (8.93,-4.29) -- (0,0) -- (8.93,4.29) -- cycle    ;
\draw    (91,154) -- (138.88,106.12) ;
\draw [shift={(141,104)}, rotate = 135] [fill={rgb, 255:red, 0; green, 0; blue, 0 }  ][line width=0.08]  [draw opacity=0] (8.93,-4.29) -- (0,0) -- (8.93,4.29) -- cycle    ;
\draw    (207,158) -- (258.88,106.12) ;
\draw [shift={(261,104)}, rotate = 135] [fill={rgb, 255:red, 0; green, 0; blue, 0 }  ][line width=0.08]  [draw opacity=0] (8.93,-4.29) -- (0,0) -- (8.93,4.29) -- cycle    ;
\draw    (147,218) -- (198.88,166.12) ;
\draw [shift={(201,164)}, rotate = 135] [fill={rgb, 255:red, 0; green, 0; blue, 0 }  ][line width=0.08]  [draw opacity=0] (8.93,-4.29) -- (0,0) -- (8.93,4.29) -- cycle    ;
\draw    (147,218) -- (95.12,166.12) ;
\draw [shift={(93,164)}, rotate = 45] [fill={rgb, 255:red, 0; green, 0; blue, 0 }  ][line width=0.08]  [draw opacity=0] (8.93,-4.29) -- (0,0) -- (8.93,4.29) -- cycle    ;
\draw    (267,218) -- (318.88,166.12) ;
\draw [shift={(321,164)}, rotate = 135] [fill={rgb, 255:red, 0; green, 0; blue, 0 }  ][line width=0.08]  [draw opacity=0] (8.93,-4.29) -- (0,0) -- (8.93,4.29) -- cycle    ;
\draw    (207,158) -- (155.12,106.12) ;
\draw [shift={(153,104)}, rotate = 45] [fill={rgb, 255:red, 0; green, 0; blue, 0 }  ][line width=0.08]  [draw opacity=0] (8.93,-4.29) -- (0,0) -- (8.93,4.29) -- cycle    ;
\draw    (267,218) -- (215.12,166.12) ;
\draw [shift={(213,164)}, rotate = 45] [fill={rgb, 255:red, 0; green, 0; blue, 0 }  ][line width=0.08]  [draw opacity=0] (8.93,-4.29) -- (0,0) -- (8.93,4.29) -- cycle    ;
\draw    (327,158) -- (275.12,106.12) ;
\draw [shift={(273,104)}, rotate = 45] [fill={rgb, 255:red, 0; green, 0; blue, 0 }  ][line width=0.08]  [draw opacity=0] (8.93,-4.29) -- (0,0) -- (8.93,4.29) -- cycle    ;
\draw    (267,98) -- (215.12,46.12) ;
\draw [shift={(213,44)}, rotate = 45] [fill={rgb, 255:red, 0; green, 0; blue, 0 }  ][line width=0.08]  [draw opacity=0] (8.93,-4.29) -- (0,0) -- (8.93,4.29) -- cycle    ;
\draw    (387,218) -- (335.12,166.12) ;
\draw [shift={(333,164)}, rotate = 45] [fill={rgb, 255:red, 0; green, 0; blue, 0 }  ][line width=0.08]  [draw opacity=0] (8.93,-4.29) -- (0,0) -- (8.93,4.29) -- cycle    ;
\draw    (27,218) .. controls (52.61,165.3) and (91.8,122.79) .. (138.85,104.8) ;
\draw [shift={(141,104)}, rotate = 159.97] [fill={rgb, 255:red, 0; green, 0; blue, 0 }  ][line width=0.08]  [draw opacity=0] (8.93,-4.29) -- (0,0) -- (8.93,4.29) -- cycle    ;
\draw    (387,218) .. controls (354.5,157.42) and (332.66,131.29) .. (275.63,105.19) ;
\draw [shift={(273,104)}, rotate = 24.19] [fill={rgb, 255:red, 0; green, 0; blue, 0 }  ][line width=0.08]  [draw opacity=0] (8.93,-4.29) -- (0,0) -- (8.93,4.29) -- cycle    ;
\draw    (327,158) .. controls (294.5,97.42) and (272.66,71.29) .. (215.63,45.19) ;
\draw [shift={(213,44)}, rotate = 24.19] [fill={rgb, 255:red, 0; green, 0; blue, 0 }  ][line width=0.08]  [draw opacity=0] (8.93,-4.29) -- (0,0) -- (8.93,4.29) -- cycle    ;
\draw    (267,218) .. controls (234.49,157.42) and (212.66,131.29) .. (155.63,105.19) ;
\draw [shift={(153,104)}, rotate = 24.19] [fill={rgb, 255:red, 0; green, 0; blue, 0 }  ][line width=0.08]  [draw opacity=0] (8.93,-4.29) -- (0,0) -- (8.93,4.29) -- cycle    ;
\draw    (87,158) .. controls (112.61,105.3) and (151.8,62.79) .. (198.85,44.8) ;
\draw [shift={(201,44)}, rotate = 159.97] [fill={rgb, 255:red, 0; green, 0; blue, 0 }  ][line width=0.08]  [draw opacity=0] (8.93,-4.29) -- (0,0) -- (8.93,4.29) -- cycle    ;
\draw    (147,218) .. controls (172.61,165.3) and (211.8,122.79) .. (258.85,104.8) ;
\draw [shift={(261,104)}, rotate = 159.97] [fill={rgb, 255:red, 0; green, 0; blue, 0 }  ][line width=0.08]  [draw opacity=0] (8.93,-4.29) -- (0,0) -- (8.93,4.29) -- cycle    ;
\draw    (27,218) .. controls (44.82,155.13) and (124.39,58.46) .. (198.75,44.4) ;
\draw [shift={(201,44)}, rotate = 170.54] [fill={rgb, 255:red, 0; green, 0; blue, 0 }  ][line width=0.08]  [draw opacity=0] (8.93,-4.29) -- (0,0) -- (8.93,4.29) -- cycle    ;
\draw    (387,218) .. controls (356.31,135.33) and (289.36,64.92) .. (215.25,44.6) ;
\draw [shift={(213,44)}, rotate = 14.57] [fill={rgb, 255:red, 0; green, 0; blue, 0 }  ][line width=0.08]  [draw opacity=0] (8.93,-4.29) -- (0,0) -- (8.93,4.29) -- cycle    ;

\draw (181,2) node [anchor=north west][inner sep=0.75pt]  [font=\small] [align=left] {

$S( 1,\ 4)$
};
\draw (69,62) node [anchor=north west][inner sep=0.75pt]  [font=\small] [align=left] {
$S( 1,\ 3)$
};
\draw (298,62) node [anchor=north west][inner sep=0.75pt]  [font=\small] [align=left] {
$S( 2,\ 4)$
};
\draw (10,122) node [anchor=north west][inner sep=0.75pt]  [font=\small] [align=left] {
$S( 1,\ 2)$
};
\draw (182,98) node [anchor=north west][inner sep=0.75pt]  [font=\small] [align=left] {
$S( 2,\ 3)$
};
\draw (357,122) node [anchor=north west][inner sep=0.75pt]  [font=\small] [align=left] {
$S( 3,\ 4)$
};
\draw (1,230) node [anchor=north west][inner sep=0.75pt]  [font=\small] [align=left] {
$S( 1,\ 1)$
};
\draw (241,230) node [anchor=north west][inner sep=0.75pt]  [font=\small] [align=left] {
$S( 3,\ 3)$
};
\draw (365,230) node [anchor=north west][inner sep=0.75pt]  [font=\small] [align=left] {
$S( 4,\ 4)$
};
\draw (122,230) node [anchor=north west][inner sep=0.75pt]  [font=\small] [align=left] {
$S( 2,\ 2)$
};

\end{tikzpicture}
        }
        \caption{Representation of the subproblem dependency structure for the Prototype Algorithm and other DP algorithms considered in this work for an input of size $n=4$.}
        \label{fig:simpledag}
    \end{subfigure}
    \hfill
    \begin{subfigure}[t]{0.46\textwidth}
        \centering
        \resizebox{\textwidth}{!}{%
            \tikzset{every picture/.style={line width=0.75pt}} 

\begin{tikzpicture}[x=0.75pt,y=0.75pt,yscale=-1,xscale=1]

\draw   (309.14,31) -- (340,80) -- (280,80) -- cycle ;
\draw [color={rgb, 255:red, 0; green, 0; blue, 0 }  ,draw opacity=1 ]   (130,250) .. controls (154.97,183.14) and (200.31,121.71) .. (269.88,81.22) ;
\draw [shift={(272,80)}, rotate = 150.24] [fill={rgb, 255:red, 0; green, 0; blue, 0 }  ,fill opacity=1 ][line width=0.08]  [draw opacity=0] (8.93,-4.29) -- (0,0) -- (8.93,4.29) -- cycle    ;
\draw [color={rgb, 255:red, 0; green, 0; blue, 0 }  ,draw opacity=1 ]   (370,130) .. controls (312.4,121.2) and (310.29,121.02) .. (281.78,89.94) ;
\draw [shift={(280,88)}, rotate = 47.53] [fill={rgb, 255:red, 0; green, 0; blue, 0 }  ,fill opacity=1 ][line width=0.08]  [draw opacity=0] (8.93,-4.29) -- (0,0) -- (8.93,4.29) -- cycle    ;
\draw [color={rgb, 255:red, 0; green, 0; blue, 0 }  ,draw opacity=1 ]   (490,250) .. controls (467.23,184.08) and (422.14,122.7) .. (350.19,81.25) ;
\draw [shift={(348,80)}, rotate = 29.52] [fill={rgb, 255:red, 0; green, 0; blue, 0 }  ,fill opacity=1 ][line width=0.08]  [draw opacity=0] (8.93,-4.29) -- (0,0) -- (8.93,4.29) -- cycle    ;
\draw [color={rgb, 255:red, 0; green, 0; blue, 0 }  ,draw opacity=1 ]   (249.94,130) .. controls (306.08,119.24) and (310.08,119.99) .. (338.25,89.88) ;
\draw [shift={(340,88)}, rotate = 132.92] [fill={rgb, 255:red, 0; green, 0; blue, 0 }  ,fill opacity=1 ][line width=0.08]  [draw opacity=0] (8.93,-4.29) -- (0,0) -- (8.93,4.29) -- cycle    ;
\draw [color={rgb, 255:red, 0; green, 0; blue, 0 }  ,draw opacity=1 ]   (190,190) .. controls (258.53,163.29) and (259.22,162.04) .. (308.49,90.2) ;
\draw [shift={(310,88)}, rotate = 124.45] [fill={rgb, 255:red, 0; green, 0; blue, 0 }  ,fill opacity=1 ][line width=0.08]  [draw opacity=0] (8.93,-4.29) -- (0,0) -- (8.93,4.29) -- cycle    ;
\draw [color={rgb, 255:red, 0; green, 0; blue, 0 }  ,draw opacity=1 ]   (430,190) .. controls (361.91,164.28) and (360.25,162.06) .. (311.49,90.2) ;
\draw [shift={(310,88)}, rotate = 55.84] [fill={rgb, 255:red, 0; green, 0; blue, 0 }  ,fill opacity=1 ][line width=0.08]  [draw opacity=0] (8.93,-4.29) -- (0,0) -- (8.93,4.29) -- cycle    ;
\draw  [fill={rgb, 255:red, 0; green, 0; blue, 0 }  ,fill opacity=1 ] (180,190) .. controls (180,184.48) and (184.48,180) .. (190,180) .. controls (195.52,180) and (200,184.48) .. (200,190) .. controls (200,195.52) and (195.52,200) .. (190,200) .. controls (184.48,200) and (180,195.52) .. (180,190) -- cycle ;
\draw  [fill={rgb, 255:red, 0; green, 0; blue, 0 }  ,fill opacity=1 ] (240,130) .. controls (240,124.48) and (244.48,120) .. (250,120) .. controls (255.52,120) and (260,124.48) .. (260,130) .. controls (260,135.52) and (255.52,140) .. (250,140) .. controls (244.48,140) and (240,135.52) .. (240,130) -- cycle ;
\draw  [fill={rgb, 255:red, 0; green, 0; blue, 0 }  ,fill opacity=1 ] (300,190) .. controls (300,184.48) and (304.48,180) .. (310,180) .. controls (315.52,180) and (320,184.48) .. (320,190) .. controls (320,195.52) and (315.52,200) .. (310,200) .. controls (304.48,200) and (300,195.52) .. (300,190) -- cycle ;
\draw  [fill={rgb, 255:red, 0; green, 0; blue, 0 }  ,fill opacity=1 ] (420,190) .. controls (420,184.48) and (424.48,180) .. (430,180) .. controls (435.52,180) and (440,184.48) .. (440,190) .. controls (440,195.52) and (435.52,200) .. (430,200) .. controls (424.48,200) and (420,195.52) .. (420,190) -- cycle ;
\draw  [fill={rgb, 255:red, 0; green, 0; blue, 0 }  ,fill opacity=1 ] (360,130) .. controls (360,124.48) and (364.48,120) .. (370,120) .. controls (375.52,120) and (380,124.48) .. (380,130) .. controls (380,135.52) and (375.52,140) .. (370,140) .. controls (364.48,140) and (360,135.52) .. (360,130) -- cycle ;
\draw  [fill={rgb, 255:red, 0; green, 0; blue, 0 }  ,fill opacity=1 ] (300,30) .. controls (300,24.48) and (304.48,20) .. (310,20) .. controls (315.52,20) and (320,24.48) .. (320,30) .. controls (320,35.52) and (315.52,40) .. (310,40) .. controls (304.48,40) and (300,35.52) .. (300,30) -- cycle ;
\draw  [fill={rgb, 255:red, 0; green, 0; blue, 0 }  ,fill opacity=1 ] (120,250) .. controls (120,244.48) and (124.48,240) .. (130,240) .. controls (135.52,240) and (140,244.48) .. (140,250) .. controls (140,255.52) and (135.52,260) .. (130,260) .. controls (124.48,260) and (120,255.52) .. (120,250) -- cycle ;
\draw  [fill={rgb, 255:red, 0; green, 0; blue, 0 }  ,fill opacity=1 ] (360,250) .. controls (360,244.48) and (364.48,240) .. (370,240) .. controls (375.52,240) and (380,244.48) .. (380,250) .. controls (380,255.52) and (375.52,260) .. (370,260) .. controls (364.48,260) and (360,255.52) .. (360,250) -- cycle ;
\draw  [fill={rgb, 255:red, 0; green, 0; blue, 0 }  ,fill opacity=1 ] (240,250) .. controls (240,244.48) and (244.48,240) .. (250,240) .. controls (255.52,240) and (260,244.48) .. (260,250) .. controls (260,255.52) and (255.52,260) .. (250,260) .. controls (244.48,260) and (240,255.52) .. (240,250) -- cycle ;
\draw  [fill={rgb, 255:red, 0; green, 0; blue, 0 }  ,fill opacity=1 ] (480,250) .. controls (480,244.48) and (484.48,240) .. (490,240) .. controls (495.52,240) and (500,244.48) .. (500,250) .. controls (500,255.52) and (495.52,260) .. (490,260) .. controls (484.48,260) and (480,255.52) .. (480,250) -- cycle ;
\draw  [fill={rgb, 255:red, 50; green, 211; blue, 33 }  ,fill opacity=1 ] (272,80) .. controls (272,75.58) and (275.58,72) .. (280,72) .. controls (284.42,72) and (288,75.58) .. (288,80) .. controls (288,84.42) and (284.42,88) .. (280,88) .. controls (275.58,88) and (272,84.42) .. (272,80) -- cycle ;
\draw  [fill={rgb, 255:red, 50; green, 211; blue, 33 }  ,fill opacity=1 ] (302,80) .. controls (302,75.58) and (305.58,72) .. (310,72) .. controls (314.42,72) and (318,75.58) .. (318,80) .. controls (318,84.42) and (314.42,88) .. (310,88) .. controls (305.58,88) and (302,84.42) .. (302,80) -- cycle ;
\draw  [fill={rgb, 255:red, 50; green, 211; blue, 33 }  ,fill opacity=1 ] (332,80) .. controls (332,75.58) and (335.58,72) .. (340,72) .. controls (344.42,72) and (348,75.58) .. (348,80) .. controls (348,84.42) and (344.42,88) .. (340,88) .. controls (335.58,88) and (332,84.42) .. (332,80) -- cycle ;

\draw (245,14) node [anchor=north west][inner sep=0.75pt]  [font=\small] [align=left] {
$
S( 1,\ 4)
$};
\draw (193,138) node [anchor=north west][inner sep=0.75pt]  [font=\small] [align=left] {
$
S( 1,\ 3)
$};
\draw (377,138) node [anchor=north west][inner sep=0.75pt]  [font=\small] [align=left] {
$
S( 2,\ 4)
$};
\draw (161,206) node [anchor=north west][inner sep=0.75pt]  [font=\small] [align=left] {
$
S( 1,\ 2)
$};
\draw (285,150) node [anchor=north west][inner sep=0.75pt]  [font=\small] [align=left] {
$
S( 2,\ 3)
$};
\draw (404,202) node [anchor=north west][inner sep=0.75pt]  [font=\small] [align=left] {
$
S( 3,\ 4)
$};
\draw (104,262) node [anchor=north west][inner sep=0.75pt]  [font=\small] [align=left] {
$
S( 1,\ 1)
$};
\draw (344,262) node [anchor=north west][inner sep=0.75pt]  [font=\small] [align=left] {
$
S( 3,\ 3)
$};
\draw (468,262) node [anchor=north west][inner sep=0.75pt]  [font=\small] [align=left] {
$
S( 4,\ 4)
$};
\draw (225,262) node [anchor=north west][inner sep=0.75pt]  [font=\small] [align=left] {
$
S( 2,\ 2)
$};

\end{tikzpicture} 
        }
        \caption{Computation of subproblems using \textit{leaf vertices}. Root vertices are in black and leaves belonging to subproblem $S\mybraces{1, 4}$ are in shown in green.}
        \label{fig:fulldag}
    \end{subfigure}
    \caption{Construction of CDAG in $\mathcal{G}\mybraces{n}$}
    \label{fig:cdag-construction}
\end{figure}

In this section, we give a construction for a family of CDAGs representing the execution of a class of DP algorithms exhibiting a subproblem optimality structure as the one outlined for the Prototype Algorithm~\ref{alg:proto}. 

Given input size $n$, the CDAG $G$ corresponding to the execution of $\mathcal{A}^*$ is constructed using the CDAG in Figure~\ref{fig:simpledag} as the base structure. For each vertex in this CDAG corresponding to subproblem $S\mybraces{i,j}$ we add to $G$ a directed binary tree DAG with $j-i$ leaves with all edges directed towards the root.  The `\emph{root vertex} ($R$-vertex for short) of the tree DAG, henceforth referred to as $v_{i,j}$, corresponds to the computation of the solution of the associated subproblem $S(i,j)$. 
The $k$-th ``\emph{leaf vertex}'' ($L$-vertex for short) of the tree, for $k\in\{0,\ldots,j-i-1\}$  results from the combination of subproblems $S(i,i+k)$ and $S(i+k+1,j)$ and has as predecessors the $R$-vertices of the tree DAGs corresponding respectively to $S(i,i+k)$ and $S(i+k+1,j)$ (\myIe{} $v_{i, i+k}$ and $v_{i+k + 1,j})$ (a visual representation is provided in Figure~\ref{fig:fulldag}). Notice that binary trees corresponding to different R-vertices are vertex disjoint. For each such tree DAG, we say that its $j - i$  $L$-vertices \emph{belong to the root} $v_{i,j}$.

Our lower bound results hold for \emph{any possible structure} of these tree DAGs. This is a \emph{feature} of our model meant to accommodate a variety of possible implementations for $\mathcal{A}^*$ and other DP algorithms sharing the same substructure.

As an example, for the classic DP algorithm for the Matrix Chain Multiplication problem presented in Algorithm~\ref{alg:MCM}, the root vertex of each tree sub-CDAG $S\mybraces{i,j}$ corresponds to the computation of the minimum number of operations required to compute the chain product of the matrices from the $i$-th to the $j$-th, and the $k$-th leaf, for $k\in\{0,\ldots,j-i-1\}$, corresponds to the computation of $S\mybraces{i,i+k}+S\mybraces{i+k+1,j}+ d_{i - 1}d_{i+k}d_j$ (see line 10 of Algorithm~\ref{alg:MCM}).

We denote the family of CDAGs constructed in such a way with $n$ input vertices as $\mathcal{G}(n)$. This family of CDAGs captures the dependence structure between subproblems which is common to the family of DP algorithms which are the focus of our analysis.\\

\noindent\textbf{Analysis of CDAG properties:}
Given a CDAG $G$ from the family $\mathcal{G}\mybraces{n}$,
let $R$ (resp. $L$) denote the set its of $R$-vertices (resp. $L$-vertices). Then, $|R| = \frac{n(n+1)}{2}$ and $|L| = \frac{n^3-n}{6}$.

The ``$i$\emph{-th row}'' (resp., the ``$j$\emph{-th column}'') of $G$, written $r_i$ (resp., $c_j$), is defined as the subset of $R$ including all of the $R$-vertices $v_{i,j}$ of the tree sub-CDAGs each corresponding to the computation of subproblem $S(i,j)$ for $j\in \{i,\ldots,n\}$ (resp., $i\in \{1,\ldots,j\}$).
For any $v\in R$, if $v\in r_i $ and $v\in c_j$ then $i\leq j$.  By construction, the rows $(r_1,r_2,\ldots,r_n)$ (resp., the columns $(c_1,c_2,\ldots,c_n)$) partition $R$, and $|r_i|= n-i +1$ (resp., $|c_i|=i$) for $1\leq i\leq n$. A visual representation of rows and columns can be found in Figure~\ref{fig:rowscols}

\begin{figure}
    \centering
    \begin{subfigure}[t]{0.48\textwidth}
        \centering
        \resizebox{\textwidth}{!}{%
            \tikzset{every picture/.style={line width=0.75pt}} 

\begin{tikzpicture}[x=0.75pt,y=0.75pt,yscale=-1,xscale=1]

\draw  [color={rgb, 255:red, 208; green, 2; blue, 27 }  ,draw opacity=1 ][dash pattern={on 4.5pt off 4.5pt}][line width=0.75]  (206.02,173.98) .. controls (208.22,171.78) and (211.78,171.78) .. (213.98,173.98) -- (286.02,246.02) .. controls (288.22,248.22) and (288.22,251.78) .. (286.02,253.98) -- (274.07,265.93) .. controls (271.87,268.13) and (268.31,268.13) .. (266.11,265.93) -- (194.07,193.89) .. controls (191.87,191.69) and (191.87,188.13) .. (194.07,185.93) -- cycle ;
\draw  [fill={rgb, 255:red, 0; green, 0; blue, 0 }  ,fill opacity=1 ] (201,190) .. controls (201,184.48) and (205.48,180) .. (211,180) .. controls (216.52,180) and (221,184.48) .. (221,190) .. controls (221,195.52) and (216.52,200) .. (211,200) .. controls (205.48,200) and (201,195.52) .. (201,190) -- cycle ;
\draw  [fill={rgb, 255:red, 0; green, 0; blue, 0 }  ,fill opacity=1 ] (261,130) .. controls (261,124.48) and (265.48,120) .. (271,120) .. controls (276.52,120) and (281,124.48) .. (281,130) .. controls (281,135.52) and (276.52,140) .. (271,140) .. controls (265.48,140) and (261,135.52) .. (261,130) -- cycle ;
\draw  [fill={rgb, 255:red, 0; green, 0; blue, 0 }  ,fill opacity=1 ] (321,190) .. controls (321,184.48) and (325.48,180) .. (331,180) .. controls (336.52,180) and (341,184.48) .. (341,190) .. controls (341,195.52) and (336.52,200) .. (331,200) .. controls (325.48,200) and (321,195.52) .. (321,190) -- cycle ;
\draw  [fill={rgb, 255:red, 0; green, 0; blue, 0 }  ,fill opacity=1 ] (441,190) .. controls (441,184.48) and (445.48,180) .. (451,180) .. controls (456.52,180) and (461,184.48) .. (461,190) .. controls (461,195.52) and (456.52,200) .. (451,200) .. controls (445.48,200) and (441,195.52) .. (441,190) -- cycle ;
\draw  [fill={rgb, 255:red, 0; green, 0; blue, 0 }  ,fill opacity=1 ] (381,130) .. controls (381,124.48) and (385.48,120) .. (391,120) .. controls (396.52,120) and (401,124.48) .. (401,130) .. controls (401,135.52) and (396.52,140) .. (391,140) .. controls (385.48,140) and (381,135.52) .. (381,130) -- cycle ;
\draw  [fill={rgb, 255:red, 0; green, 0; blue, 0 }  ,fill opacity=1 ] (321,70) .. controls (321,64.48) and (325.48,60) .. (331,60) .. controls (336.52,60) and (341,64.48) .. (341,70) .. controls (341,75.52) and (336.52,80) .. (331,80) .. controls (325.48,80) and (321,75.52) .. (321,70) -- cycle ;
\draw  [fill={rgb, 255:red, 0; green, 0; blue, 0 }  ,fill opacity=1 ] (141,250) .. controls (141,244.48) and (145.48,240) .. (151,240) .. controls (156.52,240) and (161,244.48) .. (161,250) .. controls (161,255.52) and (156.52,260) .. (151,260) .. controls (145.48,260) and (141,255.52) .. (141,250) -- cycle ;
\draw  [fill={rgb, 255:red, 0; green, 0; blue, 0 }  ,fill opacity=1 ] (381,250) .. controls (381,244.48) and (385.48,240) .. (391,240) .. controls (396.52,240) and (401,244.48) .. (401,250) .. controls (401,255.52) and (396.52,260) .. (391,260) .. controls (385.48,260) and (381,255.52) .. (381,250) -- cycle ;
\draw  [fill={rgb, 255:red, 0; green, 0; blue, 0 }  ,fill opacity=1 ] (261,250) .. controls (261,244.48) and (265.48,240) .. (271,240) .. controls (276.52,240) and (281,244.48) .. (281,250) .. controls (281,255.52) and (276.52,260) .. (271,260) .. controls (265.48,260) and (261,255.52) .. (261,250) -- cycle ;
\draw  [fill={rgb, 255:red, 0; green, 0; blue, 0 }  ,fill opacity=1 ] (501,250) .. controls (501,244.48) and (505.48,240) .. (511,240) .. controls (516.52,240) and (521,244.48) .. (521,250) .. controls (521,255.52) and (516.52,260) .. (511,260) .. controls (505.48,260) and (501,255.52) .. (501,250) -- cycle ;
\draw    (151,250) -- (202.88,198.12) ;
\draw [shift={(205,196)}, rotate = 135] [fill={rgb, 255:red, 0; green, 0; blue, 0 }  ][line width=0.08]  [draw opacity=0] (8.93,-4.29) -- (0,0) -- (8.93,4.29) -- cycle    ;
\draw    (271,130) -- (322.88,78.12) ;
\draw [shift={(325,76)}, rotate = 135] [fill={rgb, 255:red, 0; green, 0; blue, 0 }  ][line width=0.08]  [draw opacity=0] (8.93,-4.29) -- (0,0) -- (8.93,4.29) -- cycle    ;
\draw    (215,186) -- (262.88,138.12) ;
\draw [shift={(265,136)}, rotate = 135] [fill={rgb, 255:red, 0; green, 0; blue, 0 }  ][line width=0.08]  [draw opacity=0] (8.93,-4.29) -- (0,0) -- (8.93,4.29) -- cycle    ;
\draw    (331,190) -- (382.88,138.12) ;
\draw [shift={(385,136)}, rotate = 135] [fill={rgb, 255:red, 0; green, 0; blue, 0 }  ][line width=0.08]  [draw opacity=0] (8.93,-4.29) -- (0,0) -- (8.93,4.29) -- cycle    ;
\draw    (271,250) -- (322.88,198.12) ;
\draw [shift={(325,196)}, rotate = 135] [fill={rgb, 255:red, 0; green, 0; blue, 0 }  ][line width=0.08]  [draw opacity=0] (8.93,-4.29) -- (0,0) -- (8.93,4.29) -- cycle    ;
\draw    (271,250) -- (219.12,198.12) ;
\draw [shift={(217,196)}, rotate = 45] [fill={rgb, 255:red, 0; green, 0; blue, 0 }  ][line width=0.08]  [draw opacity=0] (8.93,-4.29) -- (0,0) -- (8.93,4.29) -- cycle    ;
\draw    (391,250) -- (442.88,198.12) ;
\draw [shift={(445,196)}, rotate = 135] [fill={rgb, 255:red, 0; green, 0; blue, 0 }  ][line width=0.08]  [draw opacity=0] (8.93,-4.29) -- (0,0) -- (8.93,4.29) -- cycle    ;
\draw    (331,190) -- (279.12,138.12) ;
\draw [shift={(277,136)}, rotate = 45] [fill={rgb, 255:red, 0; green, 0; blue, 0 }  ][line width=0.08]  [draw opacity=0] (8.93,-4.29) -- (0,0) -- (8.93,4.29) -- cycle    ;
\draw    (391,250) -- (339.12,198.12) ;
\draw [shift={(337,196)}, rotate = 45] [fill={rgb, 255:red, 0; green, 0; blue, 0 }  ][line width=0.08]  [draw opacity=0] (8.93,-4.29) -- (0,0) -- (8.93,4.29) -- cycle    ;
\draw    (451,190) -- (399.12,138.12) ;
\draw [shift={(397,136)}, rotate = 45] [fill={rgb, 255:red, 0; green, 0; blue, 0 }  ][line width=0.08]  [draw opacity=0] (8.93,-4.29) -- (0,0) -- (8.93,4.29) -- cycle    ;
\draw    (391,130) -- (339.12,78.12) ;
\draw [shift={(337,76)}, rotate = 45] [fill={rgb, 255:red, 0; green, 0; blue, 0 }  ][line width=0.08]  [draw opacity=0] (8.93,-4.29) -- (0,0) -- (8.93,4.29) -- cycle    ;
\draw    (511,250) -- (459.12,198.12) ;
\draw [shift={(457,196)}, rotate = 45] [fill={rgb, 255:red, 0; green, 0; blue, 0 }  ][line width=0.08]  [draw opacity=0] (8.93,-4.29) -- (0,0) -- (8.93,4.29) -- cycle    ;
\draw    (151,250) .. controls (176.61,197.3) and (215.8,154.79) .. (262.85,136.8) ;
\draw [shift={(265,136)}, rotate = 159.97] [fill={rgb, 255:red, 0; green, 0; blue, 0 }  ][line width=0.08]  [draw opacity=0] (8.93,-4.29) -- (0,0) -- (8.93,4.29) -- cycle    ;
\draw    (511,250) .. controls (478.5,189.42) and (456.66,163.29) .. (399.63,137.19) ;
\draw [shift={(397,136)}, rotate = 24.19] [fill={rgb, 255:red, 0; green, 0; blue, 0 }  ][line width=0.08]  [draw opacity=0] (8.93,-4.29) -- (0,0) -- (8.93,4.29) -- cycle    ;
\draw    (451,190) .. controls (418.5,129.42) and (396.66,103.29) .. (339.63,77.19) ;
\draw [shift={(337,76)}, rotate = 24.19] [fill={rgb, 255:red, 0; green, 0; blue, 0 }  ][line width=0.08]  [draw opacity=0] (8.93,-4.29) -- (0,0) -- (8.93,4.29) -- cycle    ;
\draw    (391,250) .. controls (358.5,189.42) and (336.66,163.29) .. (279.63,137.19) ;
\draw [shift={(277,136)}, rotate = 24.19] [fill={rgb, 255:red, 0; green, 0; blue, 0 }  ][line width=0.08]  [draw opacity=0] (8.93,-4.29) -- (0,0) -- (8.93,4.29) -- cycle    ;
\draw    (211,190) .. controls (236.61,137.3) and (275.8,94.79) .. (322.85,76.8) ;
\draw [shift={(325,76)}, rotate = 159.97] [fill={rgb, 255:red, 0; green, 0; blue, 0 }  ][line width=0.08]  [draw opacity=0] (8.93,-4.29) -- (0,0) -- (8.93,4.29) -- cycle    ;
\draw    (271,250) .. controls (296.61,197.3) and (335.8,154.79) .. (382.85,136.8) ;
\draw [shift={(385,136)}, rotate = 159.97] [fill={rgb, 255:red, 0; green, 0; blue, 0 }  ][line width=0.08]  [draw opacity=0] (8.93,-4.29) -- (0,0) -- (8.93,4.29) -- cycle    ;
\draw    (151,250) .. controls (168.82,187.13) and (248.39,90.46) .. (322.75,76.4) ;
\draw [shift={(325,76)}, rotate = 170.54] [fill={rgb, 255:red, 0; green, 0; blue, 0 }  ][line width=0.08]  [draw opacity=0] (8.93,-4.29) -- (0,0) -- (8.93,4.29) -- cycle    ;
\draw    (511,250) .. controls (480.31,167.33) and (413.36,96.92) .. (339.25,76.6) ;
\draw [shift={(337,76)}, rotate = 14.57] [fill={rgb, 255:red, 0; green, 0; blue, 0 }  ][line width=0.08]  [draw opacity=0] (8.93,-4.29) -- (0,0) -- (8.93,4.29) -- cycle    ;
\draw  [color={rgb, 255:red, 208; green, 2; blue, 27 }  ,draw opacity=1 ][dash pattern={on 4.5pt off 4.5pt}][line width=0.75]  (266.19,113.85) .. controls (268.39,111.66) and (271.94,111.67) .. (274.13,113.86) -- (406.2,245.93) .. controls (408.39,248.12) and (408.39,251.66) .. (406.19,253.85) -- (394.26,265.7) .. controls (392.07,267.89) and (388.51,267.88) .. (386.32,265.69) -- (254.25,133.62) .. controls (252.06,131.43) and (252.07,127.89) .. (254.26,125.7) -- cycle ;
\draw  [color={rgb, 255:red, 208; green, 2; blue, 27 }  ,draw opacity=1 ][dash pattern={on 4.5pt off 4.5pt}][line width=0.75]  (326.62,54.53) .. controls (328.82,52.35) and (332.37,52.35) .. (334.56,54.54) -- (526.08,246.06) .. controls (528.27,248.25) and (528.26,251.79) .. (526.07,253.97) -- (514.13,265.83) .. controls (511.94,268.01) and (508.38,268.01) .. (506.19,265.82) -- (314.68,74.3) .. controls (312.49,72.11) and (312.5,68.57) .. (314.69,66.39) -- cycle ;
\draw  [color={rgb, 255:red, 208; green, 2; blue, 27 }  ,draw opacity=1 ][dash pattern={on 4.5pt off 4.5pt}][line width=0.75]  (144.02,231.98) .. controls (146.22,229.78) and (149.78,229.78) .. (151.98,231.98) -- (168.02,248.02) .. controls (170.22,250.22) and (170.22,253.78) .. (168.02,255.98) -- (156.07,267.93) .. controls (153.87,270.13) and (150.31,270.13) .. (148.11,267.93) -- (132.07,251.89) .. controls (129.87,249.69) and (129.87,246.13) .. (132.07,243.93) -- cycle ;
\draw  [color={rgb, 255:red, 245; green, 166; blue, 35 }  ,draw opacity=1 ][dash pattern={on 4.5pt off 4.5pt}][line width=0.75]  (346.42,66.47) .. controls (348.6,68.67) and (348.6,72.22) .. (346.41,74.41) -- (154.34,266.48) .. controls (152.15,268.67) and (148.61,268.66) .. (146.43,266.47) -- (134.57,254.54) .. controls (132.39,252.34) and (132.39,248.78) .. (134.58,246.6) -- (326.65,54.53) .. controls (328.84,52.34) and (332.38,52.35) .. (334.56,54.54) -- cycle ;
\draw  [color={rgb, 255:red, 245; green, 166; blue, 35 }  ,draw opacity=1 ][dash pattern={on 4.5pt off 4.5pt}][line width=0.75]  (406.43,126.47) .. controls (408.61,128.66) and (408.6,132.22) .. (406.41,134.41) -- (274.34,266.48) .. controls (272.15,268.67) and (268.61,268.66) .. (266.43,266.47) -- (254.57,254.54) .. controls (252.39,252.34) and (252.39,248.78) .. (254.58,246.6) -- (386.65,114.52) .. controls (388.84,112.33) and (392.39,112.34) .. (394.57,114.54) -- cycle ;
\draw  [color={rgb, 255:red, 245; green, 166; blue, 35 }  ,draw opacity=1 ][dash pattern={on 4.5pt off 4.5pt}][line width=0.75]  (466,186) .. controls (468.2,188.2) and (468.2,191.76) .. (466,193.96) -- (394.26,265.7) .. controls (392.06,267.9) and (388.5,267.9) .. (386.3,265.7) -- (374.35,253.76) .. controls (372.15,251.56) and (372.15,247.99) .. (374.35,245.8) -- (446.09,174.06) .. controls (448.29,171.86) and (451.86,171.86) .. (454.06,174.06) -- cycle ;
\draw  [color={rgb, 255:red, 245; green, 166; blue, 35 }  ,draw opacity=1 ][dash pattern={on 4.5pt off 4.5pt}][line width=0.75]  (528.97,243.97) .. controls (531.17,246.17) and (531.17,249.74) .. (528.97,251.94) -- (512.94,267.97) .. controls (510.74,270.17) and (507.17,270.17) .. (504.97,267.97) -- (493.03,256.03) .. controls (490.83,253.83) and (490.83,250.26) .. (493.03,248.06) -- (509.06,232.03) .. controls (511.26,229.83) and (514.83,229.83) .. (517.03,232.03) -- cycle ;

\draw (348,46.4) node [anchor=north west][inner sep=0.75pt]    {$v_{1,4}$};
\draw (296,119.4) node [anchor=north west][inner sep=0.75pt]    {$v_{1,3}$};
\draw (243,263.4) node [anchor=north west][inner sep=0.75pt]  [color={rgb, 255:red, 245; green, 166; blue, 35 }  ,opacity=1 ]  {$r_{2}$};
\draw (363,263.4) node [anchor=north west][inner sep=0.75pt]  [color={rgb, 255:red, 245; green, 166; blue, 35 }  ,opacity=1 ]  {$r_{3}$};
\draw (123,263.4) node [anchor=north west][inner sep=0.75pt]  [color={rgb, 255:red, 245; green, 166; blue, 35 }  ,opacity=1 ]  {$r_{1}$};
\draw (483,262.4) node [anchor=north west][inner sep=0.75pt]  [color={rgb, 255:red, 245; green, 166; blue, 35 }  ,opacity=1 ]  {$r_{4}$};
\draw (162,263.4) node [anchor=north west][inner sep=0.75pt]  [color={rgb, 255:red, 208; green, 2; blue, 27 }  ,opacity=1 ]  {$c_{1}$};
\draw (282,262.4) node [anchor=north west][inner sep=0.75pt]  [color={rgb, 255:red, 208; green, 2; blue, 27 }  ,opacity=1 ]  {$c_{2}$};
\draw (402,263.4) node [anchor=north west][inner sep=0.75pt]  [color={rgb, 255:red, 208; green, 2; blue, 27 }  ,opacity=1 ]  {$c_{3}$};
\draw (522,262.4) node [anchor=north west][inner sep=0.75pt]  [color={rgb, 255:red, 208; green, 2; blue, 27 }  ,opacity=1 ]  {$c_{4}$};
\draw (176,246.4) node [anchor=north west][inner sep=0.75pt]    {$v_{1,1}$};
\draw (360,183.4) node [anchor=north west][inner sep=0.75pt]    {$v_{2,3}$};
\draw (297,246.4) node [anchor=north west][inner sep=0.75pt]    {$v_{2,2}$};
\draw (417,246.4) node [anchor=north west][inner sep=0.75pt]    {$v_{3,3}$};
\draw (537,247.4) node [anchor=north west][inner sep=0.75pt]    {$v_{4,4}$};
\draw (237,183.4) node [anchor=north west][inner sep=0.75pt]    {$v_{1,2}$};
\draw (476,170.4) node [anchor=north west][inner sep=0.75pt]    {$v_{3,4}$};
\draw (420,106.4) node [anchor=north west][inner sep=0.75pt]    {$v_{2,4}$};

\end{tikzpicture}
        }
        \caption{Rows and columns}
        \label{fig:rowscols}
    \end{subfigure}
    \hfill
    \begin{subfigure}[t]{0.48\textwidth}
        \centering
        \resizebox{\textwidth}{!}{%
            \tikzset{every picture/.style={line width=0.75pt}} 

\begin{tikzpicture}[x=0.75pt,y=0.75pt,yscale=-1,xscale=1]

\draw  [fill={rgb, 255:red, 0; green, 0; blue, 0 }  ,fill opacity=1 ] (102,150) .. controls (102,145.58) and (105.58,142) .. (110,142) .. controls (114.42,142) and (118,145.58) .. (118,150) .. controls (118,154.42) and (114.42,158) .. (110,158) .. controls (105.58,158) and (102,154.42) .. (102,150) -- cycle ;
\draw  [fill={rgb, 255:red, 0; green, 0; blue, 0 }  ,fill opacity=1 ] (162,90) .. controls (162,85.58) and (165.58,82) .. (170,82) .. controls (174.42,82) and (178,85.58) .. (178,90) .. controls (178,94.42) and (174.42,98) .. (170,98) .. controls (165.58,98) and (162,94.42) .. (162,90) -- cycle ;
\draw  [fill={rgb, 255:red, 0; green, 0; blue, 0 }  ,fill opacity=1 ] (222,150) .. controls (222,145.58) and (225.58,142) .. (230,142) .. controls (234.42,142) and (238,145.58) .. (238,150) .. controls (238,154.42) and (234.42,158) .. (230,158) .. controls (225.58,158) and (222,154.42) .. (222,150) -- cycle ;
\draw  [fill={rgb, 255:red, 0; green, 0; blue, 0 }  ,fill opacity=1 ] (342,150) .. controls (342,145.58) and (345.58,142) .. (350,142) .. controls (354.42,142) and (358,145.58) .. (358,150) .. controls (358,154.42) and (354.42,158) .. (350,158) .. controls (345.58,158) and (342,154.42) .. (342,150) -- cycle ;
\draw  [fill={rgb, 255:red, 0; green, 0; blue, 0 }  ,fill opacity=1 ] (282,90) .. controls (282,85.58) and (285.58,82) .. (290,82) .. controls (294.42,82) and (298,85.58) .. (298,90) .. controls (298,94.42) and (294.42,98) .. (290,98) .. controls (285.58,98) and (282,94.42) .. (282,90) -- cycle ;
\draw  [fill={rgb, 255:red, 0; green, 0; blue, 0 }  ,fill opacity=1 ] (42,210) .. controls (42,205.58) and (45.58,202) .. (50,202) .. controls (54.42,202) and (58,205.58) .. (58,210) .. controls (58,214.42) and (54.42,218) .. (50,218) .. controls (45.58,218) and (42,214.42) .. (42,210) -- cycle ;
\draw  [fill={rgb, 255:red, 0; green, 0; blue, 0 }  ,fill opacity=1 ] (282,210) .. controls (282,205.58) and (285.58,202) .. (290,202) .. controls (294.42,202) and (298,205.58) .. (298,210) .. controls (298,214.42) and (294.42,218) .. (290,218) .. controls (285.58,218) and (282,214.42) .. (282,210) -- cycle ;
\draw  [fill={rgb, 255:red, 0; green, 0; blue, 0 }  ,fill opacity=1 ] (162,210) .. controls (162,205.58) and (165.58,202) .. (170,202) .. controls (174.42,202) and (178,205.58) .. (178,210) .. controls (178,214.42) and (174.42,218) .. (170,218) .. controls (165.58,218) and (162,214.42) .. (162,210) -- cycle ;
\draw  [fill={rgb, 255:red, 0; green, 0; blue, 0 }  ,fill opacity=1 ] (402,210) .. controls (402,205.58) and (405.58,202) .. (410,202) .. controls (414.42,202) and (418,205.58) .. (418,210) .. controls (418,214.42) and (414.42,218) .. (410,218) .. controls (405.58,218) and (402,214.42) .. (402,210) -- cycle ;
\draw    (50,210) -- (101.88,158.12) ;
\draw [shift={(104,156)}, rotate = 135] [fill={rgb, 255:red, 0; green, 0; blue, 0 }  ][line width=0.08]  [draw opacity=0] (8.93,-4.29) -- (0,0) -- (8.93,4.29) -- cycle    ;
\draw    (170,90) -- (221.88,38.12) ;
\draw [shift={(224,36)}, rotate = 135] [fill={rgb, 255:red, 0; green, 0; blue, 0 }  ][line width=0.08]  [draw opacity=0] (8.93,-4.29) -- (0,0) -- (8.93,4.29) -- cycle    ;
\draw    (114,146) -- (161.88,98.12) ;
\draw [shift={(164,96)}, rotate = 135] [fill={rgb, 255:red, 0; green, 0; blue, 0 }  ][line width=0.08]  [draw opacity=0] (8.93,-4.29) -- (0,0) -- (8.93,4.29) -- cycle    ;
\draw    (230,150) -- (281.88,98.12) ;
\draw [shift={(284,96)}, rotate = 135] [fill={rgb, 255:red, 0; green, 0; blue, 0 }  ][line width=0.08]  [draw opacity=0] (8.93,-4.29) -- (0,0) -- (8.93,4.29) -- cycle    ;
\draw    (170,210) -- (221.88,158.12) ;
\draw [shift={(224,156)}, rotate = 135] [fill={rgb, 255:red, 0; green, 0; blue, 0 }  ][line width=0.08]  [draw opacity=0] (8.93,-4.29) -- (0,0) -- (8.93,4.29) -- cycle    ;
\draw    (170,210) -- (118.12,158.12) ;
\draw [shift={(116,156)}, rotate = 45] [fill={rgb, 255:red, 0; green, 0; blue, 0 }  ][line width=0.08]  [draw opacity=0] (8.93,-4.29) -- (0,0) -- (8.93,4.29) -- cycle    ;
\draw    (290,210) -- (341.88,158.12) ;
\draw [shift={(344,156)}, rotate = 135] [fill={rgb, 255:red, 0; green, 0; blue, 0 }  ][line width=0.08]  [draw opacity=0] (8.93,-4.29) -- (0,0) -- (8.93,4.29) -- cycle    ;
\draw    (230,150) -- (178.12,98.12) ;
\draw [shift={(176,96)}, rotate = 45] [fill={rgb, 255:red, 0; green, 0; blue, 0 }  ][line width=0.08]  [draw opacity=0] (8.93,-4.29) -- (0,0) -- (8.93,4.29) -- cycle    ;
\draw    (290,210) -- (238.12,158.12) ;
\draw [shift={(236,156)}, rotate = 45] [fill={rgb, 255:red, 0; green, 0; blue, 0 }  ][line width=0.08]  [draw opacity=0] (8.93,-4.29) -- (0,0) -- (8.93,4.29) -- cycle    ;
\draw    (350,150) -- (298.12,98.12) ;
\draw [shift={(296,96)}, rotate = 45] [fill={rgb, 255:red, 0; green, 0; blue, 0 }  ][line width=0.08]  [draw opacity=0] (8.93,-4.29) -- (0,0) -- (8.93,4.29) -- cycle    ;
\draw    (290,90) -- (238.12,38.12) ;
\draw [shift={(236,36)}, rotate = 45] [fill={rgb, 255:red, 0; green, 0; blue, 0 }  ][line width=0.08]  [draw opacity=0] (8.93,-4.29) -- (0,0) -- (8.93,4.29) -- cycle    ;
\draw    (410,210) -- (358.12,158.12) ;
\draw [shift={(356,156)}, rotate = 45] [fill={rgb, 255:red, 0; green, 0; blue, 0 }  ][line width=0.08]  [draw opacity=0] (8.93,-4.29) -- (0,0) -- (8.93,4.29) -- cycle    ;
\draw    (50,210) .. controls (75.61,157.3) and (114.8,114.79) .. (161.85,96.8) ;
\draw [shift={(164,96)}, rotate = 159.97] [fill={rgb, 255:red, 0; green, 0; blue, 0 }  ][line width=0.08]  [draw opacity=0] (8.93,-4.29) -- (0,0) -- (8.93,4.29) -- cycle    ;
\draw    (410,210) .. controls (377.5,149.42) and (355.66,123.29) .. (298.63,97.19) ;
\draw [shift={(296,96)}, rotate = 24.19] [fill={rgb, 255:red, 0; green, 0; blue, 0 }  ][line width=0.08]  [draw opacity=0] (8.93,-4.29) -- (0,0) -- (8.93,4.29) -- cycle    ;
\draw    (350,150) .. controls (317.5,89.42) and (295.66,63.29) .. (238.63,37.19) ;
\draw [shift={(236,36)}, rotate = 24.19] [fill={rgb, 255:red, 0; green, 0; blue, 0 }  ][line width=0.08]  [draw opacity=0] (8.93,-4.29) -- (0,0) -- (8.93,4.29) -- cycle    ;
\draw    (290,210) .. controls (257.5,149.42) and (235.66,123.29) .. (178.63,97.19) ;
\draw [shift={(176,96)}, rotate = 24.19] [fill={rgb, 255:red, 0; green, 0; blue, 0 }  ][line width=0.08]  [draw opacity=0] (8.93,-4.29) -- (0,0) -- (8.93,4.29) -- cycle    ;
\draw    (110,150) .. controls (135.61,97.3) and (174.8,54.79) .. (221.85,36.8) ;
\draw [shift={(224,36)}, rotate = 159.97] [fill={rgb, 255:red, 0; green, 0; blue, 0 }  ][line width=0.08]  [draw opacity=0] (8.93,-4.29) -- (0,0) -- (8.93,4.29) -- cycle    ;
\draw    (170,210) .. controls (195.61,157.3) and (234.8,114.79) .. (281.85,96.8) ;
\draw [shift={(284,96)}, rotate = 159.97] [fill={rgb, 255:red, 0; green, 0; blue, 0 }  ][line width=0.08]  [draw opacity=0] (8.93,-4.29) -- (0,0) -- (8.93,4.29) -- cycle    ;
\draw    (50,210) .. controls (67.82,147.13) and (147.39,50.46) .. (221.75,36.4) ;
\draw [shift={(224,36)}, rotate = 170.54] [fill={rgb, 255:red, 0; green, 0; blue, 0 }  ][line width=0.08]  [draw opacity=0] (8.93,-4.29) -- (0,0) -- (8.93,4.29) -- cycle    ;
\draw    (410,210) .. controls (379.31,127.33) and (312.36,56.92) .. (238.25,36.6) ;
\draw [shift={(236,36)}, rotate = 14.57] [fill={rgb, 255:red, 0; green, 0; blue, 0 }  ][line width=0.08]  [draw opacity=0] (8.93,-4.29) -- (0,0) -- (8.93,4.29) -- cycle    ;
\draw  [color={rgb, 255:red, 208; green, 2; blue, 27 }  ,draw opacity=1 ][dash pattern={on 4.5pt off 4.5pt}] (48,188) -- (108,248) -- (292,68) -- (312,88) -- (108,288) -- (28,208) -- cycle ;
\draw  [color={rgb, 255:red, 74; green, 144; blue, 226 }  ,draw opacity=1 ][dash pattern={on 4.5pt off 4.5pt}] (108,128) -- (228,248) -- (352,128) -- (372,148) -- (228,288) -- (88,148) -- cycle ;
\draw  [fill={rgb, 255:red, 0; green, 0; blue, 0 }  ,fill opacity=1 ] (222,30) .. controls (222,25.58) and (225.58,22) .. (230,22) .. controls (234.42,22) and (238,25.58) .. (238,30) .. controls (238,34.42) and (234.42,38) .. (230,38) .. controls (225.58,38) and (222,34.42) .. (222,30) -- cycle ;
\draw  [color={rgb, 255:red, 248; green, 231; blue, 28 }  ,draw opacity=1 ][dash pattern={on 4.5pt off 4.5pt}] (168,68) -- (348,248) -- (412,188) -- (432,208) -- (348,288) -- (148,88) -- cycle ;

\draw (241,14.4) node [anchor=north west][inner sep=0.75pt]    {$v_{1,4}$};
\draw (188,74.4) node [anchor=north west][inner sep=0.75pt]    {$v_{1,3}$};
\draw (93,258.4) node [anchor=north west][inner sep=0.75pt]  [color={rgb, 255:red, 208; green, 2; blue, 27 }  ,opacity=1 ]  {$W_{1}$};
\draw (217,258.4) node [anchor=north west][inner sep=0.75pt]  [color={rgb, 255:red, 74; green, 144; blue, 226 }  ,opacity=1 ]  {$W_{2}$};
\draw (329,74.4) node [anchor=north west][inner sep=0.75pt]    {$v_{2,4}$};
\draw (385,142.4) node [anchor=north west][inner sep=0.75pt]    {$v_{3,4}$};
\draw (260,140.4) node [anchor=north west][inner sep=0.75pt]    {$v_{2,3}$};
\draw (129,134.4) node [anchor=north west][inner sep=0.75pt]    {$v_{1,2}$};
\draw (69,195.4) node [anchor=north west][inner sep=0.75pt]    {$v_{1,1}$};
\draw (200,203.4) node [anchor=north west][inner sep=0.75pt]    {$v_{2,2}$};
\draw (317,203.4) node [anchor=north west][inner sep=0.75pt]    {$v_{3,\ 3}$};
\draw (433,202.4) node [anchor=north west][inner sep=0.75pt]    {$v_{4,\ 4}$};
\draw (337,254.4) node [anchor=north west][inner sep=0.75pt]  [color={rgb, 255:red, 248; green, 231; blue, 28 }  ,opacity=1 ]  {$W_{3}$};

\end{tikzpicture} 
        }
        \caption{$W$-cover}
        \label{fig:wcover}
    \end{subfigure}
    \caption{Representation of rows, columns, and the $W$-cover for input size $n = 4$.}
    \label{fig:rowsandwcover}
\end{figure}
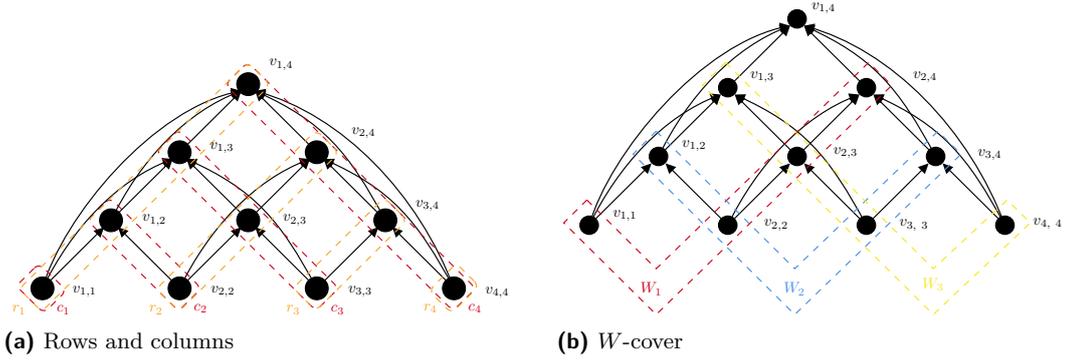


By construction, each $L$-vertex has a distinct pair of $R$-vertices as its predecessors.  Such pair is said to \emph{interact} in the computation. 
Given $v_{i,j}\in R$, it interacts with all and only the vertices $v_{k,i-1}$ for $1\leq k\leq i-1$, and the vertices $v_{j+1,l}$ for $j+1\leq l \leq n$. All such $n+i-j-1$ interactions correspond to the computation of a $L$-vertex belonging to a distinct sub-tree corresponding to the subproblem $S\mybraces{k,j}$ or $S\mybraces{i,l}$ (\myIe{} to the $R$-vertex $v_{k,j}$ or $v_{i,l})$. Pairs of vertices in the same row (or column) do not interact. 

We define $W_i = r_{i+1} \cup c_{i}$, for $1\leq i \leq n-1$ as the subsets of $R$ such that each vertex in $c_i$ interacts with all the vertices of $r_{i+1}$ (a visual representation can be found in Figure~\ref{fig:wcover}). By construction, $|W_i|= n$. A vertex  $v_{1,j}\in r_1$ (resp, $v_{i,n}\in c_n$) is only a member of $W_j$ (resp., $W_{i-1}$). All remaining vertices $v_{i,j}\in R$ are members of both $W_j$ and $W_{i-1}$, which are distinct as, by construction, for any   $v_{i,j}\in R$ we have $i\leq j$. We refer to the collection of sets $\mybraces{W_1,W_2,\ldots,W_{n-1}}$ as the ``\emph{$W$-cover}'' of the $R$-vertices of CDAG $G$. By its definition, and by the properties of the interactions between vertices according to placement in rows and columns discussed in the previous paragraphs, we have the following property:

\begin{lemma}\label{lem:coverproperty}
       Given a CDAG $G\in \mathcal{G}\mybraces{n}$, let $W_1,W_2,\ldots,W_{n-1}$ denote the $W$-cover of its $R$-vertices. For any set $X\subseteq W_i$ there are at most $\frac{|X|2}{4}\leq i\mybraces{n-i}\leq \frac{n^2}{4}$  pairs of vertices $(u,v)$, where $u\in r_{i+1}$ and $v\in c_i$, that interact as predecessors of distinct $L$-vertices each belonging to a distinct $R$-vertex.
\end{lemma}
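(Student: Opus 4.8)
The plan is to exploit the complete bipartite interaction structure between the column $c_i$ and the row $r_{i+1}$ whose union is $W_i$. First I would record that $W_i = r_{i+1}\cup c_i$ is a \emph{disjoint} union: every vertex of $r_{i+1}$ has the form $v_{i+1,l}$ with $l\geq i+1>i$, hence lies in a column strictly to the right of $c_i$, while every vertex of $c_i$ has the form $v_{k,i}$ with $k\leq i<i+1$, hence lies in a row strictly above $r_{i+1}$. Therefore, given $X\subseteq W_i$, I can split it uniquely as $X = X_r \sqcup X_c$ with $X_r = X\cap r_{i+1}$ and $X_c = X\cap c_i$, and set $a=|X_r|$, $b=|X_c|$, so that $a+b=|X|$.

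Next I would invoke the defining property of $W_i$ established in the construction, namely that \emph{every} vertex of $c_i$ interacts with \emph{every} vertex of $r_{i+1}$. Consequently the interacting pairs $(u,v)$ with $u\in X_r$ and $v\in X_c$ are exactly all $a\cdot b$ elements of $X_r\times X_c$. I would then verify that the two qualifiers in the statement do not reduce this count: since each $L$-vertex has a distinct pair of $R$-predecessors, distinct interacting pairs yield distinct $L$-vertices; and the pair $\mybraces{v_{i+1,l},v_{k,i}}$ produces the $L$-vertex belonging to $v_{k,l}$, so the map sending a pair to the $R$-vertex it belongs to, $(l,k)\mapsto v_{k,l}$, is injective and the $R$-vertices are distinct as well. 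Thus the quantity to be bounded is precisely $a b$.

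Finally I would bound $ab$ in the three required ways. Applying the AM--GM inequality to $a+b=|X|$ gives $ab\leq \mybraces{\tfrac{a+b}{2}}^2 = \tfrac{|X|^2}{4}$. Since $X_r\subseteq r_{i+1}$ and $X_c\subseteq c_i$ we have $a\leq |r_{i+1}| = n-i$ and $b\leq |c_i|=i$, whence $ab\leq i\mybraces{n-i}$; a second application of AM--GM to $i+(n-i)=n$ yields $i\mybraces{n-i}\leq \tfrac{n^2}{4}$. Each of these three expressions is then a valid upper bound on the number of interacting pairs inside $X$.

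I do not expect a serious obstacle here; the only point needing care is interpreting the displayed chain as three \emph{separate} upper bounds on $ab$ rather than as a literal increasing chain, since for $|X|$ close to $n$ and $i$ far from $n/2$ one can have $\tfrac{|X|^2}{4}>i\mybraces{n-i}$. The essential content is the complete-bipartite observation combined with the two AM--GM estimates.
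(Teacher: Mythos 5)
Your proof is correct and fills in exactly the reasoning the paper leaves implicit: the paper states this lemma without an explicit proof, asserting it follows from the row/column interaction properties, and your argument (the disjoint split $X = X_r \sqcup X_c$, the complete bipartite interaction between $c_i$ and $r_{i+1}$, injectivity of the pair-to-$L$-vertex and pair-to-owning-$R$-vertex maps, and the two AM--GM estimates) is precisely that reasoning made rigorous. Your closing caveat is also well taken: the displayed expression $\frac{|X|^2}{4}\leq i\mybraces{n-i}$ (the statement's ``$|X|2$'' is a typo for $|X|^2$) cannot be read as a literal chain of inequalities, since it fails for $|X|$ near $n$ and $i$ far from $n/2$; the three quantities are each valid upper bounds on the number of interacting pairs, which is how the lemma is used later.
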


\section{I/O analysis for computations of CDAGs} \label{sec:main}

We analyze the \io{} complexity of CDAGs in the class $\mathcal{G}(n)$ according to the \emph{Red-Blue Pebble Game} by Hong and Kung~\cite{jia1981complexity}. To simplify the presentation, we assume that $n$ and $M$ are powers of $2$, \myIe{} $n = 2^i$ and $M = 2^j$ for integers $i, j \geq 3$. If that is not the case, our analysis can be adapted with minor, albeit tedious, modifications resulting in different constant terms.  

\subsection{\io{} lower bound for computations with no recomputation}\label{sec:lwbnr}

In this section, we present a lower bound to the \io{} complexity of algorithms corresponding to CDAGS in $\mathcal{G}\mybraces{n}$, assuming that no intermediate value is ever recomputed. (\myIe{} under the ``\emph{no-recomputation assumption}''). The following property captures the relation between subsets of $L$-vertices and the number of their distinct $R$-vertices predecessors:

\begin{restatable}{lemma}{keylemmanr}\label{lem:keycoverpropertynr}
    Let $L'\subset L$ such that $|L'|=8M^{1.5}$ and the vertices in $L'$ belong to the tree sub-CDAGs of at most $4M$ distinct $R$-vertices. The set $R'\subseteq R$ containing all $R$-vertices that are predecessors of at least one vertex in $L'$ has cardinality at least $4M$.
\end{restatable}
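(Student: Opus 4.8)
The plan is to exploit the fact that, for a fixed root, its leaves are forced to scatter their two predecessors across exactly one row and exactly one column of $R'$, and then to combine these per-root counts through a single Cauchy--Schwarz estimate. Let $S\subseteq R$ denote the set of at most $4M$ distinct $R$-vertices that own the leaves of $L'$, and for $v_{a,d}\in S$ let $t_{a,d}$ be the number of leaves of $L'$ belonging to $v_{a,d}$, so that $\sum_{v_{a,d}\in S} t_{a,d}=|L'|=8M^{1.5}$. Write $\rho_a:=|R'\cap r_a|$ and $\gamma_d:=|R'\cap c_d|$. Since the rows $(r_1,\ldots,r_n)$ and the columns $(c_1,\ldots,c_n)$ each partition $R$, and $R'\subseteq R$, we have $\sum_a \rho_a=\sum_d\gamma_d=|R'|$.

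First I would establish the per-root inequality $t_{a,d}\le\min\{\rho_a,\gamma_d\}$. By the CDAG construction, the $k$-th leaf of $v_{a,d}$ has predecessors $v_{a,a+k}\in r_a$ and $v_{a+k+1,d}\in c_d$; distinct values of $k$ yield distinct left-predecessors (all lying in the single row $r_a$) and distinct right-predecessors (all lying in the single column $c_d$), and all of them belong to $R'$ by definition of $R'$. Hence the $t_{a,d}$ leaves of $v_{a,d}$ contribute $t_{a,d}$ distinct vertices of $R'\cap r_a$ and $t_{a,d}$ distinct vertices of $R'\cap c_d$, giving $t_{a,d}\le\rho_a$ and $t_{a,d}\le\gamma_d$, and therefore $t_{a,d}\le\sqrt{\rho_a\gamma_d}$.

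I would then sum over $S$ and apply Cauchy--Schwarz, observing that $\sum_{v_{a,d}\in S}\rho_a\gamma_d\le\big(\sum_a\rho_a\big)\big(\sum_d\gamma_d\big)=|R'|^2$ because all terms are nonnegative and $S$ is a subset of all index pairs:
\[
8M^{1.5}=\sum_{v_{a,d}\in S} t_{a,d}\le\sum_{v_{a,d}\in S}\sqrt{\rho_a\gamma_d}\le\sqrt{|S|}\left(\sum_{v_{a,d}\in S}\rho_a\gamma_d\right)^{1/2}\le\sqrt{|S|}\,|R'|.
\]
Using $|S|\le 4M$ this gives $8M^{1.5}\le 2\sqrt{M}\,|R'|$, and rearranging yields $|R'|\ge 4M$, as claimed. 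Note this route never invokes Lemma~\ref{lem:coverproperty}; it relies only on the row/column partition of $R$ and the fixed-row/fixed-column placement of each leaf's predecessors.

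The one genuinely load-bearing step is the per-root bound: recognizing that every leaf of a fixed root draws its two predecessors from one fixed row and one fixed column, so that $t_{a,d}$ is simultaneously capped by $\rho_a$ and $\gamma_d$. After that, the argument is pure Cauchy--Schwarz. The main thing to watch is that the constants line up exactly: the hypotheses $|L'|=8M^{1.5}$ and $|S|\le 4M$ are precisely what make $2\sqrt M\,|R'|\ge 8M^{1.5}$ collapse to $|R'|\ge 4M$, with no slack to spare. A square block of $2\sqrt M\times 2\sqrt M$ roots, each carrying $2\sqrt M$ leaves sharing a common set of split points, makes both $t_{a,d}\le\sqrt{\rho_a\gamma_d}$ and the Cauchy--Schwarz step tight, confirming that the constant $4M$ in the conclusion cannot be improved under these hypotheses.
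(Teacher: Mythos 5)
Your proof is correct, but it takes a genuinely different route from the paper's. The paper groups the leaves of $L'$ by the $W$-cover set containing their predecessor pair (equivalently, by split point): writing $a_i^2/4$ for the number of leaves of $L'$ whose two predecessors lie in $W_i$, it combines Lemma~\ref{lem:coverproperty} with the at-most-$4M$-roots hypothesis to get $a_i^2/4\le 4M$, uses the quadratic pair count $|R'\cap W_i|^2/4$ to get $a_i\le|R'\cap W_i|$, and then finishes with the extremal sum-of-squares estimate (Lemma~\ref{lem:mathtrick}) and the fact that each $R$-vertex lies in at most two sets $W_i$. You instead group the leaves by their root: your key observation that all leaves of a fixed root $v_{a,d}$ draw distinct left predecessors from the single row $r_a$ and distinct right predecessors from the single column $c_d$ gives $t_{a,d}\le\min\{\rho_a,\gamma_d\}\le\sqrt{\rho_a\gamma_d}$, after which one Cauchy--Schwarz application over the at most $4M$ roots, together with the partition identity $\sum_a\rho_a=\sum_d\gamma_d=|R'|$, closes the argument with the same constants. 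Your route is more elementary and self-contained: it needs neither the $W$-cover with its interaction-counting Lemma~\ref{lem:coverproperty} nor the auxiliary Lemma~\ref{lem:mathtrick}, and the per-root row/column observation adapts naturally to the CYK refinement, where each grammar root likewise fans out over one fixed row and one fixed column of $\mathit{VR}$-vertices. What the paper's heavier setup buys is reuse: the $W$-cover and interaction machinery is invoked again in the recomputation bound (case (c) of Theorem~\ref{thm:lwbrc}) and in the CYK analysis (Lemmas~\ref{lem:cykcoverproperty} and~\ref{lem:cykpreds}), so the setup cost is amortized across several results.

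One caveat on your closing tightness remark: in the $2\sqrt{M}\times 2\sqrt{M}$ block construction the validity constraint $a\le s<d$ forces the left-predecessor set $\{v_{a,s}\}$ and the right-predecessor set $\{v_{s+1,d}\}$ to be disjoint, so that example has $|R'|=4M+4M=8M$. It does make the per-root bound and the Cauchy--Schwarz step individually tight, but it only shows the conclusion cannot be pushed beyond $8M$; it does not establish that the constant $4M$ is exactly unimprovable. This does not affect the proof of the lemma itself, which is sound.
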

\begin{proof}
    Consider the $W$-cover of $G$ as defined in Section~\ref{sec:DAG}. As any $R$-vertex can be a member of at most two distinct $W_i$'s we have:
    \begin{equation}\label{eq:util1}
        |R'|\geq \frac{1}{2}\sum_{i=1}^{n-1} |R'\cap W_i|.
    \end{equation}
    By definition, each vertex in $L$ has a distinct pair of $R$-vertex predecessors which, by construction, are both included in exactly one set $W_i$. Let $\frac{a_i^2}{4}$ be the number of vertices from $L'$ with both predecessors in $W_i$. By the assumptions we have:
    \begin{equation}\label{eq:util2}
        \sum_{i=1}^{n-1} \frac{a_i^2}{4}\geq 8M^{1.5}
    \end{equation}
    
  By Lemma~\ref{lem:coverproperty}, for any $W_i$, each interacting pair of $R$-vertices in it are the predecessors of a distinct $L$-vertex belonging to the tree sub-CDAG of a distinct $R$-vertex. From the assumption that vertices in $L'$ belong to at most $4M$ distinct roots, we have:
    \begin{equation}\label{eq:util3}
        \max_{i=1,\ldots,n-1} \{\frac{a_i^2}{4}\} \leq 4M,
    \end{equation}
    Furthermore, from Lemma~\ref{lem:coverproperty}, we know that vertices in $|R'\cap W_i|$ can interact to produce at most $|R'\cap W_i|^2/4$ vertices in $L'$. Thus,
    
    \begin{equation}\label{eq:util4}
        \sum_{i=1}^{n-1} \frac{|R'\cap W_i|^2}{4}\geq \sum_{i=1}^{n-1} \frac{a_i^2}{4} \Rightarrow\sum_{i=1}^{n-1} |R'\cap W_i|\geq \sum_{i=1}^{n-1} a_i
    \end{equation}
By~\eqref{eq:util1},~\eqref{eq:util4}, ~\eqref{eq:util2},~\eqref{eq:util3}, and Lemma~\ref{lem:mathtrick} (in this order) we have:
\begin{equation*}
    |R'|\geq \frac{1}{2}\sum_{i=1}^{n-1} |R'\cap W_i|\geq \frac{1}{2}\sum_{i=1}^{n-1} a_i \geq \frac{1}{2}\frac{32M^{1.5}}{\sqrt{16M}}\geq 4M. 
\end{equation*}   
\end{proof}
Lemma~\ref{lem:keycoverpropertynr} is a crucial component of our lower bound analysis as it will allow us to claim that in order to compute a set of $L$-vertices, it is necessary to access a number of $R$ vertices, which, due to the non-recomputation assumption, must either be in the cache or loaded into it using a \texttt{read} \io{} operation.

\begin{theorem} \label{thm:lwb}
    For any CDAG $G=(V,E)\in\mathcal{G}(n)$ let $\mathcal{A}$ denote an algorithm corresponding to it. The \io{} complexity of an Algorithm $\mathcal{A}$ when run without recomputing any intermediate value using a cache memory of size $M$ and where for each I/O operation it is possible to move up to $B$ memory words stored in consecutive memory locations from the cache to slow memory or vice versa, is:
    \begin{equation}
        IO_{G}\mybraces{n,M,B}\geq \myMax{\frac{\mybraces{n^3-n}}{16\sqrt{M}}-\frac{n\mybraces{n+1}}{2}-3M,n}\frac{1}{B}
    \end{equation}
\end{theorem}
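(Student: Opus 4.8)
The plan is to prove the word-level statement first — treating each transferred word as one \io{} operation, i.e. taking $B=1$ — and to recover the $1/B$ factor only at the very end, using the fact that a single block \io{} operation moves at most $B$ words, so any lower bound of $X$ on the number of transferred words implies a lower bound of $X/B$ on the number of \io{} operations. Throughout I would track only \texttt{read} operations, since the total \io{} is at least the number of reads. Fixing an arbitrary recomputation-free schedule and analysing it in the red-blue pebble game, I would partition the schedule into consecutive \emph{phases}, with boundaries chosen so that within each phase the quantity (number of words read) $+$ (number of $R$-vertices computed for the first time) equals exactly $3M$, the last phase being possibly smaller. The purpose of this budget is that the set of $R$-vertices ever resident in the cache during a phase — those present at its start ($\le M$), those brought in by a \texttt{read} ($\le$ words read), and those computed during it — has size strictly below $M+3M=4M$.

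Consequently, the $R$-vertices that can serve as predecessors of the $L$-vertices computed in a phase number fewer than $4M$; I would likewise argue that the distinct roots to which those $L$-vertices belong number fewer than $4M$ (this is the step demanding the most care, discussed below). With both the predecessor set and the root set below $4M$, I would apply Lemma~\ref{lem:keycoverpropertynr} in contrapositive form: if some phase computed $8M^{1.5}$ or more $L$-vertices, then a subset $L'$ of exactly $8M^{1.5}$ of them would still span at most $4M$ roots and so, by the lemma, would require at least $4M$ distinct predecessor $R$-vertices, contradicting the residency bound. Hence each phase computes fewer than $8M^{1.5}$ $L$-vertices.

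Since $|L|=\frac{n^3-n}{6}$ and, under no recomputation, every $L$-vertex is computed exactly once, the number of phases $h$ satisfies $h \ge |L|/(8M^{1.5}) = \frac{n^3-n}{48M^{1.5}}$. Summing the per-phase budget over the $h-1$ full phases gives (words read) $+$ ($R$-vertices computed) $\ge 3M(h-1)$, and because the schedule never recomputes, the total number of $R$-vertices computed equals $|R|=\frac{n(n+1)}{2}$. Therefore the number of words read is at least $3M(h-1)-|R| \ge \frac{n^3-n}{16\sqrt{M}}-\frac{n(n+1)}{2}-3M$, where the constants are arranged precisely so that $3M\cdot\frac{n^3-n}{48M^{1.5}}=\frac{n^3-n}{16\sqrt{M}}$. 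Dividing by $B$ yields the first term inside the maximum, and the observation that the $n$ input values must each be read at least once gives the $n/B$ term, completing the bound $IO_{G}\mybraces{n,M,B}\ge \myMax{\frac{n^3-n}{16\sqrt{M}}-\frac{n(n+1)}{2}-3M,n}\frac{1}{B}$.

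The arithmetic and the $B$-reduction are routine; the delicate point, which I expect to be the main obstacle, is justifying that the in-phase $L$-vertices span fewer than $4M$ distinct roots. Because the aggregation tree of a subproblem may have an arbitrary shape and an $L$-vertex may be computed and written to slow memory long before its root is aggregated, a root need not be resident while one of its $L$-vertices is produced. One must therefore either argue that $L$-vertices whose roots are not represented in the cache during a phase are precisely those forcing extra \io{} (so that they are already charged elsewhere), or fold this into the phase budget so that the count of first-time $R$-vertex computations absorbs it and still yields the $-|R|$ correction through the no-recomputation hypothesis. This interplay between predecessor residency, root residency, and the recomputation-free assumption is the crux on which the whole argument turns.
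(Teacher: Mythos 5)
Your plan is the natural dual of the paper's argument (fix a per-phase budget, bound the work per phase, instead of fixing the work per segment and bounding the I/O per segment), and your closing arithmetic is consistent with the stated constants. However, the step you flag as ``delicate'' is not merely delicate --- as stated, with a budget that counts only \texttt{read} operations plus first-time $R$-vertex completions, it is false, and so is the per-phase bound you derive from it. Concretely, take $i=\lfloor n/2\rfloor$ and consider a no-recomputation schedule that holds $M/2$ previously computed values of $c_i$ in cache throughout a phase and streams values of $r_{i+1}$ through the remaining cache space one at a time: every vertex of $c_i$ interacts with every vertex of $r_{i+1}$, so each single \texttt{read} of an $r_{i+1}$ value enables the computation of $M/2$ new $L$-vertices, and by Lemma~\ref{lem:coverproperty} these $L$-vertices belong to pairwise \emph{distinct} roots. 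A phase whose read budget is $\Theta(M)$ therefore computes $\Theta(M^2)$ $L$-vertices spanning $\Theta(M^2)$ distinct roots while completing essentially no root --- far more than your claimed ceilings of $4M$ roots and $8M^{1.5}$ $L$-vertices. The schedule evades your accounting because its cost lies entirely in \texttt{write} operations: each of those $\Theta(M^2)$ $L$-values is a singleton partial accumulator of a distinct root and must be evicted to slow memory (and re-read later during aggregation). Since you track only reads, that cost has no ``elsewhere'' to be charged to; within your framework it is simply invisible, and your claimed phase count $h \ge |L|/(8M^{1.5})$ does not follow.

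This is exactly why the paper's proof is a two-case argument per segment rather than a single residency argument: when the $L$-vertices of a segment belong to at least $4M+1$ distinct roots, it charges \texttt{write} operations (each active-but-incomplete root forces a partial accumulator to occupy the cache at the segment's end or be written out, giving $g_i \ge 4M+1-M-h_i$), and only when they belong to at most $4M$ roots does it invoke Lemma~\ref{lem:keycoverpropertynr} and charge \texttt{read} operations for the $\ge 4M-h_i$ predecessors not computed in the segment. Your dual phrasing can be repaired by folding writes into the budget --- end a phase when $(\text{reads})+(\text{writes})+(\text{completed }R\text{-vertices})$ reaches $3M$. Then (i) the $R$-values ever resident in a phase number at most $M+3M$, and (ii) the active roots number at most $(\text{completions}) + (\text{accumulators in cache at phase end}) + (\text{writes}) \le 3M+M$, after which your contrapositive application of Lemma~\ref{lem:keycoverpropertynr} does yield the per-phase bound and the rest of your computation goes through, since $\sum_i h_i = |R|$ still absorbs the completion counts. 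You would also need to repair a secondary off-by-one: both quantities above come out as $\le 4M$ while the lemma concludes $\ge 4M$, so the contradiction needs strict inequality somewhere (e.g., a budget of $3M-1$), a constant-level adjustment the direct per-segment bound of the paper never has to confront. As written, though, the reads-only accounting is a genuine gap, not a detail.
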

\begin{proof}
    We prove the result for $B=1$. The result then trivially generalizes for a generic $B$.

The fact that $IO_{G}(n, M,1)\geq n$ follows trivially from our assumption that input is initially stored in slow memory and therefore must be loaded into the cache at least once using $n$ \texttt{read} \io{} operations.

Let $\mathcal{C}$ be any computation schedule for the sequential execution of $\mathcal{A}$ using a cache of size $M$ in which no value corresponding to any vertex of $G$ is computed more than once. We partition $\mathcal{C}$ into non-overlapping segments  $\mathcal{C}_1,\mathcal{C}_2,\ldots$ such that during each $\mathcal{C}_i$ exactly $8M^{1.5}$ values corresponding to distinct L-vertices, denoted as $L_i$, are evaluated. Since $|L| = \frac{n^3}{6}-\frac{n}{6}$  there are $\left\lfloor\mybraces{\frac{n^3}{6}-\frac{n}{6}}\frac{1}{8M^{1.5}}\right\rfloor$ such segments.

 Given $L_i$, we refer to the set of $R$-vertices to whom at least one of the vertices in $L_i$ belongs as $R_i$. These correspond to vertices which are \emph{active} during $\mathcal{C}_i$. For each interval, we denote as $h_i$ the number of $R$-vertices whose value is completely evaluated during $\mathcal{C}_i$. As no value is computed more than once $\sum_{i} h_i=|R|=n\mybraces{n+1}/2$. Let $g_i$ denote the number of \io{} operations executed during $\mathcal{C}_i$.
 We consider two cases:
\begin{itemize}
    \item The vertices of $L_i$ belong to at least $4M+1$ distinct $R$-vertices: For each of the  $4M+1-h_i$ values corresponding to $R$-vertrices that were \textit{partially computed} in $\mathcal{C}_i$, a partial accumulator of the values corresponding to their leaves computed during $\mathcal{C}_i$ must either be in the cache at the end of $\mathcal{C}_i$ or must be written into slow memory by means of a \texttt{write} \io{} operation. Hence, $g_i \geq 4M+1-M-h_i$.
    \item The vertices of $L_i$ belong to at most $4M$ distinct $R$-vertices: By Lemma~\ref{lem:keycoverpropertynr},  the vertices in $L_i$ have at least $4M$  $R$-vertex predecessors, of whom at most $h_i$ are computed during $\mathcal{C}_i$. As in the considered schedule's values are not computed more than once, each of the remaining $4M -h_i$ values corresponding to predecessor vertices of $L_i$ must either be in the cache at the beginning of $\mathcal{C}_i$, or read into the cache by means of a \texttt{}{read} \io{} operation. Hence, $g_i \geq 4M-M-h_i.$
\end{itemize}
    The overall \io{} requirement of the algorithm can be obtained by combining the cost of each non-overlapping segment: 
    \begin{equation*}
        \hspace*{-1.5mm}
        IO_{G}\mybraces{n,M,1} \geq \sum_{i=1}^{\left\lfloor\frac{n^3-n}{48M^{1.5}}\right\rfloor}g_i\geq \sum_{i=1}^{\left\lfloor\frac{n^3-n}{48M^{1.5}}\right\rfloor} 3M -h_i \geq \left\lfloor\frac{n^3-n}{48M^{1.5}}\right\rfloor 3M-\sum_{i=1}^{\frac{n^3-n}{48M^{1.5}}} h_i\geq \frac{\mybraces{n^3-n}}{16\sqrt{M}}-3M-|R|\nonumber.
    \end{equation*}
\end{proof}
By Theorem~\ref{thm:lwb}, we have that for $M\leq cn^2$, for a sufficiently small constant value $c>0$, computations of CDAGs in $\mathcal{G}\mybraces{n}$ in which no value is recomputed require $\Omega\mybraces{\frac{n^3}{\sqrt{M}B}}$ \io{} operations.

\subsection{\io{} lower bound for computations with allowed recomputation}\label{sec:lwbrec}
We extend our \io{} lower bound analysis to general computations in which values \emph{may} be recomputed by combining it with the \emph{dominator} \io{} lower bound technique by Hong and Kung\cite{jia1981complexity}.
\begin{definition}[\cite{jia1981complexity}]\label{def:dominator}
Given $G=(V,E)$, a set $D \subseteq V$ is a \emph{dominator set} for $V'\subseteq V$ if every path directed from any input vertex of $G$ to a vertex in
$V'$ contains a vertex of $\dom$. 
\end{definition}
Dominator sets of $R$-vertices of a CDAG $G\in\mathcal{G}\mybraces{n}$ must satisfy the following property:
\begin{lemma}\label{lem:newtreeinternaldominaotor}
Given $G=(V,E)\in\mathcal{G}(n)$, let $A=\{v_{i_1, j_1},v_{i_2, j_2},\ldots\}$ be a subset of $R$-vertices where  $v_{i_l, j_k}$ is the $R$-vertex corresponding to subproblem $\mathcal{S}(i_k, j_k)$. Any dominator set $D$ of $A$ has minimum cardinality $|D|\geq \min\left(|A|,\min{j_k-i_k}\right)$
\end{lemma}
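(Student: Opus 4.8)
The plan is to lower-bound $|D|$ by exhibiting many vertex-disjoint paths that $D$ must hit. I will use the elementary principle that if there are $t$ pairwise vertex-disjoint directed paths, each running from an \emph{input} vertex of $G$ to a vertex of $A$, then $|D|\ge t$: by Definition~\ref{def:dominator} every such path contains a vertex of $D$, and since the paths are vertex-disjoint these $t$ vertices are distinct. Recalling that the input vertices of $G$ are exactly the diagonal roots $v_{m,m}$ (the size-one subproblems have empty tree-DAGs and hence no predecessors), it therefore suffices to construct $p:=\min\mybraces{|A|,\min_k(j_k-i_k)}$ such disjoint input-to-$A$ paths. Write $s=\min_k(j_k-i_k)$.

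First I would set up the path skeletons. For any $i\le m\le j$ there is a monotone \emph{interval-expansion} route from $v_{m,m}$ to $v_{i,j}$: the vertex $v_{a,b}$ is, through a single leaf, a predecessor of both $v_{a-1,b}$ and of $v_{a,b+1}$, so one reaches $v_{i,j}$ from $v_{m,m}$ by a sequence of intervals that starts at $[m,m]$ and, at each step, decrements the left endpoint or increments the right endpoint until $[i,j]$ is reached. Besides the $R$-vertices $v_{a,b}$ on this interval sequence, such a route uses only internal and leaf vertices of the tree-DAGs of the traversed roots, and these trees are pairwise vertex-disjoint; a monotone route moreover never returns to the diagonal after its first step. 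Hence two expansion routes are vertex-disjoint \emph{iff} their interval sequences share no common $R$-vertex $v_{a,b}$.

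Next I would fix the $p$ endpoints. Select $p$ roots of $A$ (all of $A$ when $|A|\le s$) and assign to the $k$-th chosen root a distinct \emph{entry index} $m_k\in[i_k,j_k]$. Such a system of distinct representatives exists by Hall's theorem: every chosen interval contains $j_k-i_k+1\ge s+1>p$ integer indices, so the union of any $q\le p$ of them has at least $s+1>q$ points. The entry inputs $v_{m_k,m_k}$ are then distinct, and each path terminates at a distinct root, so the endpoints of the paths never coincide.

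The main obstacle is to route these $p$ expansion paths so that no $R$-vertex $v_{a,b}$ lies on two interval sequences. Viewing an interval $[a,b]$ as the lattice point $(-a,b)$, each route becomes a north-east monotone staircase from a source on the anti-diagonal $\{(-m_k,m_k)\}$ to the sink $(-i_k,j_k)$, and pairwise disjointness is precisely a \emph{non-crossing} condition on these staircases. I expect the delicate point to be that, unlike in a free Lindström--Gessel--Viennot setting, the routing must respect the interval constraints $m_k\in[i_k,j_k]$, so one cannot simply uncross by swapping path tails (this would reassign a source to a root whose interval need not contain it). The resolution I would pursue is to choose the representatives compatibly rather than to repair crossings afterwards: since the sources may be placed anywhere on the anti-diagonal inside their intervals, I would order the chosen roots (e.g.\ by left endpoint, breaking ties by right endpoint) and assign the entry indices increasingly so that sources and sinks are sorted consistently, whence the monotone staircases can be drawn without sharing any $v_{a,b}$. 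Establishing that such a compatible system of distinct representatives always exists--equivalently, that the interval constraints never force a crossing--is the heart of the argument; granting it, the $p$ resulting input-to-$A$ paths are pairwise vertex-disjoint and yield $|D|\ge p=\min\mybraces{|A|,\min_k(j_k-i_k)}$.
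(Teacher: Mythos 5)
Your reduction to vertex-disjoint input-to-$A$ paths is sound (it is the easy direction of Menger's theorem), and the Hall-type selection of distinct entry indices is the easy part; the genuine gap is exactly where you flag it, and it is not repairable within your framework. The problem is your restriction to \emph{unit-step} interval-expansion routes, i.e.\ paths whose consecutive $R$-vertices are $v_{a,b}\to v_{a-1,b}$ or $v_{a,b}\to v_{a,b+1}$. In the actual CDAG, $v_{a,b}$ feeds a leaf of \emph{every} $v_{a',b}$ with $a'<a$ and of every $v_{a,b'}$ with $b'>b$; discarding these long-range edges strictly reduces connectivity, and the disjoint path systems you need then simply fail to exist. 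Concretely, take $A=\{v_{1,n},v_{2,n},v_{1,n-1}\}$ with $n\ge 5$, so the claimed bound is $\min\left(3,n-2\right)=3$. Any unit-step route ending at $v_{1,n}$ has penultimate $R$-vertex $v_{2,n}$ or $v_{1,n-1}$ --- both of which are the sinks of the other two paths --- so no choice of entry indices and no non-crossing routing can yield three pairwise vertex-disjoint staircases; your scheme caps out at two. (With the long-range edges three disjoint paths do exist, e.g.\ $v_{3,3}\to v_{3,n}\to v_{1,n}$ alongside $v_{2,2}\to v_{2,n}$ and $v_{1,1}\to v_{1,n-1}$, which is why the lemma is nevertheless true.) So the ``heart of the argument'' you defer is not merely unproven: as you have set it up, it is false.

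For comparison, the paper's proof avoids multi-sink routing altogether. It splits into two cases: if $D$ contains a vertex of the tree sub-CDAG of every root in $A$, then $|D|\ge|A|$ because those trees are pairwise vertex-disjoint; otherwise some $v_{i,j}\in A$ has a tree untouched by $D$, and one builds $j-i$ vertex-disjoint paths into the leaves of that \emph{single} tree --- path $k$ runs from the input $v_{i+k,i+k}$ through the tree of $v_{i,i+k}$ (a column jump of exactly the kind your model forbids) to the leaf $l_k$ --- and $D$ must cut each of these disjoint prefixes, since it cannot cut inside the untouched tree of $v_{i,j}$; this gives $|D|\ge j-i\ge \min_k\left(j_k-i_k\right)$. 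If you want to keep your global routing strategy you must at minimum work with the full edge set including row/column jumps, but the single-root argument is both shorter and sidesteps the compatible-representatives/non-crossing question entirely.
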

\begin{proof}



Suppose for every $v_{i, j} \in A$, $D$ contains a vertex internal to the tree-CDAG rooted at $v_{i_k, j_k}$. As the tree-CDAGs of different R-vertices are disjoint, this would imply $|D| \ge |A|$.

Otherwise, there must be some $v_{i, j} \in A$, such that $D$ does not contain any of its internal vertices. Let $L_{i, j} = \{l_0, l_1, \dots, l_{j - i - 1}\}$ denote the set of $L$-vertices corresponding to $v_{i, j}$. It is sufficient to show that there are $j - i$ vertex disjoint paths from the inputs of $G$ to vertices in $L_{i, j}$. By construction, each leaf $l_k \in L_{i, j}$ has a predecessor belonging to a unique column $c_{i+k}$. In turn, each of these predecessors is connected through its internal tree-CDAG to the input vertex in the same column ($v_{i+k, i+k}$). In this manner, $j - i$ paths  $\mybraces{v_{i+k, i+k} \rightarrow v_{i, i+k} \rightarrow l_k}$ for $0 \le k \le j - i - 1$ are obtained. Since each path contains $R$-vertices from a different column, and the tree-CDAGs of different R-vertices are disjoint, the paths outlined above are vertex-disjoint.

\end{proof}

To simplify our analysis, we only consider \emph{parsimonious execution schedules} such that each time an intermediate result is computed, the result is then used to compute at least one of the values of which it is an operand before being removed from the memory (either the cache or slow memory).
Any non-parsimonious schedule $\mathcal{C}$ can be reduced to a parsimonious schedule $\mathcal{C}'$ by removing all the steps that violate the definition of parsimonious computation. $\mathcal{C}'$ has therefore less computational or \io{} operations than $\mathcal{C}$. Hence, restricting the analysis to parsimonious computations leads to no loss of generality.

\begin{theorem} \label{thm:lwbrc}
    For any CDAG $G=(V,E)\in\mathcal{G}(n)$ let $\mathcal{A}$ denote an algorithm corresponding to it. The \io{}-complexity of $\mathcal{A}$ when run using a cache memory of size $M$ and where for each I/O operation it is possible to move up to $B$ memory words stored in consecutive memory locations from cache to slow memory or vice versa, is:
    \begin{equation}
        IO_{G}\mybraces{n,M,B}\geq \mybraces{\myMax{\frac{\mybraces{n-6M-1}^3}{18\sqrt{M}}-2M,n}\frac{1}{B}}
    \end{equation}
\end{theorem}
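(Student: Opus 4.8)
The plan is to combine the $W$-cover counting machinery developed for the no-recomputation bound with the dominator technique of Definition~\ref{def:dominator}, using Lemma~\ref{lem:newtreeinternaldominaotor} to neutralize recomputation. As justified in the discussion preceding the statement, I restrict to parsimonious schedules. The first move is to discard the part of the CDAG near the diagonal: I keep only the \emph{interior} $L$-vertices, i.e.\ those whose two $R$-predecessors $v_{i,i+k}$ and $v_{i+k+1,j}$ both have width at least $3M$ (equivalently $3M\le k\le j-i-1-3M$). Summing $j-i-6M$ over all roots with $j-i>6M$ leaves on the order of $\tfrac{(n-6M-1)^3}{6}$ interior $L$-vertices, which is exactly where the factor $(n-6M-1)$ in the statement originates; the purpose of this restriction is that \emph{every} $R$-predecessor of a retained $L$-vertex is wide, so Lemma~\ref{lem:newtreeinternaldominaotor} always returns its $|A|$-branch rather than its (small) width-branch.

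Next I partition the schedule $\mathcal{C}$ into consecutive segments $\mathcal{C}_1,\mathcal{C}_2,\ldots$, each performing $\Theta(M)$ \io{} operations. For a segment $\mathcal{C}_i$, the set $D_i$ consisting of the (at most $M$) values resident in cache at the start of $\mathcal{C}_i$ together with the values \texttt{read} during $\mathcal{C}_i$ is a dominator, in the sense of Definition~\ref{def:dominator}, for every vertex computed during $\mathcal{C}_i$; moreover $|D_i|\le M+g_i=\Theta(M)$, where $g_i$ is the number of \io{} operations in $\mathcal{C}_i$. The goal is to show that each segment can produce only $O(M^{1.5})$ distinct interior $L$-vertices, after which the bound follows by summation.

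To bound the interior $L$-vertices computed in $\mathcal{C}_i$ I reuse the dichotomy of Theorem~\ref{thm:lwb}. Suppose first that these $L$-vertices belong to at most $4M$ distinct roots. If there were $8M^{1.5}$ of them, Lemma~\ref{lem:keycoverpropertynr} would force at least $4M$ distinct $R$-predecessors; since all of these are wide, Lemma~\ref{lem:newtreeinternaldominaotor} applied to this predecessor set gives $|D_i|\ge \min(4M,3M)=3M$, whence $g_i\ge 2M$. In the complementary case the interior $L$-vertices touch more than $4M$ distinct roots; here the operand count alone is insufficient (a single $W_i$ can host quadratically many interacting pairs, cf.\ Lemma~\ref{lem:coverproperty}), so I additionally charge the \emph{accumulators}: by parsimony each computed leaf is aggregated toward its root, every root advanced but not completed inside $\mathcal{C}_i$ must retain a partial accumulator in cache or slow memory, and any wide root actually \emph{recomputed} inside $\mathcal{C}_i$ is itself dominated by $D_i$ and hence again governed by Lemma~\ref{lem:newtreeinternaldominaotor}. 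Combining the operand bound (Lemma~\ref{lem:coverproperty}, which keeps each $W_i$ from contributing more than $|X|^2/4$ pairs) with this accumulator bound yields the Loomis--Whitney-type estimate that at most $\Theta(M^{1.5})$ distinct interior $L$-vertices are produced per segment, matching the behaviour of matrix-multiplication-like CDAGs.

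Finally I sum: the number of segments is at least $\big(\tfrac{(n-6M-1)^3}{6}\big)\big/\Theta(M^{1.5})$, and since each contributes $\Theta(M)$ \io{} operations the total is $\Omega\!\big(\tfrac{(n-6M-1)^3}{\sqrt M}\big)$; tracking the constants gives the claimed $\tfrac{(n-6M-1)^3}{18\sqrt M}$, with the additive $-2M$ absorbing the final partial segment and the $M+g_i$ slack in $|D_i|$. The trivial branch $n/B$ of the maximum is obtained exactly as in Theorem~\ref{thm:lwb}, since the $n$ inputs must be read at least once, and the passage from $B=1$ to general $B$ is by dividing by $B$. The hard part is the many-roots case: making the accumulator/dominator accounting tight enough to recover the $M^{1.5}$ (rather than $M^2$) per-segment bound in the presence of recomputation, which is precisely what the width-$3M$ cutoff — and hence the $(n-6M-1)$ correction to $n$ — is designed to enable.
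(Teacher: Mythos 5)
Your proposal is correct and takes essentially the same route as the paper's proof: the restriction to $L$-vertices whose two $R$-predecessors both have width at least $3M$ (the paper's set $L(3M)$, yielding the $(n-6M-1)^3/6$ count), the use of parsimonious schedules, the dominator argument via Lemma~\ref{lem:newtreeinternaldominaotor}, and the three-way case analysis (few active roots handled by a modified Lemma~\ref{lem:keycoverpropertynr} plus the dominator bound, many active roots handled by charging partial accumulators, and fully-recomputed wide roots handled again by the dominator lemma) are exactly the paper's ingredients. The only difference is cosmetic: you partition the schedule into segments of $\Theta(M)$ \io{} operations and bound the interior $L$-vertices per segment, while the paper partitions into segments computing exactly $6M^{1.5}$ vertices of $L(3M)$ and bounds the \io{} per segment — dual formulations of the same Hong--Kung argument (note that to recover the stated constant $1/18$ you need the paper's $6M^{1.5}$/$3M$-predecessor variant rather than the $8M^{1.5}$/$4M$ threshold you quote).
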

\begin{proof}
 We prove the result for the case $B=1$. The result then trivially generalizes for a generic $B$. The fact that $IO_{\alg}(n,M,1)\geq n$ follows trivially from our assumption that input is initially stored in slow memory and therefore must be loaded into the cache at least once using $n$ \texttt{read} \io{} operations.

Let $L(x)$ be the set of $L$-vertices in $G$ whose predecessor $R$-vertices both correspond to some subproblems $S\left(i,j\right)$ such that $x\leq j-i$. By the construction, 
for each subproblem $S\left(i,j\right)$ the corresponding tree sub-CDAG has $\max\{0,j-i-2x\}$ such $L$-vertices. Since the sub-CDAGs corresponding to each $S\left(i,j\right)$ do not share vertices, we have:
\begin{align}
    |L\mybraces{x}|&=\sum_{j-i=1}^{n-1}\mybraces{n-\mybraces{j-i}}\myMax{0,j-i-2x}\nonumber\\
    &=\sum_{j-i=2x+1}^{n-1}\mybraces{n-\mybraces{j-i}}\mybraces{j-i-2x}\nonumber\\
     &=\sum_{y=1}^{n-1-2x}\mybraces{n-2x-y}y\nonumber\\
    &= \frac{\mybraces{n-2x-1}\mybraces{n-2x}\mybraces{n-2x+1}}{6}\nonumber
\end{align}
\begin{equation}
        |L\mybraces{x}|> \frac{\mybraces{n-2x-1}^3}{6}
\end{equation}

Let $\mathcal{C}$ be any computation schedule for the sequential execution of $\mathcal{A}$ using a cache of size $M$. We partition $\mathcal{C}$ into segments  $\mathcal{C}_1,\mathcal{C}_2,\ldots$ such that during each $\mathcal{C}_l$ exactly $6M^{1.5}$ values corresponding to distinct vertices in $L\mybraces{3M}$ are computed from their operands (\myIe{} not read from the cache). Let $L_l$ denote the set of vertices corresponding to these values. 
 Below we argue that the number $g_l$ of \io{} operations executed during each of the $\left\lfloor|L\mybraces{3M}|/6M^{1.5}\right\rfloor$ non-overlapping segments $\mathcal{C}_l$ is at least $2M$, from whence the statement follows.
 
Let $A$ denote the set of vertices corresponding to the values computed during $\mathcal{C}_l$. Clearly $L_l\subseteq A$. Let $D$ denote the set of vertices corresponding to either the at most $M$ values that are either stored in the cache or the values that are read into the cache during $\mathcal{C}_l$ by means of \texttt{read} \io{} operations. Thus $g_l= |D|-M$. In order to be possible to compute the values corresponding to vertices in $A$ during $\mathcal{C}_l$ there must be no paths from input vertices of $G$ to vertices in $A$, that is, $D$ must be a dominator set of $A_l$.

 We refer to the set of $R$-vertices to whom at least one of the vertices in $L_l$ belongs as $R_l$. These correspond to vertices which are \emph{active} during $\mathcal{C}_l$. We consider three, mutually exclusive, cases:
\begin{itemize}
    \item[\textbf{(a)}] At least one of the values corresponding to a vertex $v'$ in $R_l$ is \emph{entirely computed} during $\mathcal{C}_l$. That is no accumulator of the values corresponding to the leaves of $v'$ is in the cache or it is loaded in it by means of an \texttt{read} \io{} operation during $\mathcal{C}_l$. Thus, $v'\in A$ and $D$ do not include any non-leaf vertex internal to the tree-sub-CDAG rooted in $v'$. Since $L_l\subseteq L\mybraces{3M}$, by definition, $v$ corresponds to a subproblem $S\mybraces{i,j}$ such that $j-i\geq 3M$. From Lemma~\ref{lem:newtreeinternaldominaotor}, $|D|\geq 3M$ which implies $g_j\geq 2M$.
    \item[\textbf{(b)}] $|R_l|> 4M$: As none of the values in $R_l$ is entirely computed during $\mathcal{C}_l$ (case (a)), and as we are considering parsimonious computations, for each value $R_l$ at least one partial accumulator of its value is either in the cache at the beginning (resp., end) of $\mathcal{C}_l$ or is loaded into the cache during (resp., saved to the slow memory) during it. Hence $g_l\geq 4M-2M$. 
    \item[\textbf{(c)}] $|R_l|\leq 4M$: By a simple modification of Lemma~\ref{lem:keycoverpropertynr},  the vertices in $L_l$ have at least $3M$  distinct $R$-vertices predecessors. Since $L_l\subseteq L\mybraces{3M}$ by definition, all such $R$ vertices corresponds to subproblems $S\mybraces{i,j}$ where $j-i\geq 3M$. As the values of these vertices are either in the cache at the beginning of $\mathcal{C}_l$, or \texttt{read} to the cache during $\mathcal{C}_l$, or computed during $\mathcal{C}_l$, $D$ must be dominator set for the corresponding vertices. From Lemma~\ref{lem:newtreeinternaldominaotor}, we can thus conclude $|D|\geq 3M$ from whence $g_l\geq 2M$.
\end{itemize}
\end{proof}

Thus, by Theorem~\ref{thm:lwbrc}, we have that for $M\leq c_1n$, where $c_1$ is a sufficiently small positive constant value, computations of CDAGs in $\mathcal{G}\mybraces{n}$ require $\Omega\mybraces{\frac{n^3}{\sqrt{M}B}}$ \io{} operations. This is in contrast with the result given for schedules with no recomputation given in Theorem~\ref{thm:lwb} which exhibit \io{} complexity $\Omega\mybraces{\frac{n^3}{\sqrt{M}B}}$ for values of $M$ up to $c_2n^2$ for an appropriately chosen constant $c_2$. 


\subsection{On the tightness of the bounds} \label{sec:tightness}
In \cite{cherng2005cache}, Cherng and Lander present \texttt{Valiant's DP-Closure Algorithm}, which computes algorithms following Prototype Algorithm~\ref{alg:proto} \emph{cache-obliviously} and without recomputation. This Algorithm utilizes a divide and conquer approach, making recursive calls to smaller subproblems (e.g. to compute $S\mybraces{1, n}$, first compute $S\mybraces{1, n/2}$ and so forth). The algorithm also makes recursive calls to two other functions implementing \texttt{Valiant's Star Algorithm} and \texttt{Matrix Multiply and Accumulate}, which use computed previously computed R-vertices to efficiently compute L-vertices. 
A more detailed presentation along with pseudocode can be found in Appendix~\ref{app:valiant}.

When $M < |R| = \frac{n\mybraces{n+1}}{2}$, Valiant's DP Closure algorithm executes $\mathcal{O}\mybraces{\frac{n^3}{ \sqrt{M}B}}$ \io{} operations, and for $M \ge |R|$ it executes $\mathcal{O}(n/B)$ \io{} operations. Thus, we can conclude that our \io{} lower bound for schedules with no recomputation provided in Theorem~\ref{thm:lwb} is asymptotically tight.

\textit{When allowing recomputation}, if $M < c_1n$, where $c_1$ is a sufficiently small positive constant, the \io{} complexity of Valiant's DP-Closure Algorithm asymptotically matches our lower bound in Theorem~\ref{thm:lwbrc}. This implies that for $M < c_1n$, allowing recomputation does not asymptotically change the required number of \io{} operations. On the other hand, if $M > 2n$, our lower bound simplifies to $\Omega(n/B)$, diverging from the complexity of Valiant's DP Closure Algorithm. In Theorem~\ref{thm:largemem-wrc}, we show the existence of an algorithm incurring $\mathcal{O}(n/B)$ \io{} operations for $M > 2n$, verifying the asymptotic tightness of our lower bound. We conclude that for $M > 2n$, allowing the recomputation of intermediate values can result in an asymptotic reduction in the \io{} complexity compared to the no-recomputation case. 

\begin{restatable}[Proof in Appendix~\ref{app:big-mem-recomp}]{theorem}{bigmemio} \label{thm:largemem-wrc} 
    Using a cache memory of size $M \ge 2n$, there exists an algorithm executing $\mathcal{A}^*$ incurring $\mathcal{O}(n/B)$ \io{} operations, where for each I/O operation it is possible to move up to $B$ memory words stored in consecutive memory locations from the cache to slow memory or vice versa.
\end{restatable}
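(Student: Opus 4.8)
The plan is to exhibit a single recomputation-based schedule for a CDAG $G\in\mathcal{G}\mybraces{n}$ whose entire working set fits within the cache, so that the only \io{} operations are the initial loading of the inputs and the writing of the final result $S\mybraces{1,n}$. Concretely, the algorithm begins by reading the $n$ input values $S\mybraces{i,i}$ into the cache (costing $\myceil{n/B}$ \texttt{read} operations) and keeps them resident throughout. It then evaluates $S\mybraces{1,n}$ by \emph{directly unfolding the recurrence} of Prototype Algorithm~\ref{alg:proto}: to obtain $S\mybraces{i,j}$ it loops over the split points $k\in\{i,\ldots,j-1\}$, recursively (re)computing the operands $S\mybraces{i,k}$ and $S\mybraces{k+1,j}$, forming $q=\texttt{COMBINE}\mybraces{S\mybraces{i,k},S\mybraces{k+1,j}}$ and folding it into the running accumulator via \texttt{AGGREGATE}. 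No subproblem value other than the inputs is ever stored for later reuse; each is instead recomputed on demand, which is exactly the regime (recomputation allowed) permitted by our model.

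The crux is to show that this schedule never spills an intermediate value to slow memory, i.e. that its peak working set stays within $M$. First I would bound the recursion depth: along any root-to-leaf path each call replaces the current interval $\mybraces{i,j}$ by either $\mybraces{i,k}$ or $\mybraces{k+1,j}$, strictly moving one endpoint inward, so the interval length drops by at least one per level and the depth is at most $n-1$; this also certifies termination. Next I would account for the memory held by the at most $n-1$ simultaneously active frames: each frame needs only its partial accumulator together with $\BO{1}$ bookkeeping (the current split index and the single operand returned by the first recursive call while the second is being evaluated). Adding the $n$ resident inputs, a careful implementation keeps the peak working set at most $2n$ words.

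Provided $M\ge 2n$, the bound above guarantees that the resident inputs and the whole recursion stack fit in the cache simultaneously, so after the initial input load no intermediate value is ever read from or written to slow memory. The computation therefore performs no further \io{} operation until the final value $S\mybraces{1,n}$ is written out with a single \texttt{write}. Summing the $\myceil{n/B}$ input reads and the $\BO{1}$ output write yields the claimed $\BO{n/B}$ \io{} complexity. Note that the schedule performs a potentially exponential amount of \emph{computation} because of the aggressive recomputation, but this is irrelevant here, as the statement concerns only the number of \io{} operations.

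The main obstacle I anticipate is the memory accounting of the second paragraph: making the per-frame overhead small enough that the total working set is genuinely at most $2n$ rather than some larger $\BO{n}$, since the threshold $M\ge 2n$ leaves no slack in the constant. The delicate points are avoiding the storage of both operands of a \texttt{COMBINE} across a recursive subcall and encoding the interval endpoints and loop index without inflating the per-level cost; resolving these is what pins the constant to $2$.
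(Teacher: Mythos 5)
Your overall plan coincides with the paper's: keep the $n$ input values resident in cache, recompute every intermediate value on demand so that the entire evaluation happens inside the cache, and pay I/O only for the initial $\lceil n/B\rceil$ input reads and the final output write. However, the step you yourself flag as the ``main obstacle'' is a genuine gap, and it is exactly the content of the paper's proof. Your accounting --- recursion depth at most $n-1$ times $\BO{1}$ words per frame --- cannot deliver the constant $2$: along a chain of descents into the \emph{second} recursive call, each active ancestor frame must simultaneously hold its partial accumulator \emph{and} the already-computed first operand, i.e.\ at least two values per frame, which together with the $n$ resident inputs gives a peak working set on the order of $3n$, exceeding $M=2n$. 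So the schedule as described does not provably fit in cache, and ``a careful implementation keeps the peak working set at most $2n$ words'' is an assertion, not a proof; with $M\ge 2n$ leaving no slack, this is precisely where the theorem lives or dies.

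The paper closes this gap with Lemma~\ref{lem:pebble-game}, proved by strong induction on the subproblem size: a CDAG in $\mathcal{G}(m)$ can be pebbled with $m$ pebbles. The crucial scheduling idea, missing from your proposal, is an \emph{ordering} of the two predecessor subproblems of each leaf (sizes $s_1+s_2=m$): the larger one (size at most $m-1$) is computed first using all $m-1$ free pebbles, a single pebble is left on its root, and only then is the smaller one (size at most $m-2$) computed with the remaining $m-2$ pebbles. This makes the inductive budget $1+\max\bigl(T(s_{\max}),\,1+T(s_{\min})\bigr)\le m$ close, giving $n$ words for the whole computation beyond the $n$ resident inputs. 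Your fixed left-to-right order can in fact also be rescued, but only by an analogous induction exploiting that whenever the held first operand is \emph{not} an input (and hence costs a cache word), the second subproblem has size at most $m-2$, which compensates for that extra word. Either way, what is needed is an invariant of the form ``a size-$m$ subproblem can be evaluated with at most $m$ cache words beyond the inputs,'' established by induction on $m$; a depth-times-per-frame-cost bound is structurally incapable of producing the tight constant.
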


\subsection{Detailed \io{} analysis for selected algorithms} \label{sec:applications}
In Appendix~\ref{app:applications}, we apply the general results in Section~\ref{sec:main} to derive asymptotically tight bounds on the \io{} complexity of matrix chain multiplication, optimal polygon triangulation, and the construction of optimal binary search trees. To do this, we map each of these algorithms onto Prototype Algorithm~\ref{alg:proto}, defining the \texttt{COMBINE}, \texttt{AGGREGATE}, and \texttt{LEAST\_OPTIMAL\_VALUE} appropriately.

\section{\io{} analysis for the Cocke-Younger-Kasami algorithm} \;\label{sec:cyk}
In this section, we present \io{} bounds for the Cocke-Young-Kasami algorithm (CYK). First, we provide some background on the algorithm. We then discuss how our lower bound analysis method must be enhanced to better capture the complexity of this algorithm. Finally, we present an upper bound. 
\subsection{Background on the CYK algorithm}
\begin{definition}[\cite{sipser1996introduction}]\label{def:cnf}
    A Context-Free Grammar is a 4-tuple $\left(V,\mathcal{R}, \Sigma, T\right)$ where
    \begin{itemize}
        \item $V$ is a finite set called the \emph{variables};
        \item $\Sigma$ is a finite set, disjoint from $V$, called the \emph{terminals};
        \item $\mathcal{R}$ is a finite relation in $
 V\times (V\cup \Sigma\cup {\epsilon} )^{*}$, where the $^*$ represents the Kleene star operation and $\epsilon$ denotes the empty string;
        \item $T\in V$ is the start variable. 
    \end{itemize}
\end{definition}

Elements of $\mathcal{R}$ are referred to as \emph{rules} and are generally given in the form $A\rightarrow w$, where $A \in V$, and $w\in(V\cup \Sigma\cup {\epsilon} )^{*}$. If $u,v,w \in (V\cup \Sigma\cup {\epsilon} )^{*}$ and if $A\rightarrow w$ is a rule of the grammar, we say that $uAv$ \emph{yields} $uwv$, written as $uAv\Rightarrow uwv$. We say that $u$ \emph{derives} $v$, written $u\Rightarrow^* v$, if $u=v$ or if there exists a sequence $u_1,u_2,\ldots,u_k$ for $k\geq 1$ such that $u_i\in (V\cup \Sigma\cup {\epsilon} )$ and $u\Rightarrow u_1 \Rightarrow u_2 \Rightarrow \ldots \Rightarrow u_k \Rightarrow v$.
The \emph{language} of a  CFG is the set of all strings $w\in \Sigma^*$ that can be derived from the starting variable $T$, that is $\{w\in \Sigma^*\ s.t.\ T\Rightarrow^* w\}$.

A CFG is in Chomsky's normal form (CNF) if every rule is in the form: $A \rightarrow BC$, or $A \rightarrow a$, or $T\rightarrow \epsilon$,
where $A$ is any variable in $V$, $B$ and $C$ can be any pair in $\left(V\setminus{S}\right)\times \left(V\setminus{S}\right)$, and $a$ can be any terminal in $\Sigma$. Any CFG can be transformed into an equivalent one in CNF through a simple iterative algorithm. We refer to the number of rules $|\mathcal{R}|$ as the \emph{size} of a CFG. We denote the set of binary rules ($A \rightarrow BC$) in a CFG with $\mathcal{R}_B \subset \mathcal{R}$, the set of terminal rules ($A \rightarrow a$) with $\mathcal{R}_T \subset \mathcal{R}$, and the number of distinct right-hand sides in $\mathcal{R}_B$ with $\Gamma$.


Given a CFG $\left(V,\Sigma, R, S\right)$ in CNF, and a string $w=w_1w_2\ldots w_n$ the Cocke–Younger–Kasami  (CYK) algorithm, henceforth referred to as $\mathcal{A}_{CYK}$, decides whether $w$ is part of the language of the CFG. We refer to the standard implementation provided in Algorithm~\ref{alg:CYK}. Let  $S\left(i,j\right)$ be the set of variables that derive the substring $w_i w_{i+1}\ldots w_j$. CYK uses a DP approach to compute all sets $S\mybraces{i,j}$ for $1\leq i\leq j\leq n$  using an adaptation of prototype algorithm $\mathcal{A}^*$.
$\mathcal{A}_{CYK}$ determines whether $w$ is a member of the language generated by the given CFG by checking whether $T\in S\mybraces{1,n}$ (\myIe{} whether $T$ derives $w$).\\

\noindent\textbf{Memory representation model:}
We assume that the variables in $V$ can be referenced by index and that such index can be stored in a single memory word.
 We represent subsets of $V$ (the results of subproblems $S\mybraces{i,j}$)) using a one-hot encoding with $\myceil{|V|/s}$ memory words, where $s$ denotes the number of bits in a memory word. Thus, given the representation of one such subset in the memory, to check whether a variable $V_i$ is a member of the subset it is sufficient to access the $(i\mod s)$-th bit of the $\lfloor i/s\rfloor$-th memory word.
 
We assume that each grammar rule is represented utilizing at most three memory words each used to represent the index of the left-hand side and at most two to represent the right-hand side (either the index of the two variables or the code for a terminal symbol).

\subsection{I/O lower bound}\label{sec:lwbcyk}
While the CYK algorithm shares the underlying structure of relations between subproblems of the prototype algorithm $\mathcal{A}^*$, it presents several non-trivial complications due to the application of the grammar rules that, in turn, require a refinement of our analysis methods and, in particular, of the CDAG construction. In this section, we will assume that the set of grammar rules $\mathcal{R}$ are such that no two rules have the same right-hand side. While this might not always hold in practice, our analysis can be applied by considering a subset of rules of the considered grammar which satisfies this condition.\\
\begin{figure}
    \centering
    \resizebox{0.5\textwidth}{!}{%
\tikzset{every picture/.style={line width=0.75pt}} 

\begin{tikzpicture}[x=0.75pt,y=0.75pt,yscale=-1,xscale=1]

\draw  [fill={rgb, 255:red, 0; green, 0; blue, 0 }  ,fill opacity=1 ] (360,210) .. controls (360,204.48) and (364.48,200) .. (370,200) .. controls (375.52,200) and (380,204.48) .. (380,210) .. controls (380,215.52) and (375.52,220) .. (370,220) .. controls (364.48,220) and (360,215.52) .. (360,210) -- cycle ;
\draw  [fill={rgb, 255:red, 0; green, 0; blue, 0 }  ,fill opacity=1 ] (280,210) .. controls (280,204.48) and (284.48,200) .. (290,200) .. controls (295.52,200) and (300,204.48) .. (300,210) .. controls (300,215.52) and (295.52,220) .. (290,220) .. controls (284.48,220) and (280,215.52) .. (280,210) -- cycle ;
\draw  [fill={rgb, 255:red, 0; green, 0; blue, 0 }  ,fill opacity=1 ] (200,210) .. controls (200,204.48) and (204.48,200) .. (210,200) .. controls (215.52,200) and (220,204.48) .. (220,210) .. controls (220,215.52) and (215.52,220) .. (210,220) .. controls (204.48,220) and (200,215.52) .. (200,210) -- cycle ;
\draw   (180,210) .. controls (180,193.29) and (229.25,179.75) .. (290,179.75) .. controls (350.75,179.75) and (400,193.29) .. (400,210) .. controls (400,226.71) and (350.75,240.25) .. (290,240.25) .. controls (229.25,240.25) and (180,226.71) .. (180,210) -- cycle ;
\draw   (210,210) -- (246,280) -- (176,280) -- cycle ;
\draw   (370,210) -- (406,280) -- (336,280) -- cycle ;
\draw  [fill={rgb, 255:red, 189; green, 16; blue, 224 }  ,fill opacity=1 ] (238,280) .. controls (238,275.58) and (241.58,272) .. (246,272) .. controls (250.42,272) and (254,275.58) .. (254,280) .. controls (254,284.42) and (250.42,288) .. (246,288) .. controls (241.58,288) and (238,284.42) .. (238,280) -- cycle ;
\draw  [fill={rgb, 255:red, 189; green, 16; blue, 224 }  ,fill opacity=1 ] (204,280) .. controls (204,275.58) and (207.58,272) .. (212,272) .. controls (216.42,272) and (220,275.58) .. (220,280) .. controls (220,284.42) and (216.42,288) .. (212,288) .. controls (207.58,288) and (204,284.42) .. (204,280) -- cycle ;
\draw  [fill={rgb, 255:red, 189; green, 16; blue, 224 }  ,fill opacity=1 ] (328,280) .. controls (328,275.58) and (331.58,272) .. (336,272) .. controls (340.42,272) and (344,275.58) .. (344,280) .. controls (344,284.42) and (340.42,288) .. (336,288) .. controls (331.58,288) and (328,284.42) .. (328,280) -- cycle ;
\draw  [fill={rgb, 255:red, 189; green, 16; blue, 224 }  ,fill opacity=1 ] (398,280) .. controls (398,275.58) and (401.58,272) .. (406,272) .. controls (410.42,272) and (414,275.58) .. (414,280) .. controls (414,284.42) and (410.42,288) .. (406,288) .. controls (401.58,288) and (398,284.42) .. (398,280) -- cycle ;
\draw  [fill={rgb, 255:red, 189; green, 16; blue, 224 }  ,fill opacity=1 ] (363,280) .. controls (363,275.58) and (366.58,272) .. (371,272) .. controls (375.42,272) and (379,275.58) .. (379,280) .. controls (379,284.42) and (375.42,288) .. (371,288) .. controls (366.58,288) and (363,284.42) .. (363,280) -- cycle ;
\draw   (176,280) -- (202,340) -- (153,340) -- cycle ;
\draw   (405,280) -- (431,340) -- (382,340) -- cycle ;
\draw  [fill={rgb, 255:red, 42; green, 139; blue, 75 }  ,fill opacity=1 ] (147,340) .. controls (147,336.69) and (149.69,334) .. (153,334) .. controls (156.31,334) and (159,336.69) .. (159,340) .. controls (159,343.31) and (156.31,346) .. (153,346) .. controls (149.69,346) and (147,343.31) .. (147,340) -- cycle ;
\draw  [fill={rgb, 255:red, 42; green, 139; blue, 75 }  ,fill opacity=1 ] (196,340) .. controls (196,336.69) and (198.69,334) .. (202,334) .. controls (205.31,334) and (208,336.69) .. (208,340) .. controls (208,343.31) and (205.31,346) .. (202,346) .. controls (198.69,346) and (196,343.31) .. (196,340) -- cycle ;
\draw  [fill={rgb, 255:red, 42; green, 139; blue, 75 }  ,fill opacity=1 ] (171,340) .. controls (171,336.69) and (173.69,334) .. (177,334) .. controls (180.31,334) and (183,336.69) .. (183,340) .. controls (183,343.31) and (180.31,346) .. (177,346) .. controls (173.69,346) and (171,343.31) .. (171,340) -- cycle ;
\draw  [fill={rgb, 255:red, 42; green, 139; blue, 75 }  ,fill opacity=1 ] (376,340) .. controls (376,336.69) and (378.69,334) .. (382,334) .. controls (385.31,334) and (388,336.69) .. (388,340) .. controls (388,343.31) and (385.31,346) .. (382,346) .. controls (378.69,346) and (376,343.31) .. (376,340) -- cycle ;
\draw  [color={rgb, 255:red, 0; green, 0; blue, 0 }  ,draw opacity=1 ][fill={rgb, 255:red, 42; green, 139; blue, 75 }  ,fill opacity=1 ] (425,340) .. controls (425,336.69) and (427.69,334) .. (431,334) .. controls (434.31,334) and (437,336.69) .. (437,340) .. controls (437,343.31) and (434.31,346) .. (431,346) .. controls (427.69,346) and (425,343.31) .. (425,340) -- cycle ;
\draw  [fill={rgb, 255:red, 42; green, 139; blue, 75 }  ,fill opacity=1 ] (401,340) .. controls (401,336.69) and (403.69,334) .. (407,334) .. controls (410.31,334) and (413,336.69) .. (413,340) .. controls (413,343.31) and (410.31,346) .. (407,346) .. controls (403.69,346) and (401,343.31) .. (401,340) -- cycle ;
\draw  [fill={rgb, 255:red, 189; green, 16; blue, 224 }  ,fill opacity=1 ] (168,280) .. controls (168,275.58) and (171.58,272) .. (176,272) .. controls (180.42,272) and (184,275.58) .. (184,280) .. controls (184,284.42) and (180.42,288) .. (176,288) .. controls (171.58,288) and (168,284.42) .. (168,280) -- cycle ;
\draw  [fill={rgb, 255:red, 189; green, 16; blue, 224 }  ,fill opacity=1 ] (398,280) .. controls (398,275.58) and (401.58,272) .. (406,272) .. controls (410.42,272) and (414,275.58) .. (414,280) .. controls (414,284.42) and (410.42,288) .. (406,288) .. controls (401.58,288) and (398,284.42) .. (398,280) -- cycle ;

\draw (257,150) node [anchor=north west][inner sep=0.75pt]  [font=\small] [align=left] {

$S( i,\ j)$
};
\draw (221,194) node [anchor=north west][inner sep=0.75pt]  [font=\small] [align=left] {
$
A
$};
\draw (301,194) node [anchor=north west][inner sep=0.75pt]  [font=\small] [align=left] {
$
B
$};
\draw (382,194) node [anchor=north west][inner sep=0.75pt]  [font=\small] [align=left] {
$
C
$};
\draw (121,254) node [anchor=north west][inner sep=0.75pt]  [font=\footnotesize] [align=left] {
$
A\rightarrow BD
$};
\draw (227,290) node [anchor=north west][inner sep=0.75pt]  [font=\footnotesize] [align=left] {
$
A\rightarrow DG
$};
\draw (301,290) node [anchor=north west][inner sep=0.75pt]  [font=\footnotesize] [align=left] {
$
C\rightarrow AB
$};
\draw (413,258) node [anchor=north west][inner sep=0.75pt]  [font=\footnotesize] [align=left] {
$
C\rightarrow DE
$};
\draw (121,350) node [anchor=north west][inner sep=0.75pt]  [font=\footnotesize] [align=left] {
$
k\ =\ i\ \ \ i+2\ \ \ j
$};
\draw (349,350) node [anchor=north west][inner sep=0.75pt]  [font=\footnotesize] [align=left] {
$
k\ =\ i\ \ \ i+2\ \ \ j
$};

\end{tikzpicture}
}   
    \caption{Each subproblem $S(i, j)$ is represented by multiple \textit{variable roots} (in black), which is the result of the composition multiple \textit{grammar roots} (in purple), which in turn are the composition of multiple \textit{leaves} (in green).}
    \label{fig:cyk-dag}
\end{figure}
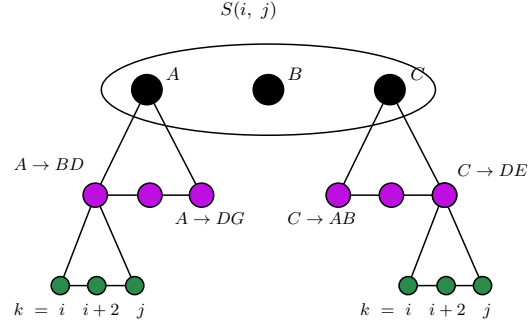

\noindent\textbf{CYK CDAG construction:}\label{sec:cykdagcontruction}
The construction of the CDAG representing CYK's execution for an input string $w$ with $n$ characters and a CFG in CNF $\mybraces{V,\mathcal{R},\Sigma, T}$, called $G_{CYK}\mybraces{n,\mybraces{V,\mathcal{R},\Sigma, T}}$, follows a recursive structure which uses as a basis $G=\mybraces{V,E}\in \mathcal{G}(n)$. We describe the necessary modification focusing on the sub-CDAG for a single $S\mybraces{i,j}$ subproblem. A visual depiction of the ensuing description is provided in Figure~\ref{fig:cyk-dag}:

\begin{itemize}
    \item For the base of the construction, for each $S\mybraces{i, i}$, the CDAG has $|V|/s$ vertices, each corresponding to the encoding of one of the possible variables of the grammar. These vertices serve as the input of the CDAG.
    \item Each $R$-vertex of $G$, corresponding to the computation of the solution of one of the subproblems $S(i,j)$, is replaced by $|V|/s$ vertices, each representing a part of the encoding of the set of variables $V$. 
    We refer to this set of vertices as ``\emph{variable roots}'' $\mathit{VR}$.
    \item Consider a variable root $v$ corresponding to subproblem $S(i,j)$ and variable $A \in V$. $v$ forms the root of a binary tree, which has as its leaves vertices for each rule having $A$ as its left-hand side. We denote the set of these vertices as ``\emph{grammar roots}''  $\mathit{GR}$. Note that as only one variable can appear on the left-hand side of a rule, the trees composed in this way are vertex disjoint. Further, for each $S(i,j))$ there will be exactly $\Gamma$ such vertices. 
    \item For each subproblem $S(i,j)$, consider one of its grammar roots $u$ corresponding to rule $\gamma \in \mathcal{R}_B$. $u$ forms the root of a binary tree combining the results of $j-i$  ``leaf-vertices''. Each leaf vertex corresponds to a combination of the subproblems $S\mybraces{i,k}$ and $S\mybraces{k+1,j}$, for $i\leq k < j$.  Such a vertex has as predecessors the variable root vertices of subproblems $S\mybraces{i,k}$ and $S\mybraces{k+1,j}$ encoding the variables appearing in the right-hand side of rule $\gamma$.
\end{itemize}

Given the CDAG $G_{CYK}\mybraces{n,\mybraces{V,\mathcal{R},\Sigma, T}}$, we generalize the notion of rows (resp., columns) introduced in Section~\ref{sec:DAG} by having row $r_i$ (resp. column $c_j$) include all vertices in $\mathit{VR}$ corresponding to the encoding of the results of subproblems $S(i,j)$ for $1\leq j\leq n$ (resp., for $1\leq i\leq n$).The definition of $W$-cover  generalizes to $G_{CYK}\mybraces{n,\mybraces{V,\mathcal{R},\Sigma, T}}$.
We observe that a property analogous to that outlined in Lemma~\ref{lem:coverproperty} holds for $G_{CYK}\mybraces{n,\mybraces{V,\mathcal{R},\Sigma, T}}$ as well:
\begin{restatable}[Proof in Appendix~\ref{app:cyk-proofs}]{lemma}{cykcover}\label{lem:cykcoverproperty}
\sloppy Given a CDAG $G_{CYK}\mybraces{n,\mybraces{V,\mathcal{R},\Sigma, T}}$, let $W_1,\ldots,W_{n-1}$ denote its $W$-cover.  Any set of $x$ $\mathit{VR}$-vertices in $W_i$ contain at most $x^2/4$ \textit{interacting} pairs, each of which forms the predecessors of a unique  $L$-vertex belonging to a unique $\mathit{GR}$-vertex.
\end{restatable}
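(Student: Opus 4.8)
The plan is to mirror the proof of Lemma~\ref{lem:coverproperty}, isolating the two ingredients that made the bound $x^2/4$ work there---the bipartite nature of the interactions inside a single $W_i$ and the AM--GM inequality---and then accounting for the additional $\mathit{GR}$ layer introduced by the grammar.

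First I would establish that, inside $W_i = r_{i+1}\cup c_i$, any interacting pair of $\mathit{VR}$-vertices must straddle the two parts, with exactly one endpoint in $X\cap c_i$ and one in $X\cap r_{i+1}$. The reason is that a leaf-vertex of $G_{CYK}\mybraces{n,\mybraces{V,\mathcal{R},\Sigma,T}}$ corresponding to a rule applied to $S\mybraces{i',j'}$ with split point $k$ has its two $\mathit{VR}$-predecessors lying in subproblems $S\mybraces{i',k}$ and $S\mybraces{k+1,j'}$; hence these predecessors sit in columns $c_k$ and $c_{j'}$ and in rows $r_{i'}$ and $r_{k+1}$, and since $i'\le k<j'$ these are always two distinct columns and two distinct rows. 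Consequently two $\mathit{VR}$-vertices in the same column (or the same row) can never be co-predecessors of a leaf, so no pair internal to $c_i$ or internal to $r_{i+1}$ interacts. Requiring both endpoints to lie in $W_i$ then forces the split point to be exactly $k=i$ (the column endpoint must be $c_i$, the row endpoint must be $r_{i+1}$), and since $c_i\cap r_{i+1}=\emptyset$ (an intersection would be a $\mathit{VR}$-vertex for the invalid subproblem $S\mybraces{i+1,i}$) the partition $X=(X\cap c_i)\cup(X\cap r_{i+1})$ is disjoint.

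With this in hand, the count is immediate: writing $a=|X\cap c_i|$ and $b=|X\cap r_{i+1}|$ with $a+b=x$, every interacting pair is one of the at most $ab$ bipartite pairs, and $ab\le (a+b)^2/4=x^2/4$ by AM--GM. It remains to attach to each interacting pair a unique leaf and a unique grammar root. Here I would use that the pair $(p,q)$ already fixes the split point $k=i$ and, through the first index of $p\in c_i$ and the second index of $q\in r_{i+1}$, the enclosing subproblem $S\mybraces{a',b'}$; the two $\mathit{VR}$-vertices also pin down the portions of the variable encoding feeding the leaf, and the \emph{no two rules share a right-hand side} hypothesis then selects the rule---hence the grammar root of $S\mybraces{a',b'}$, hence the leaf---that the pair feeds. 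Distinct interacting pairs differ either in the subproblem (distinct first/second indices) or in the encoded variables, and in either case land on distinct grammar roots, which are vertex-disjoint across rules and across subproblems by construction.

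The main obstacle is precisely this last uniqueness step, because a single $\mathit{VR}$-vertex encodes a block of several variables, so a priori a pair $(p,q)$ could be the common predecessor of several leaves coming from distinct rules whose right-hand sides happen to be packed into the same pair of words. The crux is therefore to show that the no-duplicate-right-hand-side assumption, combined with the one-to-one correspondence between a rule's right-hand side and the leaf it induces at the fixed split $k=i$, collapses these to a single leaf per interacting pair (equivalently, to argue the interaction at the granularity of the individual variables being combined). Once that is settled, the injection from interacting pairs into $\mathit{GR}$-vertices, and the bound $x^2/4$, follow exactly as in the base case.
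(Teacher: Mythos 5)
Your proof follows essentially the same route as the paper's: the same bipartite split of $X$ into $X\cap c_i$ and $X\cap r_{i+1}$ giving at most $x_c(x-x_c)\le x^2/4$ interacting pairs, the same use of the no-duplicate-right-hand-side assumption to attach a single leaf (hence a single grammar root) to each pair, and the same two-case analysis for distinctness (pairs differing in row/column indices reduce to Lemma~\ref{lem:coverproperty}; pairs at the same positions must differ in their variables and hence in their rule). The one step you flag as the ``main obstacle'' --- that a $\mathit{VR}$-vertex is a memory word encoding a block of $s$ variables, so a single pair of words could a priori feed several rules --- is not resolved in the paper either: its proof simply works at per-variable granularity, writing $v^A_{k,i}$ for ``the $\mathit{VR}$-vertex corresponding to variable $A$ in row $k$ and column $i$,'' which is precisely the ``granularity of the individual variables'' reading you yourself propose as the equivalent fix. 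Completed under that reading, your argument coincides with the paper's proof; the residual ambiguity is a feature of the paper's CDAG construction, not a defect of your approach relative to it.
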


\noindent\textbf{\io{} lower bound proof:} 
We analyze the \io{} complexity of any sequential computation for the CYK algorithm by analyzing the CDAG corresponding to the given CFG and the input string to be considered. We first state two results that correspond each to a modification of the results in Lemma~\ref{lem:keycoverpropertynr} and, respectively, of Lemma~\ref{lem:newtreeinternaldominaotor} for the family of CDAG corresponding to execution of the CYK algorithm:
\begin{lemma}\label{lem:cykpreds}
    Given a CDAG $G_{CYK}\mybraces{n,\mybraces{V,\mathcal{R},\Sigma, T}}$, let $L'\subset L$ such that $|L'|=8M^{1.5}$ and the vertices in $L'$ belong to the tree sub-DAGs of at most $4M$ distinct $\mathit{GR}$-vertices. The set $R'\subseteq \mathit{VR}$ includes all $\mathit{VR}$-vertices that are predecessors of at least one vertex in $L'$ has cardinality at least $4M$.
\end{lemma}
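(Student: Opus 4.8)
The plan is to reproduce the argument of Lemma~\ref{lem:keycoverpropertynr} essentially verbatim, with $\mathit{VR}$-vertices playing the role of the $R$-vertices, $\mathit{GR}$-vertices playing the role of the tree roots, and the CYK-specific combinatorial estimate of Lemma~\ref{lem:cykcoverproperty} replacing Lemma~\ref{lem:coverproperty}. This transcription is legitimate because the generalized rows $r_i$, columns $c_j$, and hence the $W$-cover $W_1,\ldots,W_{n-1}$ of $G_{CYK}\mybraces{n,\mybraces{V,\mathcal{R},\Sigma,T}}$ are defined on the $\mathit{VR}$-vertices exactly as in Section~\ref{sec:DAG}. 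In particular, every $\mathit{VR}$-vertex associated with subproblem $S(i,j)$ lies in row $r_i$ and column $c_j$, and therefore belongs to at most the two sets $W_{i-1}$ and $W_j$; this yields the double-counting bound $|R'|\geq \frac{1}{2}\sum_{i=1}^{n-1}|R'\cap W_i|$, with which the whole estimate begins.

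First I would record the partition of $L'$ induced by the $W$-cover. A leaf combining the $\mathit{VR}$-vertices of $S(i,k)$ and $S(k+1,j)$ has its two predecessors in $c_k$ and in $r_{k+1}$ respectively, so both fall in $W_k$ and in no other $W_i$ (as $i\leq k<j$). Writing $a_i^2/4$ for the number of vertices of $L'$ whose two $\mathit{VR}$-predecessors both lie in $W_i$, this partition gives $\sum_{i=1}^{n-1} a_i^2/4 = |L'| = 8M^{1.5}$. Next, by Lemma~\ref{lem:cykcoverproperty} each such interacting pair inside $W_i$ is the predecessor set of an $L$-vertex belonging to a \emph{unique} $\mathit{GR}$-vertex; since by hypothesis the vertices of $L'$ meet at most $4M$ distinct $\mathit{GR}$-vertices, no single $W_i$ can hold more than $4M$ of them, i.e. $\max_i a_i^2/4\leq 4M$. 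Finally, because the predecessors counted by $a_i$ all belong to $R'\cap W_i$ and Lemma~\ref{lem:cykcoverproperty} caps the interacting pairs available among $|R'\cap W_i|$ vertices at $|R'\cap W_i|^2/4$, we get $|R'\cap W_i|^2/4\geq a_i^2/4$, hence $|R'\cap W_i|\geq a_i$ for every $i$.

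Combining these three facts exactly as in the proof of Lemma~\ref{lem:keycoverpropertynr} and invoking Lemma~\ref{lem:mathtrick} with $\sum_i a_i^2\geq 32M^{1.5}$ and $\max_i a_i^2\leq 16M$ gives $\sum_i a_i\geq 32M^{1.5}/\sqrt{16M}=8M$, whence $|R'|\geq \tfrac12\sum_i|R'\cap W_i|\geq \tfrac12\sum_i a_i\geq 4M$, as claimed. The only place the argument genuinely departs from the base CDAG is the double use of Lemma~\ref{lem:cykcoverproperty}: once to guarantee injectivity of the map sending each interacting $\mathit{VR}$-pair to the leaf it produces (so that the $a_i^2/4$ counts really partition $L'$ and $\max_i a_i^2/4\leq 4M$ is meaningful), and once to attach each counted leaf to a distinct $\mathit{GR}$-vertex. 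I expect the main — indeed essentially the only — point requiring care to be checking that both of these properties are genuinely delivered by Lemma~\ref{lem:cykcoverproperty} under the no-shared-right-hand-side assumption on $\mathcal{R}$; once that is granted, the remaining steps are a routine transcription of the no-recomputation argument.
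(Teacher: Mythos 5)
Your proposal is correct and takes essentially the same approach as the paper: the paper's own proof of Lemma~\ref{lem:cykpreds} consists precisely of the remark that one repeats the argument of Lemma~\ref{lem:keycoverpropertynr} with Lemma~\ref{lem:cykcoverproperty} used in place of Lemma~\ref{lem:coverproperty}, which is exactly the transcription you carry out. Your fleshed-out version --- including the identification that Lemma~\ref{lem:cykcoverproperty} must deliver both the injectivity of the map from interacting $\mathit{VR}$-pairs to leaves and the assignment of each such leaf to a distinct $\mathit{GR}$-vertex --- is a faithful and complete instantiation of that sketch, with the constants ($\sum_i a_i^2 \geq 32M^{1.5}$, $\max_i a_i^2 \leq 16M$, hence $|R'| \geq 4M$) worked out correctly.
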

The proof follows a reasoning analogous to that used to prove Lemma~\ref{lem:keycoverpropertynr} adapted for the CDAG $G_{CYK}\mybraces{n,\mybraces{V,\mathcal{R},\Sigma, T}}$ by using the property of $L$-vertices of the CYK CDAG given Lemma~\ref{lem:cykcoverproperty} in place of Lemma~\ref{lem:coverproperty}. 

\begin{lemma}\label{lem:dominatorcyk}
Given $G_{CYK}\mybraces{n,\mybraces{V,\mathcal{R},\Sigma, T}}$, let $A=\{v_{i_1, j_1},v_{i_2, j_2},\ldots\}$ be a subset of $GR$-vertices where  $v_{i_l, j_k}$ is the $GR$-vertex corresponding to subproblem $\mathcal{S}(i_k, j_k)$. Any dominator set $D$ of $A$ has cardinality at least $|D|\geq \min\left(|A|,\min{j_k-i_k}\right)$
\end{lemma}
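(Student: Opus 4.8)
The plan is to mirror the proof of Lemma~\ref{lem:newtreeinternaldominaotor} verbatim, replacing the $R$-vertices by the $\mathit{GR}$-vertices and inserting the extra $\mathit{VR}$-to-$\mathit{GR}$ layer wherever a downward trace is needed. The argument rests on two structural facts guaranteed by the CYK CDAG construction: (i) the tree sub-CDAGs rooted at distinct $\mathit{GR}$-vertices are pairwise vertex-disjoint, and (ii) a $\mathit{GR}$-vertex associated with subproblem $S(i,j)$ has exactly $j-i$ leaves, the $k$-th of which (for $0\le k\le j-i-1$) combines $S(i,i+k)$ with $S(i+k+1,j)$ and therefore has among its predecessors a $\mathit{VR}$-vertex of $S(i,i+k)$, i.e.\ a vertex of column $c_{i+k}$. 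Fact (i) holds because only one variable may appear on the left-hand side of a rule and, under the standing assumption that no two rules share a right-hand side, distinct $\mathit{GR}$-vertices give rise to disjoint leaf sets.

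First I would perform the same case split as in Lemma~\ref{lem:newtreeinternaldominaotor}. In the first case, $D$ contains a vertex of the tree sub-CDAG of every $\mathit{GR}$-vertex of $A$; by fact (i) these trees are disjoint, so $|D|\ge |A|$. In the complementary case there is some $v=v_{i,j}\in A$ whose tree sub-CDAG is entirely avoided by $D$, and it then suffices to exhibit $j-i$ paths from input vertices to $v$ whose portions lying outside the tree of $v$ are pairwise vertex-disjoint. Since $D$ must meet every such path yet avoids the tree of $v$, it must place a distinct vertex on each of those outside portions, giving $|D|\ge j-i\ge \min_k(j_k-i_k)$.

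The one genuinely new step, and the main obstacle, is constructing these paths in the presence of the extra $\mathit{VR}$/$\mathit{GR}$ layering. For each $k$ I would start from leaf $l_k$ of $v$ and take its predecessor $p_k$, the $\mathit{VR}$-vertex encoding the relevant right-hand-side variable in $S(i,i+k)$, which lies in column $c_{i+k}$. If $k=0$ then $p_k$ is already an input (a $\mathit{VR}$-vertex of the base subproblem $S(i,i)$). If $k\ge 1$ I would trace downward while staying in column $c_{i+k}$: $p_k$ is the root of a tree of $\mathit{GR}$-vertices, and choosing any such $\mathit{GR}$-vertex together with its leaf for the split $m=k-1$ (which combines $S(i,i+k-1)$ with the base subproblem $S(i+k,i+k)$) reaches an input $\mathit{VR}$-vertex of $S(i+k,i+k)$, again in column $c_{i+k}$. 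Concatenating with $l_k\rightarrow\cdots\rightarrow v$ inside the tree of $v$ yields a full path from an input to $v$.

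Finally I would verify disjointness: for distinct $k$ the $\mathit{VR}$-vertices used lie in the distinct columns $c_i,\ldots,c_{j-1}$, while the intermediate $\mathit{GR}$- and leaf-vertices belong to the distinct subproblems $S(i,i+k)$, whose sub-CDAGs are disjoint and none of which coincides with the tree of $v$. Hence the segments preceding the leaves $l_k$ are pairwise vertex-disjoint and disjoint from the tree of $v$, so a dominator avoiding that tree must spend at least $j-i$ distinct vertices on them. The only delicate point to confirm is that every non-base $\mathit{VR}$-vertex is the root of a nonempty $\mathit{GR}$-tree so that the downward trace can begin; should a degenerate chunk contain no grammar roots, that $\mathit{VR}$-vertex is itself a source of the CDAG lying in column $c_{i+k}$, and the argument goes through with an even shorter path.
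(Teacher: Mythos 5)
Your proposal is correct and follows essentially the same route as the paper: the paper's own ``proof'' is a single sentence stating that the argument of Lemma~\ref{lem:newtreeinternaldominaotor} carries over with tree sub-CDAGs rooted at $\mathit{GR}$-vertices in place of those rooted at $R$-vertices, and your write-up is precisely that adaptation, executed faithfully (same case split, same column-indexed vertex-disjoint paths). The extra details you supply---threading each path through the $\mathit{VR}$-to-$\mathit{GR}$ layer via the split $m=k-1$ leaf, and handling the degenerate case of a variable with no binary rules on its left-hand side---are exactly the steps the paper leaves implicit, and they check out against the CDAG construction.
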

The proof corresponds to the one of Lemma~\ref{lem:newtreeinternaldominaotor} where the tree sub-CDAGs rooted in $GR$-vertices are considered instead of the $R$ vertices of $\mathcal{G}(n)$.

A lower bound to the \io{} complexity of the CYK algorithm can be obtained by following steps analogous to those used in the proof of Theorem~\ref{thm:lwbrc} with opportune adjustments matching the modifications in the \emph{enhanced} CDAG described in this section.

\begin{theorem} \label{thm:cyk-lwb}
    Consider a string $w$ of length $n$  and a CFG $\mybraces{V,\mathcal{R},\Sigma, T}$ in CNF. The number of \io{} operations executed by $\mathcal{A}_{CYK}$ when deciding whether $w$ is a member of the language of the given grammar using a machine equipped with a cache memory of size $M$ is bounded as:
    \begin{equation*}\label{eq:cyklwb}
        IO_{G_{CYK}}\mybraces{n,m,B} \geq \mybraces{\myMax{\frac{\mybraces{n-6M-1}^3}{18\sqrt{M}}\Gamma-2M\Gamma,n}\frac{1}{B}}
    \end{equation*}
        where $\Gamma$  denotes the number of distinct right-hand sides in $\mathcal{R}_B$.
\end{theorem}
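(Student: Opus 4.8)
The plan is to follow the proof of Theorem~\ref{thm:lwbrc} almost verbatim, replacing the single tree-root layer of $\mathcal{G}(n)$ with the two-layer $\mathit{VR}$/$\mathit{GR}$ structure of $G_{CYK}$ and substituting Lemma~\ref{lem:cykpreds} and Lemma~\ref{lem:dominatorcyk} for Lemma~\ref{lem:keycoverpropertynr} and Lemma~\ref{lem:newtreeinternaldominaotor}. The first step is to re-derive the analog of $|L(x)|$. Under the standing assumption that no two binary rules share a right-hand side, each subproblem $S\mybraces{i,j}$ contributes exactly $\Gamma$ grammar roots, each rooting a tree of $j-i$ $L$-vertices whose $\mathit{VR}$-predecessors have the same column structure as the predecessors in $\mathcal{G}(n)$. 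Hence, letting $L(x)$ denote the set of $L$-vertices both of whose $\mathit{VR}$-predecessors correspond to subproblems of span at least $x$, the count from Theorem~\ref{thm:lwbrc} is simply scaled by $\Gamma$:
\begin{equation*}
|L(x)| = \Gamma\,\frac{\mybraces{n-2x-1}\mybraces{n-2x}\mybraces{n-2x+1}}{6} > \Gamma\,\frac{\mybraces{n-2x-1}^3}{6}.
\end{equation*}

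Next I would dispose of the trivial bound $IO_{G_{CYK}}\mybraces{n,M,B}\geq n/B$ (the $n$ input words must each be read at least once) and partition an arbitrary schedule $\mathcal{C}$ into segments $\mathcal{C}_1,\mathcal{C}_2,\ldots$, each computing exactly $6M^{1.5}$ distinct vertices of $L\mybraces{3M}$ from their operands, giving $\left\lfloor|L\mybraces{3M}|/6M^{1.5}\right\rfloor$ segments. For a fixed segment I set $A$ to be the vertices computed in it, $D$ to be the values resident in the cache at its start together with those read in during it (so $|D|\leq M+g_l$, $g_l=|D|-M$, and $D$ dominates $A$), $L_l\subseteq A$ the $6M^{1.5}$ computed $L$-vertices, and $R_l$ the set of \emph{active} $\mathit{GR}$-vertices owning at least one vertex of $L_l$. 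The goal is $g_l\geq 2M$ per segment.

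The three cases are the CYK counterparts of those in Theorem~\ref{thm:lwbrc}. In case (a) some active $\mathit{GR}$-vertex $v'$ is computed in full, so $D$ contains no vertex internal to its $L$-tree; since $L_l\subseteq L\mybraces{3M}$, the subproblem of $v'$ has span at least $3M$, and the disjoint-path construction underlying Lemma~\ref{lem:dominatorcyk} forces $D$ to meet each of the $\geq 3M$ distinct column-indexed paths below $v'$, giving $|D|\geq 3M$ and $g_l\geq 2M$. In case (b), $|R_l|>4M$ with no active root finished, so by parsimony each active root needs a partial accumulator loaded or stored, except for the at most $2M$ that can sit in cache at the two segment boundaries, whence $g_l\geq 4M-2M=2M$. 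In case (c), $|R_l|\leq 4M$, and a constant-adjusted version of Lemma~\ref{lem:cykpreds} gives that $L_l$ has at least $3M$ distinct $\mathit{VR}$-predecessors, all of span at least $3M$; since each must be cached, read, or computed in the segment, $D$ dominates this set and the dominator bound yields $|D|\geq 3M$ and $g_l\geq 2M$.

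Combining the $\left\lfloor|L\mybraces{3M}|/6M^{1.5}\right\rfloor$ segments, each costing at least $2M$, with $|L\mybraces{3M}|>\Gamma\mybraces{n-6M-1}^3/6$ yields
\begin{equation*}
IO_{G_{CYK}}\mybraces{n,M,1}\geq 2M\mybraces{\frac{\Gamma\mybraces{n-6M-1}^3}{36M^{1.5}}-1}=\frac{\Gamma\mybraces{n-6M-1}^3}{18\sqrt{M}}-2M\geq \frac{\Gamma\mybraces{n-6M-1}^3}{18\sqrt{M}}-2M\Gamma,
\end{equation*}
where the last inequality uses $\Gamma\geq 1$; taking the maximum with the trivial bound and dividing by $B$ gives the claim. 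The main obstacle is the dominator bookkeeping across the two extra layers: the predecessors of the $L$-vertices are $\mathit{VR}$-vertices, whereas Lemma~\ref{lem:dominatorcyk} is phrased for $\mathit{GR}$-vertices, so case (c) requires observing that the disjoint-tree, distinct-column path argument of that lemma applies verbatim to $\mathit{VR}$-vertices as well (their trees are pairwise vertex-disjoint and each reaches the inputs through $j-i$ distinct columns). One must also verify that "entirely computed" in case (a) genuinely excludes an internal dominator, and confirm that the factor $\Gamma$ enters only through the $L$-vertex count and never through the per-segment cost $2M$.
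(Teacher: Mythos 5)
Your proposal is correct and takes essentially the same route as the paper: the paper itself only sketches this proof, saying it follows the proof of Theorem~\ref{thm:lwbrc} with Lemma~\ref{lem:cykpreds} and Lemma~\ref{lem:dominatorcyk} substituted for Lemma~\ref{lem:keycoverpropertynr} and Lemma~\ref{lem:newtreeinternaldominaotor}, which is precisely what you carry out, including the key observation that $|L(x)|$ scales by $\Gamma$ because each subproblem owns $\Gamma$ grammar-root trees of $j-i$ leaves each, while the per-segment cost of $2M$ is unchanged. Note that your derived bound $\frac{\Gamma\mybraces{n-6M-1}^3}{18\sqrt{M}}-2M$ is in fact slightly stronger than the theorem's $-2M\Gamma$ term, and your weakening via $\Gamma\geq 1$ correctly recovers the statement as written.
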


By Theorem~\ref{thm:cyk-lwb}, we have that for $M\leq c_1n$, where $c_1$ is a sufficiently small positive constant value, computations of the CYK algorithm require $\Omega\mybraces{\frac{n^3\Gamma}{\sqrt{M}B}}$ \io{} operations. 
Note that if the considered CFG is such that $\Gamma=\Theta\mybraces{|\mathcal{R}_B|}$ our lower bound shows a direct dependence of the \io{} complexity of execution of the CYK algorithm on the \emph{size}  of the considered grammar. Indeed there are many possible grammars for which this is the case. 
Finally, while the result of Theorem~\ref{thm:lwbrc} holds for computations with recomputation, it is also possible to obtain an \io{} lower bound for schedules in which recomputation is not allowed by modifying the analysis given in Theorem~\ref{thm:lwb}.

\subsection{I/O upper bound} \label{sec:cyk-upper-bound}
We present an efficient cache-oblivious implementation of the CYK algorithm, henceforth referred to as $\mathcal{A}_{CYK}$, obtained by refining \texttt{Valiant's DP Closure Algorithm} (presented in Section~\ref{sec:tightness} and Appendix~\ref{app:valiant}).
\newcommand{\matA}{ \begin{bmatrix} X_{11} & X_{12} \\   & X_{22}\end{bmatrix}}
\newcommand{\matB}{ \begin{bmatrix} X_{33} & X_{34} \\   & X_{44}\end{bmatrix}}
\newcommand{\matC}{ \begin{bmatrix} X_{22} & X_{23} \\   & X_{33}\end{bmatrix}}
\newcommand{\matD}{ \begin{bmatrix} X_{11} & X_{13} \\   & X_{33}\end{bmatrix}}
\newcommand{\matE}{ \begin{bmatrix} X_{22} & X_{24} \\   & X_{44}\end{bmatrix}}
\newcommand{\matF}{ \begin{bmatrix} X_{11} & X_{14} \\   & X_{44}\end{bmatrix}}
\begin{algorithm}
\caption{$\mathcal{A}_{CYK^*}$, Valiant's Star Algorithm for CYK} \label{alg:CYK_star}
\begin{algorithmic}[1]
\footnotesize
\State \textbf{Input} $X$ \Comment{Top-left and bottom-right are assumed to be fully computed}
\State \textbf{Output} $X$ \Comment{DP-Closure of X is computed in-place}
\If{$\text{dim}(X) = 2 $} \Comment{Root has been fully computed. Update all variable roots}
    \For{$A \in V$ matched with placeholder $P$}
        \State $X^A \leftarrow X^P$
    \EndFor
    \State \textbf{return}
\EndIf
\State $ \matC \leftarrow \mathcal{A}_{CYK^*}\mybraces{\matC}$
\For{$P \rightarrow BC$ \textbf{in} $\mathcal{R}_B^\Gamma$}
    \State $X_{13}^P \leftarrow X_{13}^P + X_{12}^B \cdot X_{13}^C$ \Comment{Computed using \texttt{Matrix Multiply and Accumulate}}
\EndFor
\State $ \matD \leftarrow \mathcal{A}_{CYK^*}\mybraces{\matD}$
\For{$P \rightarrow BC$ \textbf{in} $\mathcal{R}_B^\Gamma$}
    \State $X_{24}^P \leftarrow X_{24}^P + X_{23}^B \cdot X_{34}^C$
\EndFor
\State $ \matE \leftarrow \mathcal{A}_{CYK^*}\mybraces{\matE}$
\For{$P \rightarrow BC$ \textbf{in} $\mathcal{R}_B^\Gamma$}
    \State $X_{14}^P \leftarrow X_{14}^P + X_{12}^B \cdot X_{24}^C$
    \State $X_{14}^P \leftarrow X_{14}^P + X_{13}^B \cdot X_{34}^C$
\EndFor
\State $ \matF \leftarrow \mathcal{A}_{CYK^*}\mybraces{\matF}$
\end{algorithmic}
\end{algorithm}
To compute the input values corresponding to subproblems $S(i, i)$ for $1\le i \le n$, each character in the input string $w$ is compared with the right-hand side of each terminal rule in $\mathcal{R}_T$, and the appropriate values are stored. The rest of the computation involves executing \texttt{Valiant's DP Closure Algorithm} using a modified version of \texttt{Valiant's Star Algorithm}. Pseudocode for the modified algorithm is provided in Algorithm~\ref{alg:CYK_star}.

Here, $X$ represents a matrix containing values corresponding to $R$-vertices, and $X^A$ represents the corresponding memory words in $X$ containing variable roots for $A \in V$. Our proposed version introduces two modifications of the original algorithm as presented in Appendix~\ref{app:valiant-star}: First, the algorithm iterates over the rules of the grammar with distinct right-hand-sides in lines $8$, $11$, and $14$, calling the subroutine \texttt{Matrix Multiply and Accumulate} (Appendix~\ref{app:matmul}) with the variable roots corresponding to each rule in the grammar. Second, to avoid computing rules with identical right-hand sides more than once, let $\mathcal{R}_B^\Gamma$ denote the largest set of binary rules, each having a unique right-hand side from $\mathcal{R}_B$. In this set, the corresponding left-hand sides are replaced with placeholder variables. For any subproblem, once the value corresponding to a rule in $\mathcal{R}_B^\Gamma$ is fully computed, it can be used to update the value corresponding to all variable roots producing the same right-hand side through a rule in $\mathcal{R}_B$. This is done in lines $4$ and $5$ as the base case implies that all leaves for the subproblem have been computed.

\begin{theorem} \label{thm:cyk-ub}
Given an input string $w$ of length $n$ and a CFG $\mybraces{V,\mathcal{R},\Sigma, T}$ in CNF, algorithm $\mathcal{A}_{CYK}$ decides whether $w$ is a member of the language of the given grammar. When run on a machine equipped with a cache of size $M$, the number of \io{} operations executed by $\mathcal{A}_{CYK}$ can be bound as:
\begin{equation*}
    IO_{\mathcal{A}_{CYK}}\left(n,M\right)\leq \mathcal{O}\mybraces{\mybraces{\frac{n^3\Gamma}{\sqrt{M}} + n^2\Gamma\log{M} + n^2|\mathcal{R}_B| + n|\mathcal{R}_T|}B^{-1}}
\end{equation*}
\end{theorem}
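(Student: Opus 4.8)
The plan is to bound the total I/O as the sum of two contributions: the cost of the initialization phase and the cost of the recursive DP-closure phase. For initialization, computing every $S(i,i)$ requires comparing each of the $n$ input characters against the right-hand side of each terminal rule in $\mathcal{R}_T$ and storing the matched variables; reading the input and the terminal rules in a blocked fashion costs $\mathcal{O}(n|\mathcal{R}_T|/B)$ I/O operations, accounting for the last term of the claimed bound. Everything else comes from analyzing \texttt{Valiant's DP Closure}, driven by the modified Star Algorithm (Algorithm~\ref{alg:CYK_star}) and the \texttt{Matrix Multiply and Accumulate} subroutine.

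The core of the argument is a system of three coupled recurrences for the I/O cost, one per routine. I would denote by $MM(m)$ the I/O cost of a single \texttt{Matrix Multiply and Accumulate} call on $m\times m$ Boolean matrices (the one-hot slices $X^B$, $X^C$): invoking the standard cache-oblivious matrix-multiplication bound, $MM(m)=\mathcal{O}(m^3/(\sqrt{M}B)+m^2/B)$, where the first regime applies when $m>\sqrt{M}$ and the second (just loading the operands into cache) when $m\le\sqrt{M}$. For the Star Algorithm I would write $V(n)=4V(n/2)+\Theta(\Gamma)\,MM(n/2)$, since each invocation makes four recursive calls on half-sized blocks and, interleaved with them, performs $\Theta(\Gamma)$ matrix multiplications---one per distinct right-hand side in $\mathcal{R}_B^{\Gamma}$---of size $\Theta(n)$; the base case $\dim(X)=2$ propagates each fully computed placeholder to all variable roots sharing its right-hand side, costing $\mathcal{O}(|\mathcal{R}_B|/B)$ per cell. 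Finally the closure itself satisfies $C(n)=2C(n/2)+V(n)$.

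I would then solve these by expanding the recursion tree. At level $\ell$ the Star recursion has $4^{\ell}$ nodes, each issuing $\Theta(\Gamma)$ multiplications of size $\Theta(n/2^{\ell})$; summing the first ($m>\sqrt{M}$) regime of $MM$ over the tree yields a geometric series dominated by its root, giving the leading $\mathcal{O}(n^3\Gamma/(\sqrt{M}B))$ term. The $\Theta(n^2)$ base cases collectively contribute $\mathcal{O}(n^2|\mathcal{R}_B|/B)$. Substituting $V(n)$ into $C(n)=2C(n/2)+V(n)$ leaves these estimates unchanged up to constants, because in each case the per-level driving term is polynomially larger than $n^{\log_2 2}=n$, so $C(n)=\Theta(V(n))$; adding the initialization cost gives the stated bound.

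The delicate step---and the one I expect to be the main obstacle---is isolating the $n^2\Gamma\log M/B$ term, which forces a careful treatment of the \emph{below-threshold} levels of the recursion. Once the matrices handed to \texttt{Matrix Multiply and Accumulate} have size $m\le\sqrt{M}$ they fit in cache and each such call costs only $\mathcal{O}(m^2/B)$; summing $4^{\ell}\cdot\Gamma\cdot m^2/B$ over these levels, the factors $4^{\ell}$ and $m^2=(n/2^{\ell})^2$ cancel, so every below-threshold level contributes an identical $\Theta(\Gamma n^2/B)$, and there are exactly $\log n-\log(n/\sqrt{M})=\tfrac12\log M$ such levels. This is precisely what turns a naive $\log n$ overhead into the $\log M$ factor, and the bookkeeping must also absorb the word-packing parameter $s$ (each variable set occupies $\lceil|V|/s\rceil$ words) and confirm that the transition between the two $MM$ regimes occurs at the claimed level. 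Since we only seek an upper bound, pessimistic blocked charging for the variable-root updates and for the sub-$\sqrt{M}$ loads is admissible throughout.
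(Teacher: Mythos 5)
Your proposal is correct and follows essentially the same route as the paper: the same decomposition into initialization ($\mathcal{O}(n|\mathcal{R}_T|/B)$), the coupled recurrences for the closure, the modified Star Algorithm, and \texttt{Matrix Multiply and Accumulate} (via the Cherng--Ladner two-regime bound), solved by recursion-tree expansion with the $\log M$ factor arising from the $\Theta(\tfrac12\log M)$ below-threshold levels each contributing $\Theta(\Gamma n^2/B)$. Your treatment is in fact somewhat more explicit than the paper's, which states the solved recurrence as $\mathcal{O}(n^3\Gamma/(B\sqrt{M})) + 4^{\log(n/\sqrt{M})}\mathcal{O}((M\Gamma/B)\log M + M|\mathcal{R}_B|/B)$ without spelling out the per-level cancellation you describe.
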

\begin{proof}
    When computing 
    $S\mybraces{i, i}$ for $1 \le i \le n$, $\mathcal{A}_{CYK}$ compares each character of input string $w$ with the right-hand side of each terminal rule in $\mathcal{R}_T$. Doing so incurs at most  $O(n|\mathcal{R}|_T/B)$ \io{} operations. 
    
    In the following, we use $n$ to denote the dimension of the input (\myIe{} $n=dim(X)$).
     The number of \io{} operations  executed by \texttt{Valiant's Star Algorithm} for CYK is at most:
    \begin{equation}\label{eq:cykstarr}
        IO_{\mathcal{A}_{CYK^*}}\mybraces{n} \le
        \begin{cases} 
        \mathcal{O}(\frac{|\mathcal{R}_B|}{B}) & \text{if } n \leq 2, \\
        4 IO_{\mathcal{A}_{CYK^*}}\mybraces{\frac{n}{2}} + 4\Gamma(IO_{MMA}\mybraces{\frac{n}{4}} + O(1))   & \text{otherwise}
        
        \end{cases}
    \end{equation}
     By \cite[Lemma 2.1]{cherng2005cache}, the number of \io{} operations executed by \texttt{Matrix Multiply and Accumulate} is:
    \begin{equation} \label{eq:mma}
        IO_{MMA}\mybraces{n} \le 
        \begin{cases} 
          \mathcal{O}(\frac{n^2}{B}) & \text{if } n^2 \leq M, \\
          \mathcal{O}(\frac{n^3}{B\sqrt{M}}) & \text{otherwise}
        \end{cases}
    \end{equation}
    Solving the recursion in~\eqref{eq:cykstarr}  using~\eqref{eq:mma} yields:
    \begin{align} \label{eq:star-ub}
        IO_{\mathcal{A}_{CYK^*}}\mybraces{n} &\le \mathcal{O}(\frac{n^3\Gamma}{B\sqrt{M}}) + 4^{\log{(n/\sqrt{M})}}\mathcal{O}(\frac{M\Gamma}{B}\log{M} + \frac{M|\mathcal{R}_B|}{B})\nonumber \\
        &\le \mathcal{O}(\frac{n^3\Gamma}{B\sqrt{M}} + \frac{n^2\Gamma}{B}\log{M} + \frac{n^2|\mathcal{R}_B|}{B})
    \end{align}
    Finally, by adapting that analysis in~\cite{cherng2005cache} we can bound the number of \io{} operations executed by the proposed algorithm $\mathcal{A}_{CYK}$ as:
    \begin{equation*}
        IO_{\mathcal{A}_{CYK}}\mybraces{n} \le
        \begin{cases} 
        \mathcal{O}(1) & \text{if } n \leq 2, \\
        2 IO_{\mathcal{A}_{CYK}}\mybraces{n/2} + IO_{\mathcal{A}_{CYK}^*}\mybraces{n} + \mathcal{O}(1) & \text{otherwise}
        \end{cases}
    \end{equation*}
    Expanding the recurrence, by (\ref{eq:star-ub}), and adding the cost of processing the input we have:
    \begin{equation*}
        IO_{\mathcal{A}_{CYK}}\mybraces{n} \le \mathcal{O}(\frac{n^3\Gamma}{B\sqrt{M}} + \frac{n^2\Gamma}{B}\log{M} + \frac{n^2|\mathcal{R}_B|}{B} + \frac{n|\mathcal{R}_T|}{B})    
    \end{equation*}

\end{proof}

\subsection{On the tightness of the bounds}
When $M < c_1n$, where $c_1$ is a sufficiently small positive constant, the lower bound in Theorem~\ref{thm:cyk-lwb} simplifies to $\Omega\mybraces{\frac{n^3\Gamma}{B\sqrt{M}}}$, while the upper bound in Theorem~\ref{thm:cyk-ub} becomes $\mathcal{O}(\frac{n^3\Gamma}{B\sqrt{M}} + \frac{n^2|\mathcal{R}_B|}{B} + \frac{n|\mathcal{R}_T|}{B})$. Hence, if $(n|\mathcal{R}_B| + |\mathcal{R}_T|)/\Gamma \in \mathcal{O}(n^2/\sqrt{M})$, the lower-bound asymptotically matches the upper-bound. When disallowing recomputation, a similar analysis shows our upper and lower bounds match under the same condition for $M < c_2n$, where $c_2$ is a sufficiently small positive constant.


\section{Conclusion}
    This work has contributed to the characterization of the \io{}
complexity of Dynamic Programming algorithms by establishing asymptotically
tight lower bounds for a general class of DP algorithms sharing a common structure of sub-problem dependence. Our technique exploits common properties of the CDAGs corresponding to said algorithms, which makes it promising for the analysis of other families of
recursive algorithms, in particular other families of DP algorithms of interest. The generality of our technique is further showcased by the ability to extend it to more complex algorithms, such as the Cocke-Younger Kasami, for which we provide an (almost) asymptotically tight \io{} lower bound and a matching algorithm.

Our analysis yields lower bounds both for computations in which no intermediate value is computed more than once and for general computations allowing arbitrary recomputation. By doing so we reveal how when the size of the available cache is greater than $2n$ and $o\mybraces{n^2}$, schedules using recomputation can achieve an asymptotic reduction of the number of required \io{} operations by a factor $\Theta\left(n^2/\sqrt{M}\right)$. This is particularly significant as in many cases of interest (e.g. Matrix Multiplication, Fast Fourier Transform, Integer Multiplication) recomputation has shown to enable a reduction of the \io{} cost by at most a constant multiplicative factor. 

Although it is known that recomputation can decrease the I/O complexity of certain CDAGs, we are still far from characterizing those CDAGs for which recomputation proves effective. This overarching objective remains a challenge for any efforts aimed at developing a general theory of the communication requirements of computations.

\clearpage

\appendix
\section{Additional proofs}\label{app:additionalproofs}

\subsection{Proofs of technical statements in Section~\ref{sec:lwbnr}} \label{app:lwbnr}

\begin{lemma}\label{lem:mathtrick}
Given $n\in N^+$  non-negative real numbers $a_1,a_2,\ldots, a_n$ such that:
\begin{enumerate}
    \item $\sum_{i = 1}^{n} a_i^2 \ge T$, where $T \in  \mathbb R^+$, and 
    \item $\max_{i=1}^n a_i^2 \le L$, where $L \in \mathbb R^+$ and $L < T$.
\end{enumerate}
Then, $\sum_{i = 1}^{n}a_i \ge \frac{T}{\sqrt{L}}$
\end{lemma}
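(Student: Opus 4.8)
The plan is to linearize each quadratic term $a_i^2$ using the uniform upper bound supplied by hypothesis~2, and then sum. First I would observe that since each $a_i$ is non-negative and $\max_i a_i^2 \le L$, taking square roots (which preserves the inequality among non-negative quantities) gives the pointwise bound $a_i \le \sqrt{L}$ for every $i$. The crucial step is then to write $a_i^2 = a_i \cdot a_i \le \sqrt{L}\, a_i$, which trades the quadratic term for a linear one at the cost of the factor $\sqrt{L}$.

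Summing this pointwise inequality over all indices yields
\begin{equation*}
    \sum_{i=1}^n a_i^2 \;\le\; \sqrt{L}\,\sum_{i=1}^n a_i.
\end{equation*}
Combining this with hypothesis~1, namely $\sum_{i=1}^n a_i^2 \ge T$, gives $T \le \sqrt{L}\,\sum_{i=1}^n a_i$, and dividing through by $\sqrt{L} > 0$ produces exactly the claimed bound $\sum_{i=1}^n a_i \ge T/\sqrt{L}$.

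I do not expect any real obstacle here: the entire argument is a single application of the pointwise bound $a_i \le \sqrt{L}$ followed by a summation, so it is essentially a one-line estimate. The only point demanding mild care is justifying that $\max_i a_i^2 \le L$ together with non-negativity yields $a_i \le \sqrt{L}$ (rather than merely $|a_i| \le \sqrt{L}$), which is immediate from $a_i \ge 0$. I would also note that the hypothesis $L < T$ is not actually needed for the inequality itself; it appears to be included only to guarantee the bound is nontrivial in the intended application (where $\sum_i a_i$ exceeds any single $a_i$), and I would not invoke it in the proof.
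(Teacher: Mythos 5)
Your proof is correct, and it takes a genuinely different route from the paper's. You linearize pointwise: from $a_i \ge 0$ and $a_i^2 \le L$ you get $a_i \le \sqrt{L}$, hence $a_i^2 \le \sqrt{L}\,a_i$, and summing together with $\sum_i a_i^2 \ge T$ immediately gives $\sum_i a_i \ge T/\sqrt{L}$. The paper instead runs an extremal-configuration argument: it exhibits the assignment that packs as many $a_i$ as possible at the value $\sqrt{L}$ (with one leftover term), computes that this assignment achieves $\sum_i a_i \ge T/\sqrt{L}$, and then tries to show every other assignment is suboptimal via a two-case perturbation argument (merging two small entries, or pushing mass toward $\sqrt{L}$). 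Your approach buys quite a lot: it is shorter, fully rigorous, and avoids the delicate point in the paper's argument that a local exchange step only shows a given configuration is not a minimizer, so turning that into a global lower bound requires either a compactness argument or an induction that the paper does not spell out. Your closing observation is also accurate and worth keeping: the hypothesis $L < T$ plays no role in the inequality itself (your derivation never uses it), whereas the paper's construction implicitly leans on $T/L \ge 1$ to make its extremal assignment nonempty; the hypothesis matters only for the intended application in Lemma~\ref{lem:keycoverpropertynr}, not for the truth of the statement.
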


\begin{proof}
The lower bound comes from the intuitive strategy of making each $a_i$ as large as possible, i.e. if $q = \lfloor \frac{T}{L} \rfloor$, then assign $a_1, ..., a_q = \sqrt{L}$ and $a_{q+1} = \sqrt{T - qL}$. In this case, 
$\sum_{i = 1}^{q+1} a_i = q*\sqrt{L} + \sqrt{T  - qL} = q*\sqrt{L} + \sqrt{\frac{T}{L}  - q}*\sqrt{L} \ge q*\sqrt{L} + (\frac{T}{L}  - q)*\sqrt{L}  \text{   (as $\frac{T}{L} - q < 1$)} = \frac{T}{\sqrt{L}}$. To show the optimality of this strategy, we show that any other configuration is suboptimal.

Consider any other assignment $a_1, ..., a_n$ \footnote{that is not just a permutation of the same values}. We know that any such assignment must have at least two non-zero numbers $a_i, a_j < \sqrt{L}$. Let $k = a_i + a_j$. If $k \le \sqrt{L}$, since $k^2 > a^2 + b^2$, there exists some $c \in  \mathbb R^+$ such that $c^2 = a^2 + b^2$ and $c < a+b$, showing that our assignment $a_1, ..., a_n$ does not minimize $\sum_{i = 1}^{n}a_i$. 

Alternatively if $k > \sqrt{L}$, then $a_i^2 + a_j ^2 = a_i^2 + (k - a_i)^2 < \sqrt{L}^2 + (k - \sqrt{L})^2$ (the last step follows from the fact that $f(x) = x^2 + (k - x)^2$ defines an upwards-facing parabola with an axis of symmetry about $x = \frac{k}{2}$). This means that there exists some $c < k - \sqrt{L}$ such that $\sqrt{L}^2 + c^2 = a^2 + b^2$ and $L+c < a+b$, again showing that our assignment $a_1, ..., a_n$ does not minimize $\sum_{i = 1}^{n}a_i$.

\end{proof}

\subsection{\io{} complexity allowing recomputation with large memory available} \label{app:big-mem-recomp}
In this section, we analyse the \io{} complexity for CDAGS $G\in\mathcal{G}(n)$ allowing for recomputation with available memory $M \ge 2n$:
\begin{lemma}\label{lem:pebble-game}
For some $G \in \mathcal{G}(n)$, $n$ pebbles are necessary and sufficient to play the pebble game.
\end{lemma}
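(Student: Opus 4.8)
The plan is to pick one convenient member $G\in\mathcal{G}(n)$ and show that its (black) pebbling number is exactly $n$, by proving separately that $n$ pebbles suffice for $G$ and that $n$ pebbles are required for \emph{every} member of $\mathcal{G}(n)$. For $G$ I would take the CDAG in which every aggregation tree is \emph{left-deep}, so that the $R$-vertex $v_{i,j}$ is produced by a single accumulator that scans its leaves $l_0,l_1,\dots,l_{j-i-1}$ in order. I would also fix at the outset the convention that a pebble may slide from a predecessor onto a newly computed vertex, since the exact constant $n$ (rather than $n+1$) depends on it.

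For the upper bound I would induct on the subproblem size $m=j-i$. Let $P(m)$ denote the number of pebbles sufficient to place a pebble on an $R$-vertex of size $m$, starting from a configuration in which only the $n$ input vertices may be (re)pebbled at will. Processing the leaves of $v_{i,j}$ left-to-right with the accumulator, the first leaf $l_0$ forces one pebble onto the input $v_{i,i}$ while $v_{i+1,j}$ (size $m-1$) is built recursively, and each later leaf $l_k$ keeps the running accumulator together with the just-finished near neighbour $v_{i,i+k}$ while the far neighbour $v_{i+k+1,j}$ (size $m-1-k$) is built; this yields $P(m)\le \myMax{1+P(m-1),\,2+P(m-2),\,3}$ with $P(0)=1$ and $P(1)=2$. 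Solving gives $P(m)=m+1$, so the apex $v_{1,n}$, of size $n-1$, is pebbled with $P(n-1)=n$ pebbles, establishing sufficiency.

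For the lower bound, the crucial point is that, \emph{regardless} of the tree shapes, each $R$-vertex $v_{i,j}$ is reached by a directed path from $v_{i,j-1}$ (through the last leaf of $v_{i,j}$, whose predecessors are $v_{i,j-1}$ and $v_{j,j}$) and by a directed path from $v_{i+1,j}$ (through the first leaf $l_0$, whose predecessors are $v_{i,i}$ and $v_{i+1,j}$). Collapsing these length-two paths to single edges displays the two-dimensional pyramid with base $\{v_{1,1},\dots,v_{n,n}\}$ and apex $v_{1,n}$ as a substructure of $G$. The pyramid with base $n$ has pebbling number exactly $n$, and $G$ is obtained from it only by \emph{subdividing} edges (inserting the leaf and internal tree vertices) and by \emph{adding} further predecessors and vertices that feed the same target $v_{1,n}$; neither operation can lower the pebbling number of a fixed target, whence $\mathrm{peb}(G)\ge n$. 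Combined with the upper bound this gives $\mathrm{peb}(G)=n$.

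The main obstacle is making this lower bound fully rigorous: a bare count of vertex-disjoint input-to-apex paths does \emph{not} bound the pebbling number (a complete binary tree is the standard counterexample), so the argument must genuinely reduce to the pyramid pebbling bound and then invoke monotonicity of the pebbling number under edge subdivision and under the addition of predecessors feeding a fixed target. Care is likewise needed with the pebble-game conventions, as noted above. The sufficiency recursion, by contrast, is routine once the left-deep schedule is fixed.
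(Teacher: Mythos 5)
Your proof is correct. The sufficiency half is essentially the paper's own argument: the paper likewise designates one pebble as an accumulator for $v_{1,n}$ and, leaf by leaf, recursively pebbles one predecessor subproblem, parks a pebble on it, then pebbles the other; your recursion $P(m)\le\max\{1+P(m-1),\,2+P(m-2),\,3\}$ solving to $P(m)=m+1$ is the same scheme made explicit (the paper orders each pair larger-then-smaller while you go near-then-far, and both peak at exactly $n$). The necessity half is where you genuinely diverge in route, though not in foundation: the paper re-runs the pyramid lower-bound argument directly on $G$, asserting the proof is ``identical to that of Lemma 10.2.2 in Savage,'' whereas you exhibit a subdivision of the pyramid (apex $v_{1,n}$, base $v_{1,1},\dots,v_{n,n}$) as a subgraph of the left-deep $G$ and invoke two monotonicity facts---that the pebbling number of a fixed target cannot decrease under edge subdivision, nor under passing to a supergraph. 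Both facts are true and provable by short simulations (contract each subdivision path onto its source vertex; restrict any supergraph pebbling to the subgraph, legal since predecessor sets only shrink), so your reduction is sound and arguably more modular: it uses the pyramid bound as a black box and isolates reusable lemmas, at the price of actually writing out those two simulation proofs, which you correctly flag but leave open. Three details to settle in a final write-up: the path realizing the pyramid edge $v_{i+1,j}\to v_{i,j}$ in the left-deep tree is not length two but runs through the entire accumulator chain $l_0,a_1,\dots$ (harmless, since subdivision monotonicity handles paths of any length, and the needed internal disjointness of the two incoming paths does hold for left-deep trees); your side claim that necessity holds for \emph{every} member of $\mathcal{G}(n)$ needs more care, because for an arbitrary tree shape the paths from $v_{i,j-1}$ and $v_{i+1,j}$ into $v_{i,j}$ can merge at a common ancestor strictly below the root, so the subdivision structure is not automatic---but the lemma only asks for some $G$, so this costs you nothing; and you are right that the exact constant $n$ (rather than $n+1$) hinges on the sliding convention, which the paper adopts only implicitly (its two-input base case already assumes it).
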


\begin{proof}
The necessity of $n$ pebbles to pebble $G$ can be shown using a proof identical to that of Lemma 10.2.2 in \cite{10.5555/522806}. To show that $n$ pebbles are also sufficient, we use a proof by strong induction. As a base case, when $G$ has only one input, clearly $1$ pebble is sufficient. When $G$ has two inputs, the claim also holds trivially.  

Now, assume $n \ge 3$ and that the claim holds for input sizes upto $n - 1$. Designate one pebble as an accumulator for $v_{1, n}$, leaving $n - 1$ free pebbles. We compute the leaves of $v_{1, n}$ sequentially, updating the accumulator in the process. Start with any leaf $l_k$ having $v_{1, 1+k}$ and $v_{k+2, n}$ as predecessors, which correspond to subproblems of size $k + 1$ and $n - k - 1$ repectively. We know that $\max{\mybraces{k + 1, n - k - 1}} \le n - 1$ and $\min{\mybraces{k + 1, n - k - 1}} \le n - 2$. Hence, there are sufficient pebbles to compute the larger subproblem, leave a pebble at the solution, and then compute the other subproblem. Then, compute $l_k$ and update the accumulator. Repeat the same process of the remaining leaves of $v_{1, n}$.
\end{proof}

\bigmemio*

\begin{proof}
This follows from Lemma~\ref{lem:pebble-game}. Keep one pebble each permanently at each of the input vertices, leaving $n$ free pebbles. From Lemma~\ref{lem:pebble-game}, this is sufficient to compute the CDAG without incurring any additional I/O costs (other than writing the output). 
\end{proof}

\subsection{Additional results on the Cooke-Younger-Kasami algorithm}\label{app:cyk-proofs}
\begin{algorithm}[H]
\caption{CYK algorithm} \label{alg:CYK}
\begin{algorithmic}[1]
\State \textbf{Input} $\{w_1, ..., w_n\}, \left(\mathcal{V}, \Sigma, \mathcal{R}, T\right)$
\State \textbf{Output} $\text{parse}$
\State $\text{parse} \leftarrow \text{False}$
\State $S \leftarrow \text{False}$
\If{$n = 0$}
    \If{$(T \rightarrow \epsilon) \in \mathcal{R}$}
        \State $\text{parse} \leftarrow \text{True}$
    \EndIf  
    \State \Return
\EndIf
\For{$i=1$ \textbf{to} $n$}
    \For{\textbf{each} unit production $(A \rightarrow a) \in \mathcal{R}$}
        \If{$a = w_i$}
            \State $S(i, i)[A] \leftarrow \text{True}$
        \EndIf
    \EndFor
\EndFor

\For{$l=2$ \textbf{to} $n$}
    \For{$i=1$  \textbf{to} $n-l+1$}
        \State $j\leftarrow i + l - 1$
        \For{$k=i$  \textbf{to} $j - 1$}
            \For{\textbf{each} binary production $(A \rightarrow BC) \in \mathcal{R}$}
                \If{$S(i, k)[B]$ \textbf{and} $S(k + 1, j)[C]$}
                    \State $S(i, j)[A] \leftarrow \text{True}$
                \EndIf
            \EndFor
        \EndFor
    \EndFor
\EndFor    
            
\If{$T \in S(1, n)$}
    \State{$\text{parse} \leftarrow \text{True}$}
\EndIf
\end{algorithmic}
\end{algorithm}

\cykcover*
\begin{proof}
    Consider any set of $x$ $\mathit{VR}$-vertices in $W_i$ containing $x_c$ vertices in column $i$ and $x - x_c$ vertices in row $i+1$. Since vertices in the same row (or column) don't interact, there are at most $x\mybraces{x - x_c} \le x^2/4$ interacting pairs. 

    Consider any two such distinct interacting pairs: $\mybraces{v^A_{k, i}, v^B_{i+1, l}}$ and $\mybraces{v^C_{m, i}, v^D_{i+1, n}}$ (where $v^A_{k, i}$ is a $\mathit{VR}$-vertex corresponding to variable $A$ in row $k$ and column $i$). Any leaf produced by $\mybraces{v^A_{k, i}, v^B_{i+1, l}}$ must correspond to a rule with right-hand side ``AB". From the assumption that every rule in $\mathcal{R}$ has a unique right-hand side, it follows that there is only one such rule and therefore $\mybraces{v^A_{k, i}, v^B_{i+1, l}}$ must produce a single leaf. 

    To show that the leaves produced by $\mybraces{v^A_{k, i}, v^B_{i+1, l}}$ and $\mybraces{v^C_{m, i}, v^D_{i+1, n}}$ belong to distinct $GR$-vertices, consider the following two cases:
    \begin{enumerate}
        \item If $k \ne m$ or $l \ne i$, it follows from Lemma \ref{lem:coverproperty} that the leaves produced belong to different roots, and therefore must also belong to different grammar roots.
        \item Otherwise, it must be the case that $A \ne C$ or $B \ne D$ (since the two interacting pairs are assumed to be distinct). It follows that the two leaves produced correspond to different rules in $\mathcal{R}$ and therefore belong to distinct grammar roots.
    \end{enumerate}

\end{proof}

\section{Valiant's DP Closure Algorithm} \label{app:valiant}
In this section, we introduce Valiant's DP Closure algorithm (\texttt{valiant\_closure}). For a more in-depth explanation, please refer to Cherng and Ladner \cite{cherng2005cache}. Consider any algorithm following Prototype Algorithm~\ref{alg:proto} with $n - 1$ inputs. In the section below, we represent \texttt{COMBINE} with ``$\cdot$'',  \texttt{AGGREGATE} with ``$+$'', and the \texttt{LEAST\_OPTIMAL\_VALUE} with ``$0$''.

To execute the algorithm, represent all subproblems in matrix $X \in A^{n, n}$ so that $S\mybraces{i, j}$ is represented at $X_{i, j+1}$. Initially, only the values $X_{i, i+1}$ corresponding to input values $S\mybraces{i, i}$ for $1 \le i \le n - 1$ are known, and hence the remaining entries in $X$ are set to 0. To fully compute all the subproblems (i.e. to compute the \textit{DP-closure}), \texttt{valiant\_closure} recursively divides X into $16$ equally-sized submatrices of dimension $\frac{n}{4}$\footnote{We assume that n is a power of $2$. This can always be made the case by padding $X$}. Label these matrices as follows:
 
\begin{equation*}
X = 
\begin{bmatrix}
X_{11} & X_{12} & X_{13} & X_{14} \\
       & X_{22} & X_{23} & X_{24} \\
       &        & X_{33} & X_{34} \\
       &        &        & X_{44}
\end{bmatrix}
\end{equation*}
We omit the lower-triangular submatrices as these will be $0$. During its execution, \texttt{valiant\_closure} makes recursive calls to itself as well as two subroutines: \textit{Valiant's Star Algorithm} (\texttt{valiant\_star}) and \textit{Matrix Multiply and Accumulate}. We describe each of these procedures below along with pseudocode.

\subsection{Matrix Multiply and Accumulate} \label{app:matmul}
Given three equally sized square matrices $A$, $B$, and $C$, this procedure computes $A + B \cdot C$. In the context of Valiant's DP Closure Algorithm, $A$ contains partially computed root vertices, and $B \cdot C$ represents the computation of leaves, where are then used to update $A$. A recursive in-place implementation of this function can be found in previous work (\cite{cherng2005cache, frigo1999cache}). 

\subsection{Valiant's Star Algorithm} \label{app:valiant-star}
Valiant's Star Algorithm takes in a matrix $X$ where the subproblems inthe top-left and bottom-right quadrants are already computed. The algorithm then recursively computes the remaining root-vertices. Pseudocode for this function can be found in Algorithm~\ref{alg:star}  

\begin{algorithm}[H]
\caption{Valiant's Star Algorithm (\texttt{valiant\_star})} \label{alg:star}
\begin{algorithmic}[1]
\State \textbf{Input} $X$ \Comment{Top-left and bottom-right are assumed to be fully computed}
\State \textbf{Output} $X$ \Comment{DP-Closure of X is computed in-place}
\If{$\text{dim}(X) = 2 $} \Comment{No Computation required}
    \State \textbf{return} 
\EndIf
\State $ \matC \leftarrow \texttt{valiant\_star}\mybraces{\matC}$
\State $X_{13} \leftarrow X_{13} + X_{12} \cdot X_{13}$
\State $ \matD \leftarrow \texttt{valiant\_star}\mybraces{\matD}$
\State $X_{24} \leftarrow X_{24} + X_{23} \cdot X_{34}$
\State $ \matE \leftarrow \texttt{valiant\_star}\mybraces{\matE}$
\State $X_{14} \leftarrow X_{14} + X_{12} \cdot X_{24}$
\State $X_{14} \leftarrow X_{14} + X_{13} \cdot X_{34}$
\State $ \matF \leftarrow \texttt{valiant\_star}\mybraces{\matF}$
\end{algorithmic}
\end{algorithm}

\subsection{Valiant's DP-Closure Algorithm}
Valiant's DP-Closure algorithm takes in matrix $X$ with only non-zero values corresponding to inputs. The root-vertices corresponding to the top-left and bottom-right submatrices are computed through recursive calls, exploiting the fact that these submatrices correspond to a smaller instantiation of the same problem. The remaining values in $X$ are then computed by calling \texttt{valiant\_star}. Pseudocode can be found in Algorithm~\ref{alg:closure}.

\begin{algorithm}[H]
\caption{Valiant's DP-Closure Algorithm (\texttt{valiant\_closure})} \label{alg:closure}
\begin{algorithmic}[1]
\State \textbf{Input} $X$
\State \textbf{Output} $X$ \Comment{DP-Closure of X is computed in-place}
\If{$\text{dim}(X) = 2 $} \Comment{Input is already computed}
    \State \textbf{return} 
\EndIf
\State $ \matA \leftarrow \texttt{valiant\_closure}\mybraces{\matA}$
\State $ \matB \leftarrow \texttt{valiant\_closure}\mybraces{\matB}$
\State $ X \leftarrow \texttt{valiant\_star}\mybraces{X}$

\end{algorithmic}
\end{algorithm}

\section{Detailed \io{} analysis for selected algorithms} \label{app:applications}

In this section, we apply the general results presented in Section~\ref{sec:main} on the \io{} complexity of DP algorithms following the structure outlined in Prototype Algorithm  $\mathcal{A}^*$ to obtain asymptotically tight bounds on the \io{} complexity of selected DP algorithms. 

\subsection{Matrix chain multiplication}\label{app:mcm}
\begin{algorithm}
\caption{Matrix Chain Multiplication}\label{alg:MCM}
\begin{algorithmic}[1]
\State \textbf{Input} $\mybraces{d_0, d_1, \dots, d_n}$
\State \textbf{Output} $S(1,n)$
\For{$i=1$ \textbf{to} $n$}
\State $S(i,i)\leftarrow 0$
\EndFor
\For{$l=2$ \textbf{to} $n$}
    \For{$i=1$  \textbf{to} $n-l+1$}
        \State $j\leftarrow i + l - 1$
        \State $S(i,j)\leftarrow \infty$
        \For{$k=i$  \textbf{to} $j - 1$}
            \State $q\leftarrow S(i, k) + S(k+1, j) + d_{i - 1}d_kd_j$
            \If{$q < S(i, j)$}
                \State $S(i,j) \leftarrow q$
            \EndIf
        \EndFor
    \EndFor
\EndFor
\end{algorithmic}
\end{algorithm}
\emph{Matrix chain multiplication} (MCM) is an optimization problem concerning the most efficient way to multiply a given sequence of $n$ matrices $A^1_{d_0\times d_1}\times A^2_{d_1\times d_2}\times \ldots \times A^n_{d_{n - 1}\times d_n}$, where $d_0, \ldots, d_n$ represent the dimensions of the matrices~\cite{clrs}. Pseudocode for the classical implementation of MCM ($\mathcal{A}_{MCM}$) is provided in Algorithm~\ref{alg:MCM}. While computing subproblem $S\mybraces{i,j}$, L-values correspond to the computations $S\mybraces{i,k}+S\mybraces{k+1,j}+d_{i - 1}d_kd_j$  for $1\leq i\leq j\leq n$ and $i\leq k < j$, from which the minimum value is selected. 

As suggested in \cite{cherng2005cache}, to map this onto the structure of Prototype Algorithm~\ref{alg:proto}, we can define the R-vertices to be thriples $\mybraces{a, b, c}$, where $a$ and $b$ are positive integers and $c$ is nonnegative. The \texttt{LEAST\_OPTIMAL\_VALUE} is $\infty$. The \texttt{COMBINE} ($\cdot$) operation is defined by $\mybraces{a, b, c} \cdot \mybraces{b, d, c'} = \mybraces{a, d, abd + c + c'}$, with all other cases evaluating to $\infty$. The \texttt{AGGREGATE} ($+$) operation is defined by $\mybraces{a, b, c} + \mybraces{a, b, c'} = \mybraces{a, b, \min \mybraces{c, c'}}$ and $\mybraces{a, b, c} + \infty = \mybraces{a, b, c}$, with all other cases evaluating to $\infty$. The $i$th input value is $\mybraces{d_{i - 1}, d_i, 0}$.

With this setup, MCM matches the structure of Prototype Algorithm~\ref{alg:proto}, with the caveat that the \texttt{COMBINE} and \texttt{AGGREGATE} operations are no longer unitary. However, this does not asymptotically affect the \io{} complexity as both these operations involve a constant number of unitary operations and the representation of $R$-vertices also takes a constant number of memory words. The following theorem follows:

\begin{theorem}\label{thm:iomcm}
    The number of \io{} operations executed by $\mathcal{A}_{MCM}$ given a sequence of $n$ matrices using a machine equipped with a cache memory of size $M < cn^2$, where $c$ is a sufficiently small positive constant, is bounded as: 
    \begin{equation*}
        IO_{G_{MCM}}\mybraces{n,m,B} \in \Theta\mybraces{\frac{n^3}{B\sqrt{M}}}.
        \end{equation*}
\end{theorem}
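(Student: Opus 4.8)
The plan is to obtain both directions from the general machinery already developed for the CDAG family $\mathcal{G}(n)$, exploiting the mapping of $\mathcal{A}_{MCM}$ onto Prototype Algorithm~\ref{alg:proto} described above. The first step is to verify that the CDAG $G_{MCM}$ associated with $\mathcal{A}_{MCM}$ belongs to $\mathcal{G}(n)$ up to constant factors. Under the chosen encoding, each subproblem $S\mybraces{i,j}$ is obtained by first forming the $j-i$ candidate values $S\mybraces{i,k}+S\mybraces{k+1,j}+d_{i-1}d_kd_j$ (the $L$-vertices) and then taking their minimum (the $R$-vertex $v_{i,j}$), which is exactly the subproblem dependence structure underlying $\mathcal{G}(n)$. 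The only discrepancy is that \texttt{COMBINE} and \texttt{AGGREGATE} are not unit-time operations and each $R$-vertex occupies a constant number of memory words; since both facts inflate the vertex count and the per-operation cost by at most a constant factor, they do not affect the asymptotics, and the bounds proved for $\mathcal{G}(n)$ carry over with different constants.

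For the lower bound I would invoke Theorem~\ref{thm:lwb}: since $\mathcal{A}_{MCM}$ as written evaluates each $S\mybraces{i,j}$ exactly once, the no-recomputation bound applies and yields
\begin{equation*}
IO_{G_{MCM}}\mybraces{n,M,B}\geq \myMax{\frac{n^3-n}{16\sqrt{M}}-\frac{n\mybraces{n+1}}{2}-3M,\; n}\frac{1}{B}.
\end{equation*}
The remaining work is purely arithmetic: for $M\leq cn^2$ with $c$ a sufficiently small positive constant, the leading term satisfies $\frac{n^3-n}{16\sqrt{M}}\geq \frac{n^2-1}{16\sqrt{c}}$, which for small $c$ exceeds the subtracted $\BO{n^2}$ contributions $\frac{n\mybraces{n+1}}{2}$ and $3M$. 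Hence the bracketed quantity is $\BOme{\frac{n^3}{\sqrt{M}}}$, and the whole right-hand side is $\BOme{\frac{n^3}{\sqrt{M}B}}$.

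For the matching upper bound I would appeal to Valiant's DP-Closure Algorithm of Section~\ref{sec:tightness} (detailed in Appendix~\ref{app:valiant}), instantiating \texttt{COMBINE}, \texttt{AGGREGATE}, and \texttt{LEAST\_OPTIMAL\_VALUE} with the $\mybraces{\min,+,\times}$-style operations specified above. Because $|R|=\frac{n\mybraces{n+1}}{2}=\BT{n^2}$, for every $M<cn^2<|R|$ this cache-oblivious algorithm executes $\BO{\frac{n^3}{\sqrt{M}B}}$ \io{} operations, which together with the lower bound gives $IO_{G_{MCM}}\mybraces{n,M,B}\in\BT{\frac{n^3}{\sqrt{M}B}}$.

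The main obstacle, I expect, lies not in the asymptotic bookkeeping but in justifying the reduction cleanly: one must confirm that the MCM semiring preserves the vertex-disjointness of the tree sub-CDAGs and the interaction structure on which Lemma~\ref{lem:coverproperty} and Theorem~\ref{thm:lwb} rely, and that restricting attention to schedules without recomputation is legitimate over the stated range of $M$. For $M\leq c_1 n$ the recomputation-tolerant Theorem~\ref{thm:lwbrc} could be substituted, but to cover the full range $M<cn^2$ it is the no-recomputation bound of Theorem~\ref{thm:lwb} that must be used, matched by the (likewise non-recomputing) Valiant closure upper bound.
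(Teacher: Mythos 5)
Your proposal is correct and follows essentially the same route as the paper: the paper likewise establishes the triple-encoding of $R$-values with \texttt{COMBINE}, \texttt{AGGREGATE}, and \texttt{LEAST\_OPTIMAL\_VALUE}, notes that the non-unitary operations and constant-word representations only perturb constants, and then lets the theorem ``follow'' from Theorem~\ref{thm:lwb} (lower bound, valid up to $M\leq cn^2$ precisely because $\mathcal{A}_{MCM}$ computes each value once) together with the Valiant DP-Closure upper bound of Section~\ref{sec:tightness}. Your explicit arithmetic and your closing remark---that Theorem~\ref{thm:lwbrc} alone could not cover the full range $M<cn^2$, so the no-recomputation bound is the one that must be invoked---simply spell out steps the paper leaves implicit, and both are accurate.
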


Other classic DP problems for problems such as \textit{boolean parenthesization} and \textit{finding the maximum/minimum value of an expression with arithmetic operations} have nearly identical structures and therefore lend themselves to the same analysis.

\subsection{Optimal polygon triangulation}
\begin{algorithm}
\caption{Optimal Polygon Triangulation} \label{alg:OPT}
\begin{algorithmic}[1]
\State \textbf{Input} $\{v_0, v_1, ..., v_n\}, w$
\State \textbf{Output} $S(1,n)$
\For{$i=1$ \textbf{to} $n$}
\State $S(i,i)\leftarrow 0$
\EndFor
\For{$l=2$ \textbf{to} $n$}
    \For{$i=1$  \textbf{to} $n-l+1$}
        \State $j\leftarrow i + l - 1$
        \State $S(i,j)\leftarrow \infty$
        \For{$k=i$  \textbf{to} $j - 1$}
            \State $q\leftarrow S(i, k) + S(k+1, j) + w(v_{i - 1}v_kv_j)$
            \If{$q < S(i, j)$}
                \State $S(i,j) \leftarrow q$
            \EndIf
        \EndFor
    \EndFor
\EndFor
\end{algorithmic}
\end{algorithm}
Given a convex polygon, the \emph{optimal polygon triangulation} (OPT) problem aims to find the cost of its \textit{optimal triangulation}. Specifically, one is given a convex polygon with vertices, $P = \{v_0, v_1, ..., v_n\}$ and a cost function $w: P \times P\times P \rightarrow \mathbb R$ specifying the cost of a single triangle (a common choice is perimeter). The goal is to divide $P$ into $n - 1$ disjoint triangles by drawing $n - 2$ chords between non-adjacent vertices (chords may not intersect except at their endpoints) while minimizing the total cost over the resulting triangles \cite{clrs}.

Pseudocode for OPT can be found in Algorithm~\ref{alg:OPT}. Subproblem $S\mybraces{i,j}$ represents the cost of the optimal triangulation of the polygon defined by vertices $\{v_{i - 1}, v_i, ..., v_j\}$. When computing subproblem $S\mybraces{i,j}$, L-vertices correspond to computations $S\mybraces{i,k}+S\mybraces{k+1,j}+w\mybraces{v_{i - 1}, v_k, v_j}$ for $1\leq i\leq j\leq n$ and $i\leq k < j$, from which the minimum value is selected.

To map OPT onto the Prototype Algorithm, one can use the same setup as Appendix~\ref{app:mcm}, modifying the \texttt{COMBINE} operation so that $\mybraces{a, b, c} \cdot \mybraces{b, d, c'} = \mybraces{w\mybraces{a, b, d} + c + c'}$. It follows that for $M < cn^2$, where $c$ is a sufficiently small positive constant:

\begin{equation*}
        IO_{G_{OPT}}\mybraces{n,m,B} \in \Theta\mybraces{\frac{n^3}{B\sqrt{M}}}.
\end{equation*}

\subsection{Construction of optimal binary search trees}
\begin{algorithm}
\caption{Construction of Optimal Binary Search Trees} \label{alg:BST}
\begin{algorithmic}[1]
\State \textbf{Input} $\{p_1, ..., p_{n - 1}\}, \{q_1, ..., q_{n}\}$
\State \textbf{Output} $S(1,n)$
\For{$i=1$ \textbf{to} $n$}
\State $S(i,i)\leftarrow q_i$
\State $C(i,i)\leftarrow q_i$
\EndFor
\For{$l=2$ \textbf{to} $n$}
    \For{$i=1$  \textbf{to} $n-l+1$}
        \State $j\leftarrow i + l - 1$
        \State $S(i,j)\leftarrow \infty$
        \State $C(i,j)\leftarrow C(i, j - 1) + p_{j - 1} + q_j$
        \For{$k=i$  \textbf{to} $j - 1$}
            \State $q\leftarrow S(i, k) + S(k+1, j) + C(i, j)$
            \If{$q < S(i, j)$}
                \State $S(i,j) \leftarrow q$
            \EndIf
        \EndFor
    \EndFor
\EndFor
\end{algorithmic}
\end{algorithm}
Given a sequence of $n - 1$ distinct keys $k_1, ..., k_{n - 1}$ in ascending order, the \textit{optimal binary search tree} (BST) problem involves building a binary search tree that minimizes the expected depth of search. To this end, one is also provided with probabilities $p_1, ..., p_{n - 1}$ and $q_1, ..., q_n$ where $p_i$ denotes the probability that key $k_i$ is searched for, and $q_i$ denotes the probability of a search for a non-existent key between $k_{i - 1}$ and $k_i$ \footnote{$q_1$ represents the probability of a search for a key less than $p_1$ and $q_n$ represents searches for keys more than $p_{n - 1}$}. Note that this implies that $\sum_{i = 1}^{n - 1} p_i + \sum_{i = 1}^{n}q_i = 1$. Pseudocode for BST can be found in Algorithm~\ref{alg:BST}

When constructing optimal binary search trees using dynamic programming, subproblem $S\mybraces{i,j}$  for $1\leq i\leq j\leq n$ represents the optimal cost of a binary search tree containing the keys $k_i, ..., k_{j - 1}$ \footnote{$S\mybraces{i,i}$ corresponds to having a tree with no keys, and has cost $q_i$}. When computing subproblem $S\mybraces{i,j}$, the L-vertices correspond to computations $S\mybraces{i,k}+S\mybraces{k+1,j}+ \sum_{l = i}^{j - 1} p_i + \sum_{l = i}^{j} q_i$ for $i\leq k < j$, from which the minimum value is selected.

To map BST onto Prototype Algorithm~\ref{alg:proto}, start by computing and writing $C\mybraces{i, j} = \sum_{l = i}^{j - 1} p_i + \sum_{l = i}^{j} q_i$ for $1 \le i \le j \le n$. To do so efficiently, one can utilize the following recursion: $C\mybraces{i, j} = C\mybraces{i, j - 1} + p_{j - 1} + q_j$ for $1 \le i < j \le n$. This allows each entry $C\mybraces{i, j}$ to be computed with a constant number of \io{} operations, and therefore completing this step incurs $\mathcal{O}(n^2)$ \io{} operations.

Now, define the R-vertices to be thriples $\mybraces{i, j, k}$, where $i$ and $j$ are positive integers and $k$ is nonnegative. The \texttt{LEAST\_OPTIMAL\_VALUE} is $\infty$. The \texttt{COMBINE} ($\cdot$) operation is defined by $\mybraces{i, j, k} \cdot \mybraces{j + 1, l, k'} = \mybraces{i, l, k + k' + C\mybraces{i, l}}$, with all other cases evaluating to $\infty$. The \texttt{AGGREGATE} ($+$) operation is defined by $\mybraces{i, j, k} + \mybraces{i, j, k'} = \mybraces{i, j, \min \mybraces{k, k'}}$, $\mybraces{i, j, k} + \infty = \mybraces{i, j, k}$, with all other cases evaluating to $\infty$. The $i$th input value is $\mybraces{i, i, C\mybraces{i, i}}$. With these modification, we conclude that:

\begin{theorem}
    The number of \io{} operations executed by $\mathcal{A}_{BST}$ when computing an optimal binary search tree given $n - 1$ unique keys on a machine equipped with a cache memory of size $M < cn^2$, where $c$ is a sufficiently small positive constant, is bounded as:
    \begin{equation*}
        IO_{G_{BST}}\mybraces{n,m,B} \in \Theta\mybraces{\frac{n^3}{B\sqrt{M}}}.
    \end{equation*}
\end{theorem}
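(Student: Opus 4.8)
The plan is to reduce the claim to the general machinery of Section~\ref{sec:main} by exhibiting the CDAG $G_{BST}$ of Algorithm~\ref{alg:BST} as (essentially) a member of the family $\mathcal{G}\mybraces{n}$, so that the lower bounds of Theorem~\ref{thm:lwb} and Theorem~\ref{thm:lwbrc} and the upper bound furnished by Valiant's DP-Closure algorithm (Section~\ref{sec:tightness}, Appendix~\ref{app:valiant}) all transfer. The encoding of subproblems as triples $\mybraces{i,j,k}$ together with the \texttt{COMBINE}, \texttt{AGGREGATE}, and \texttt{LEAST\_OPTIMAL\_VALUE} operations given above already places the recursion of Algorithm~\ref{alg:BST} into the shape of Prototype Algorithm~\ref{alg:proto}: each $R$-vertex $v_{i,j}$ is the accumulator for $S\mybraces{i,j}$, and its $j-i$ leaves combine the pair $\mybraces{v_{i,k}, v_{k+1,j}}$ exactly as required. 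The single structural deviation from the Matrix Chain Multiplication reduction of Appendix~\ref{app:mcm} is the auxiliary cost table $C\mybraces{i,j}=\sum_{l=i}^{j-1}p_l+\sum_{l=i}^{j}q_l$, which enters every \texttt{COMBINE} producing a leaf of $v_{i,j}$.

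For the lower bound I would first observe that the $C\mybraces{i,j}$ entries appear only as extra predecessors of $L$-vertices; they add edges into the CDAG but leave intact the tree-of-trees structure of $R$-vertices and $L$-vertices that underlies all of $\mathcal{G}\mybraces{n}$. Since adding predecessors can only increase the number of values that must reside in or be transferred to the cache, the $W$-cover argument of Lemma~\ref{lem:keycoverpropertynr} and the segment-counting of Theorem~\ref{thm:lwb} apply verbatim to $G_{BST}$, yielding $IO_{G_{BST}}\mybraces{n,M,B}\geq\BOme{\frac{n^3}{\sqrt{M}B}}$ for $M\leq cn^2$; when $M=\BO{n}$ the stronger Theorem~\ref{thm:lwbrc} gives the same bound even with recomputation. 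As in Appendix~\ref{app:mcm}, the facts that \texttt{COMBINE}/\texttt{AGGREGATE} are non-unit operations and that each $R$-vertex occupies a constant number of words change only the constants, since each step still costs a constant number of elementary operations.

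For the upper bound I would precompute the whole table $C$ via the recurrence $C\mybraces{i,j}=C\mybraces{i,j-1}+p_{j-1}+q_j$, touching each of the $\BT{n^2}$ entries a constant number of times for a cost of $\BO{n^2/B}$ \io{} operations, and then run Valiant's DP-Closure algorithm instantiated with the BST \texttt{COMBINE}/\texttt{AGGREGATE}. By the analysis recalled in Section~\ref{sec:tightness}, for $M<|R|=\BT{n^2}$ the closure incurs $\BO{\frac{n^3}{\sqrt{M}B}}$ \io{} operations. Since $M\leq cn^2$ forces $\sqrt{M}\leq n$ and hence $n^2/B\leq n^3/\mybraces{\sqrt{M}B}$, the precomputation is absorbed and the total upper bound is $\BO{\frac{n^3}{\sqrt{M}B}}$, matching the lower bound and giving $IO_{G_{BST}}\in\BT{\frac{n^3}{\sqrt{M}B}}$.

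The step I expect to be the main obstacle is the careful accounting of the auxiliary table $C$: one must confirm that the $C\mybraces{i,j}$ values needed inside each \texttt{COMBINE} do not force extra cache traffic beyond the $\BO{n^2/B}$ precomputation, i.e.\ that within a block of the \dac{} closure the relevant $C$ entries are brought in alongside the corresponding $R$-vertices and reused across all $j-i$ leaves of a subproblem rather than reloaded per leaf. Once this is verified --- it follows because $C\mybraces{i,j}$ is shared by all leaves of $v_{i,j}$ and the blocked schedule already pays to load those roots --- the remaining work is the routine constant-factor bookkeeping inherited from the Matrix Chain Multiplication reduction.
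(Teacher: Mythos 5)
Your proposal is correct and follows essentially the same route as the paper: map Algorithm~\ref{alg:BST} onto Prototype Algorithm~\ref{alg:proto} via the triple encoding, precompute the table $C$ with the recurrence $C\mybraces{i,j}=C\mybraces{i,j-1}+p_{j-1}+q_j$ at cost $\BO{n^2/B}$, and then invoke the general lower bounds (Theorems~\ref{thm:lwb} and~\ref{thm:lwbrc}) and the Valiant DP-Closure upper bound of Section~\ref{sec:tightness}. Your explicit accounting of the $C$-table accesses (extra $L$-vertex predecessors for the lower bound; blockwise co-loading of $C$ entries with the accumulator roots for the upper bound) is a point the paper leaves implicit, but it refines rather than changes the argument.
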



\begin{thebibliography}{10}

\bibitem{Aggarwal:1988:ICS:48529.48535}
Alok Aggarwal and S.~Vitter, Jeffrey.
\newblock The input/output complexity of sorting and related problems.
\newblock {\em Commun. ACM}, 31(9):1116--1127, September 1988.
\newblock URL: \url{http://doi.acm.org/10.1145/48529.48535}, \href {https://doi.org/10.1145/48529.48535} {\path{doi:10.1145/48529.48535}}.

\bibitem{ballard2010communication}
G.~Ballard, J.~Demmel, O.~Holtz, and O.~Schwartz.
\newblock Communication-optimal parallel and sequential {C}holesky decomposition.
\newblock {\em SIAM Journal on Scientific Computing}, 32(6):3495--3523, 2010.

\bibitem{ballard2012graph}
G.~Ballard, J.~Demmel, O.~Holtz, and O.~Schwartz.
\newblock Graph expansion and communication costs of fast matrix multiplication.
\newblock {\em JACM}, 59(6):32, 2012.

\bibitem{bellman1954theory}
Richard Bellman.
\newblock The theory of dynamic programming.
\newblock {\em Bulletin of the American Mathematical Society}, 60(6):503--515, 1954.

\bibitem{bilardi2001characterization}
G.~Bilardi and E.~Peserico.
\newblock A characterization of temporal locality and its portability across memory hierarchies.
\newblock In {\em Automata, Languages and Programming}, pages 128--139. Springer, 2001.

\bibitem{bilardi2000space}
G.~Bilardi, A.~Pietracaprina, and P.~D'Alberto.
\newblock On the space and access complexity of computation {DAG}s.
\newblock In {\em Graph-Theoretic Concepts in Computer Science}, pages 47--58. Springer, 2000.

\bibitem{bilardi1999processor}
G.~Bilardi and F.~Preparata.
\newblock Processor-{T}ime trade offs under bounded speed message propagation. {P}art 2: {L}ower {B}ounds.
\newblock {\em Theory of Computing Systems}, 32(5):531--559, 1999.

\bibitem{bilardi1995horizons}
G.~Bilardi and F.~P. Preparata.
\newblock Horizons of parallel computation.
\newblock {\em Journal of Parallel and Distributed Computing}, 27(2):172--182, 1995.

\bibitem{bilardi2017complexity}
Gianfranco Bilardi and Lorenzo De~Stefani.
\newblock The i/o complexity of strassen’s matrix multiplication with recomputation.
\newblock In {\em Workshop on Algorithms and Data Structures}, pages 181--192. Springer, 2017.

\bibitem{bdstksoda19}
Gianfranco Bilardi and Lorenzo De~Stefani.
\newblock The i/o complexity of toom-cook integer multiplication.
\newblock In {\em Proceedings of the Thirtieth Annual ACM-SIAM Symposium on Discrete Algorithms}, SODA '19, page 2034–2052, USA, 2019. Society for Industrial and Applied Mathematics.

\bibitem{bilardi_et_al:LIPIcs.FSTTCS.2022.7}
Gianfranco Bilardi and Lorenzo De~Stefani.
\newblock {The DAG Visit Approach for Pebbling and I/O Lower Bounds}.
\newblock In Anuj Dawar and Venkatesan Guruswami, editors, {\em 42nd IARCS Annual Conference on Foundations of Software Technology and Theoretical Computer Science (FSTTCS 2022)}, volume 250 of {\em Leibniz International Proceedings in Informatics (LIPIcs)}, pages 7:1--7:23, Dagstuhl, Germany, 2022. Schloss Dagstuhl -- Leibniz-Zentrum f{\"u}r Informatik.
\newblock URL: \url{https://drops.dagstuhl.de/entities/document/10.4230/LIPIcs.FSTTCS.2022.7}, \href {https://doi.org/10.4230/LIPIcs.FSTTCS.2022.7} {\path{doi:10.4230/LIPIcs.FSTTCS.2022.7}}.

\bibitem{blleloch2018improved}
Guy~E Blleloch and Yan Gu.
\newblock Improved parallel cache-oblivious algorithms for dynamic programming and linear algebra.
\newblock {\em arXiv preprint arXiv:1809.09330}, 2018.

\bibitem{cherng2005cache}
Cary Cherng and Richard~E. Ladner.
\newblock Cache efficient simple dynamic programming.
\newblock {\em Discrete Mathematics \& Theoretical Computer Science}, pages 49--58, 2005.
\newblock URL: \url{https://api.semanticscholar.org/CorpusID:16167289}.

\bibitem{chowdhury2017provably}
Rezaul Chowdhury, Pramod Ganapathi, Yuan Tang, and Jesmin~Jahan Tithi.
\newblock Provably efficient scheduling of cache-oblivious wavefront algorithms.
\newblock In {\em Proceedings of the 29th ACM Symposium on Parallelism in Algorithms and Architectures}, pages 339--350, 2017.

\bibitem{chowdhury2017autogen}
Rezaul Chowdhury, Pramod Ganapathi, Stephen Tschudi, Jesmin~Jahan Tithi, Charles Bachmeier, Charles~E Leiserson, Armando Solar-Lezama, Bradley~C Kuszmaul, and Yuan Tang.
\newblock Autogen: Automatic discovery of efficient recursive divide-8-conquer algorithms for solving dynamic programming problems.
\newblock {\em ACM Transactions on Parallel Computing (TOPC)}, 4(1):1--30, 2017.

\bibitem{chowdhury2006cache}
Rezaul~Alam Chowdhury and Vijaya Ramachandran.
\newblock Cache-oblivious dynamic programming.
\newblock In {\em Proceedings of the seventeenth annual ACM-SIAM symposium on Discrete algorithm}, pages 591--600. Citeseer, 2006.

\bibitem{chowdhury2007cache}
Rezaul~Alam Chowdhury and Vijaya Ramachandran.
\newblock The cache-oblivious gaussian elimination paradigm: theoretical framework, parallelization and experimental evaluation.
\newblock In {\em Proceedings of the nineteenth annual ACM Symposium on Parallel Algorithms and Architectures}, pages 71--80, 2007.

\bibitem{chowdhury2008cachemulti}
Rezaul~Alam Chowdhury and Vijaya Ramachandran.
\newblock Cache-efficient dynamic programming algorithms for multicores.
\newblock In {\em Proceedings of the twentieth annual symposium on Parallelism in algorithms and architectures}, pages 207--216, 2008.

\bibitem{chowdhury2008cache}
Rezaul~Alan Chowdhury, Hai-Son Le, and Vijaya Ramachandran.
\newblock Cache-oblivious dynamic programming for bioinformatics.
\newblock {\em IEEE/ACM Transactions on Computational Biology and Bioinformatics}, 7(3):495--510, 2008.

\bibitem{cocke1969programming}
John Cocke.
\newblock {\em Programming languages and their compilers: {P}reliminary notes}.
\newblock New York University, 1969.

\bibitem{clrs}
Thomas~H. Cormen, Charles~E. Leiserson, Ronald~L. Rivest, and Clifford Stein.
\newblock {\em Introduction to Algorithms, Third Edition}.
\newblock The MIT Press, 3rd edition, 2009.

\bibitem{demaine2017fine}
Erik~D Demaine, Andrea Lincoln, Quanquan~C Liu, Jayson Lynch, and Virginia~Vassilevska Williams.
\newblock Fine-grained i/o complexity via reductions: New lower bounds, faster algorithms, and a time hierarchy.
\newblock {\em arXiv preprint arXiv:1711.07960}, 2017.

\bibitem{ding2024parallel}
Xiangyun Ding, Yan Gu, and Yihan Sun.
\newblock Parallel and (nearly) work-efficient dynamic programming.
\newblock {\em arXiv preprint arXiv:2404.16314}, 2024.

\bibitem{dunlop2010reducing}
Aaron Dunlop, Nathan Bodenstab, and Brian Roark.
\newblock Reducing the grammar constant: an analysis of {CYK} parsing efficiency.
\newblock Technical report, Technical report CSLU-2010-02, OHSU, 2010.

\bibitem{farivar2012algorithm}
Reza Farivar, Harshit Kharbanda, Shivaram Venkataraman, and Roy~H Campbell.
\newblock An algorithm for fast edit distance computation on gpus.
\newblock In {\em 2012 Innovative Parallel Computing (InPar)}, pages 1--9. IEEE, 2012.

\bibitem{frigo1999cache}
M.~Frigo, C.~E. Leiserson, H.~Prokop, and S.~Ramachandran.
\newblock Cache-oblivious algorithms.
\newblock In {\em Proc. FOCS}, pages 285--297. IEEE, 1999.

\bibitem{galil1992dynamic}
Zvi Galil and Kunsoo Park.
\newblock Dynamic programming with convexity, concavity and sparsity.
\newblock {\em Theoretical Computer Science}, 92(1):49--76, 1992.

\bibitem{jia1981complexity}
J.~Hong and H.~Kung.
\newblock {I/O} complexity: The red-blue pebble game.
\newblock In {\em Proc. ACM STOC}, pages 326--333. ACM, 1981.

\bibitem{hu1981computation}
TC~Hu and M~Shing.
\newblock Computation of matrix chain products, part {I}, part {II}.
\newblock {\em Tech. Rep.}, 1981.

\bibitem{itzhaky2016deriving}
Shachar Itzhaky, Rohit Singh, Armando Solar-Lezama, Kuat Yessenov, Yongquan Lu, Charles Leiserson, and Rezaul Chowdhury.
\newblock Deriving divide-and-conquer dynamic programming algorithms using solver-aided transformations.
\newblock {\em ACM SIGPLAN Notices}, 51(10):145--164, 2016.

\bibitem{javanmard2019toward}
Mohammad~Mahdi Javanmard, Pramod Ganapathi, Rathish Das, Zafar Ahmad, Stephen Tschudi, and Rezaul Chowdhury.
\newblock Toward efficient architecture-independent algorithms for dynamic programs.
\newblock In {\em High Performance Computing: 34th International Conference, ISC High Performance 2019, Frankfurt/Main, Germany, June 16--20, 2019, Proceedings 34}, pages 143--164. Springer, 2019.

\bibitem{kasami1966efficient}
Tadao Kasami.
\newblock An efficient recognition and syntax-analysis algorithm for context-free languages.
\newblock {\em Coordinated Science Laboratory Report no. R-257}, 1966.

\bibitem{knuthbst}
D.E. Knuth.
\newblock Optimum binary search trees.
\newblock {\em Acta Informatica 1}, 1971.

\bibitem{KochLMRRS97}
R.~R. Koch, F.~T. Leighton, B.~M. Maggs, S.~B. Rao, A.~L. Rosenberg, and E.~J. Schwabe.
\newblock Work-preserving emulations of fixed-connection networks.
\newblock {\em JACM}, 44(1):104--147, 1997.

\bibitem{lange2009cnf}
Martin Lange and Hans Lei{\ss}.
\newblock To {CNF} or not to {CNF}? an efficient yet presentable version of the {CYK} algorithm.
\newblock {\em Informatica Didactica}, 8(2009):1--21, 2009.

\bibitem{park2004optimizing}
J-S Park, Michael Penner, and Viktor~K Prasanna.
\newblock Optimizing graph algorithms for improved cache performance.
\newblock {\em IEEE Transactions on parallel and distributed systems}, 15(9):769--782, 2004.

\bibitem{patterson2005getting}
C.~A. Patterson, M.~Snir, and S.~L. Graham.
\newblock {\em Getting Up to Speed:: The Future of Supercomputing}.
\newblock National Academies Press, 2005.

\bibitem{ramachandran2007cache}
Vijaya Ramachandran.
\newblock Cache-oblivious computation: Algorithms and experimental evaluation.
\newblock In {\em 2007 International Conference on Computing: Theory and Applications (ICCTA'07)}, pages 20--26. IEEE, 2007.

\bibitem{savage1995extending}
J.~E. Savage.
\newblock Extending the {H}ong-{K}ung model to memory hierarchies.
\newblock In {\em Computing and Combinatorics}, pages 270--281. Springer, 1995.

\bibitem{10.5555/522806}
John~E. Savage.
\newblock {\em Models of Computation: Exploring the Power of Computing}.
\newblock Addison-Wesley Longman Publishing Co., Inc., USA, 1st edition, 1997.

\bibitem{scott2015matrix}
J.~Scott, O.~Holtz, and O.~Schwartz.
\newblock Matrix multiplication {I/O} complexity by {P}ath {R}outing.
\newblock In {\em Proc. ACM SPAA}, pages 35--45, 2015.

\bibitem{sipser1996introduction}
Michael Sipser.
\newblock Introduction to the theory of computation.
\newblock {\em ACM Sigact News}, 27(1):27--29, 1996.

\bibitem{song2008better}
Xinying Song, Shilin Ding, and Chin-Yew Lin.
\newblock Better binarization for the {CKY} parsing.
\newblock In {\em Proceedings of the 2008 Conference on Empirical Methods in Natural Language Processing}, pages 167--176, 2008.

\bibitem{tang2011easypdp}
Shanjiang Tang, Ce~Yu, Jizhou Sun, Bu-Sung Lee, Tao Zhang, Zhen Xu, and Huabei Wu.
\newblock Easypdp: An efficient parallel dynamic programming runtime system for computational biology.
\newblock {\em IEEE Transactions on Parallel and Distributed Systems}, 23(5):862--872, 2011.

\bibitem{tang2015cache}
Yuan Tang, Ronghui You, Haibin Kan, Jesmin~Jahan Tithi, Pramod Ganapathi, and Rezaul~A Chowdhury.
\newblock Cache-oblivious wavefront: improving parallelism of recursive dynamic programming algorithms without losing cache-efficiency.
\newblock In {\em Proceedings of the 20th ACM SIGPLAN Symposium on Principles and Practice of Parallel Programming}, pages 205--214, 2015.

\bibitem{valiant1974general}
Leslie~G. Valiant.
\newblock General context-free recognition in less than cubic time.
\newblock {\em J. Comput. Syst. Sci.}, 10:308--315, 1975.
\newblock URL: \url{https://api.semanticscholar.org/CorpusID:8051445}.

\bibitem{yao1980efficient}
F~Frances Yao.
\newblock Efficient dynamic programming using quadrangle inequalities.
\newblock In {\em Proceedings of the twelfth annual ACM symposium on Theory of computing}, pages 429--435, 1980.

\bibitem{younger1967recognition}
Daniel~H Younger.
\newblock Recognition and parsing of context-free languages in time n3.
\newblock {\em Information and control}, 10(2):189--208, 1967.

\end{thebibliography}
\end{document}